\def\BibTeX{{\rm B\kern-.05em{\sc i\kern-.025em b}\kern-.08em
    T\kern-.1667em\lower.7ex\hbox{E}\kern-.125emX}}
\newlist{circlenum}{enumerate}{1}
\setlist[circlenum]{
  label=\raisebox{-0.3ex}{\textcircled{\scriptsize\arabic*}}, 
  labelwidth=1.2em,
  leftmargin=1.8em,
  itemsep=0pt,
  parsep=2pt
}
\newtheorem{thm}{Theorem}
\newtheorem{Definition}{Definition}
\newcommand{\setfootnotemark}{%
  \refstepcounter{footnote}%
\footnotemark[\value{footnote}]}
\begin{document}
\title{Technical Report:
Toward Patch Robustness Certification and Detection for Deep Learning Systems 
Beyond Consistent Samples}

 \author{
Qilin Zhou, Zhengyuan Wei, Haipeng Wang, Zhuo Wang, and W.K. Chan
    \thanks{© 20XX IEEE.  Personal use of this material is permitted.  Permission from IEEE must be obtained for all other uses, in any current or future media, including reprinting/republishing this material for advertising or promotional purposes, creating new collective works, for resale or redistribution to servers or lists, or reuse of any copyrighted component of this work in other works.}
    }


\maketitle

\begin{abstract}
Patch robustness certification is an emerging kind of provable defense technique against adversarial patch attacks for deep learning systems.
Certified detection ensures the detection of all patched harmful versions of certified samples, which mitigates the failures of empirical defense techniques that could (easily) be compromised.
However, existing certified detection methods are ineffective in certifying samples that are misclassified or whose mutants are inconsistently predicted to different labels.
This paper proposes HiCert, a novel masking-based certified detection technique.  
By focusing on the problem of mutants predicted with a label different from the true label with our formal analysis, 
HiCert formulates a novel formal relation between harmful samples generated by identified loopholes and their benign counterparts. 
By checking the bound of the maximum confidence among these potentially harmful (i.e., inconsistent) mutants of each benign sample, HiCert ensures that each harmful sample either has the minimum confidence among mutants that are predicted the same as the harmful sample itself below this bound, 
or
has at least one mutant predicted with a label different from the harmful sample itself, 
formulated after two novel insights.
As such, HiCert systematically certifies those inconsistent samples and consistent samples to a large extent.
To our knowledge, HiCert is the \emph{first} work capable of providing such a comprehensive patch robustness certification for certified detection.
Our experiments show the high effectiveness of HiCert with a new state-of-the-art performance:
It certifies significantly more benign samples, including those inconsistent and consistent, and achieves significantly higher accuracy on those samples without warnings and a significantly lower false silent ratio.
Moreover, on actual patch attacks, its defense success ratio is significantly higher than its peers.
\end{abstract}

\begin{IEEEkeywords}
Certification, Verification, Detection, Deep Learning Model, Patch Robustness, Worst-Case Analysis, Deterministic Guarantee
\end{IEEEkeywords}
\section*{Nomenclature}
\begin{mylist}

    \item $ x $: a benign sample
    \item $ x' $: $x$'s harmful sample 
    \item $ \hat{x} $: an arbitrary sample

    \item $f$: a classification model
    \item $v$: a certification function
    \item $w$: a warning function
    \item $\textsc{p}$: a patch region
    \item $\mathbb{P}$: a patch region set
    \item $\textsc{m}$: a mask
    \item $\hat{x}_\textsc{m}$: a mutant of $\hat{x}$ for $\textsc{m}$
    \item $\mathbb{M}_\mathbb{P}$: a covering mask set for $\mathbb{P}$
    \item $\textsc{m}_\textsc{p}$: a mask covering the patch region $\textsc{p}$
    \item $\mathbb{A}_\mathbb{P}(x)$: an attack constraint set for $x$
    \item $D$: a certified detection defender
    \item  \text{$ y_0 $: the true label of a sample}

\end{mylist}

\section{Introduction}
\label{sec:introduction}

\IEEEPARstart{R}{eliability} of safety-critical deep learning (DL) systems, such as autonomous vehicles and robots, is threatened by adversarial attacks 
\cite{hussain2024evaluating, Xu2023ASQ, AlMaliki2023Toward,zhang2024uniada,Huang2024FocusShifting,Qi2022Detection},
particularly those that are physically realizable \cite{hussain2024evaluating} (see Fig.~\ref{fig:moti-attack}).
A major stereotype is patch adversarial attacks \cite{brown2017adversarial, eykholt2018robust,liu2020bias,wei2022adversarial, tao2023hardlabel, wei2023simultaneously}, which is a threat model for a deep learning (DL) system that is tricked into misclassifying an image sample by adding additional content (called a patch) to the sample on an arbitrary region (called a patch region) 
to produce a label differing from the ground truth \cite{kurakin2018adversarial, eykholt2018robust, xiao2018generating} (called harmful), consequently, producing an adversarial example.
Detecting these patched samples is desirable.
Yet, empirical detection and defense techniques \cite{chen2023jujutsu,naseer2019local,hayes2018visible,hussain2024evaluating} often fail against patch attacks unknown to them or even the known ones if their defense strategies are exposed to attackers \cite{chiang2020certified}.

\begin{figure}[b]
\centering
\includegraphics[width=\linewidth]{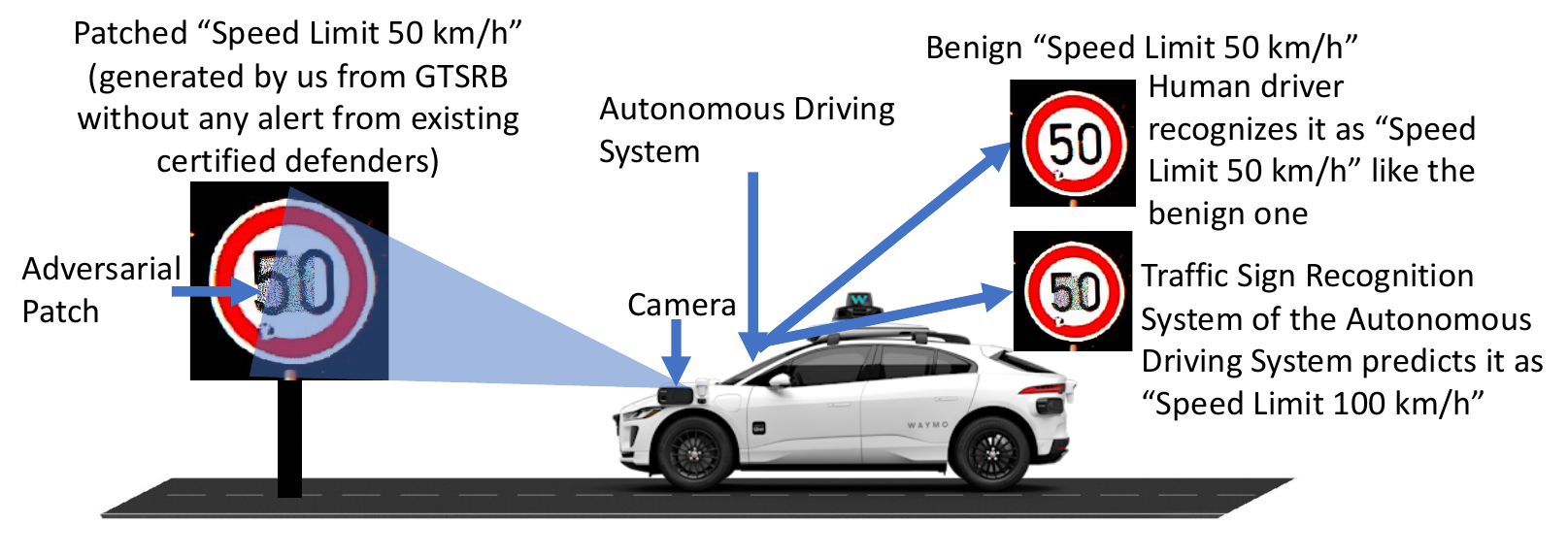}
\caption{One possible patch attack scenario targeting traffic sign recognition systems \cite{hussain2024evaluating}, further threatening the reliability of autonomous driving systems.
}
 \label{fig:moti-attack}
\end{figure} 

Certified detection for patch adversarial attacks \cite{li2022vip, patchcensor,mccoyd2020minority,han2021scalecert,xiang2021patchguard++,zhou2024crosscert} can significantly enhance the security of these systems and is emerging. 
{Its} goal 
is to formulate a provable framework
to cover as many benign samples of a DL system as possible \emph{with the \textbf{deterministic} guarantee of detecting \textbf{all} harmful patched versions of the covered benign samples (called certified samples)}, in the absence of the identity of these benign samples during the detection of the presence of an adversarial patch up to a given size.
Certification provides a formal detection property on these certified samples with \emph{all} their harmful patched versions, unachievable by pure empirical techniques.

To our knowledge, almost all effective certified detection defenders against patch adversarial attacks are masking-based \cite{patchcensor,li2022vip,xiang2021patchguard++,mccoyd2020minority, han2021scalecert}.
As harmful patched versions are considered dangerous (e.g., for safety-critical systems \cite{patchcensor}), 
they are specifically designed to detect all {harmful} patched versions of a certified benign sample meeting their inferable criteria while 
keeping the certified sample itself undetected.
To use the output of defender, if a warned sample is deemed potentially harmful, the system could be switched to a fallback strategy for handling with care, and an unwarned sample could be used as usual by safety-critical downstream tasks \cite{patchcensor}, as depicted in Fig.~\ref{fig:application}.
%
For example, inspections in many critical places, like checkpoints, can enhance the processing capabilities by applying DL models for facial recognition \cite{The_Standard_2024_Facial_recognition} or for item identification in the inspection systems.
 If the results without warning are reliable enough (i.e., lower the risk), for example, a made-up criminal, like patching an adversarial sticker on the face \cite{wei2022adversarial, wei2023simultaneously}, is unable to make oneself recognized as a lawful citizen without warning,
 or a graffitied six-gun is unable to make it recognized as a tin-opener without warning\footnote{Indeed, we did find a six-gun ImageNet sample indexed as \texttt{n04086273/ILSVRC2012\_val\_00000667} with such a threat on peer defenders during our experiments.},
then the checkpoint can significantly reduce the scrutiny on this group of people/items.
The 
systems in Fig.~\ref{fig:moti-attack} can also be equipped with the detection framework in Fig.~\ref{fig:application} to improve the reliability of autonomous driving systems.
Improving the autonomy of the downstream process in Fig.~\ref{fig:application} requires improving 
the overall sample quality in the silent path.

\begin{figure}[t]
\centering
\includegraphics[width=0.8\linewidth]{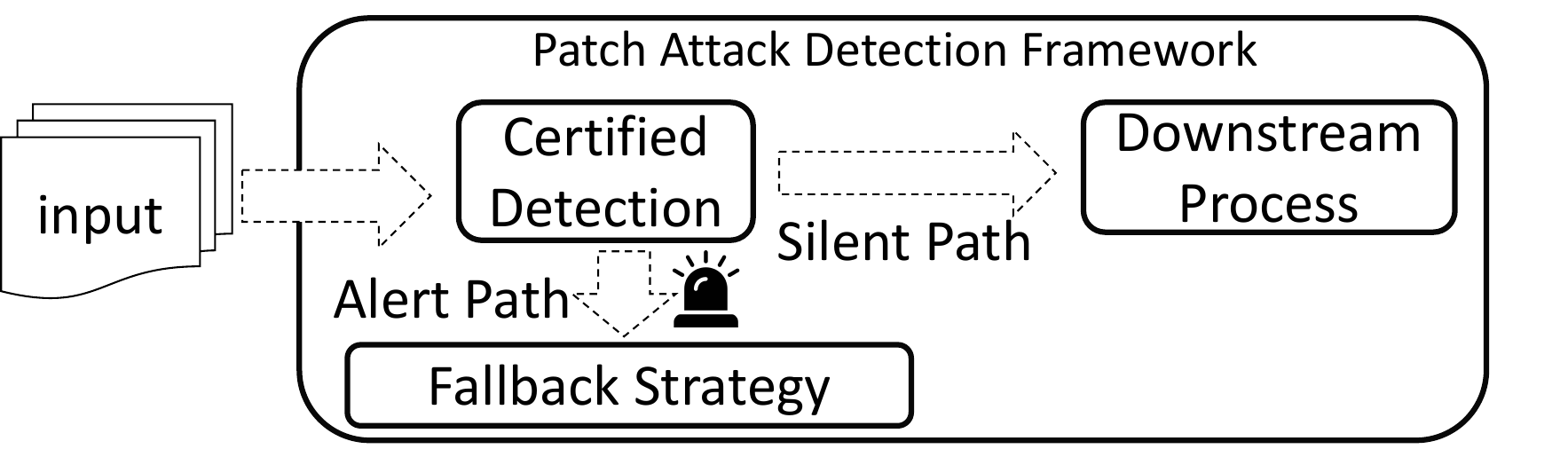}
\caption{
A patch attack detection framework with certified detection.
}\label{fig:application}

\end{figure}

If they have to detect a correctly predicted benign sample as warned, 
they 
cannot
provide any detection guarantee covering all harmful patched versions of this benign sample; thereby, a harmful patched version of the sample may slip through the detection framework to reach downstream operations, which is dangerous (e.g., undermining the ultimate purpose of using such a detection strategy to support fully autonomous safety-critical systems).%
\footnote{Ensuring system safety is typically prioritized over maximizing utility in safety-critical scenarios \cite{leveson2016engineering} (e.g., interruptive maintenance in nuclear power plants \cite{cowing2004dynamic} or high false alerts in earthquake detection systems \cite{minson2019limits}).
}
Furthermore, to our knowledge, effective certification of benign samples with incorrect prediction labels has been overlooked by existing works.
For example, an adversarial example may keep the incorrect prediction label and make the situation even worse (e.g., a cliff sign originally misclassified as a no-limit sign is originally warned, but it may be patched to make the warning silent).
%
This allows the harmful patched sample to bypass the detection guard provided by existing frameworks, 
preventing them from guaranteeing the detection of all harmful versions of the benign sample
(either by excluding such adversarial example counterparts of benign samples from the targets of warning to provide the guarantee, as represented by PatchCensor \cite{patchcensor} and ViP \cite{li2022vip}, 
or by including them as the targets of warning but cannot provide the guarantee, as represented by PG++ \cite{xiang2021patchguard++} and earlier techniques \cite{mccoyd2020minority,han2021scalecert}).
A vast majority of samples were successfully turned into harmful samples and bypassed their ``apparently stringent'' detection guards in our experiment (Section~\ref{sec:attack})
if a defender was unable to certify them. At the same time, state-of-the-art certified detection defenders fail to certify one out of every four ImageNet samples (see Table~\ref{tab:main_eva_results}).
Ideally, all harmful patched versions of benign samples (warned or unwarned, correctly or incorrectly predicted) should be detected by certified detection, even if it reduces the number of benign samples usable for downstream tasks in safety-critical systems for a higher safety-critical standard.%
\footnote{In fact, this is in line with the philosophy of certified accuracy of all certified detection techniques lower than the standard clean accuracy by excluding relatively less reliable benign samples from certification.}

\begin{figure}[tb]
\centering
\includegraphics[width=\linewidth]{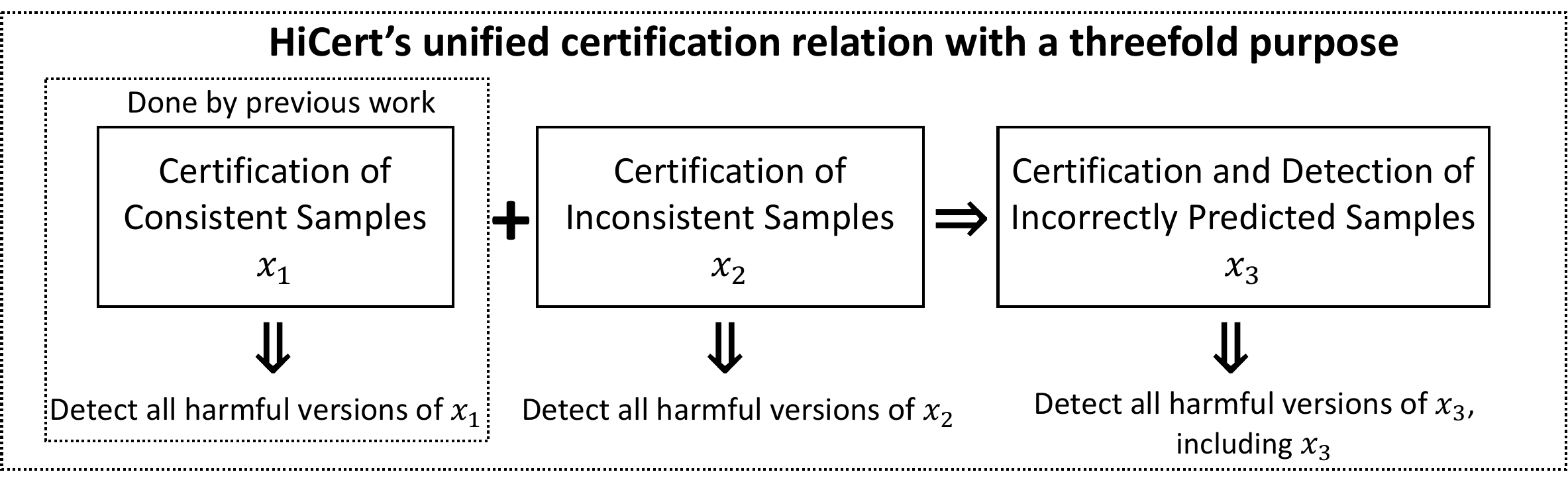}
\caption{HiCert and its three purposes achieved by a unified relation. 
The unified relation is presented as Thm. \ref{thm:Inconsistent-Max-Min}.
}
\label{fig:overall-effect}
\end{figure} 

We first describe some terminology to ease our introduction of this work.
Masking some region of a sample creates a (masked) mutant of the sample.
A mutant is called consistent if a DL model assigns the true label to it; otherwise called inconsistent.
A sample is called consistent if all of its mutants are consistent, otherwise called inconsistent.

In this paper, we propose a novel detection technique HiCert, the effect of which is depicted in Fig. \ref{fig:overall-effect}.
With respect to a given base DL system subject to certified detection's protection,
HiCert is the \emph{first} work to certify consistent and inconsistent samples homogeneously regardless of the correctness of the prediction label and the label distribution of mutants, 
which leads to the simultaneous certification and detection of incorrectly predicted benign samples.
All certified incorrectly predicted samples and all of their patched versions are systematically ruled out from the silent path depicted in Fig. \ref{fig:application}.
HiCert is a masking-based certified detection defender based on a novel checking strategy after a thorough analysis
---
It certifies a sample if either the prediction label of each mutant of the sample is the same as the true label or if all those mutants predicted differently have low confidence. 
It warns a sample if any mutant of the sample is not predicted with the prediction label of the sample or is low in confidence.
Like existing work in the field, 
we prove HiCert's soundness by theorem formulation (Thm. \ref{thm:Inconsistent-Max-Min}).


Our evaluation demonstrates the high effectiveness of HiCert.
For instance, HiCert is able to significantly reduce the gap between clean accuracy and certified accuracy from 9.1\% created by the previous SOTA technique ViP to 0.1\% on ImageNet with the patch region in size of 2\% 
(see Table~\ref{tab:main_eva_results}).
Our detailed analysis on ImageNet further shows that 
HiCert is significantly more effective than the peer techniques in certifying correctly predicted and incorrectly predicted samples in various metrics (see Table~\ref{tab:cases_imagenet_2}).
We perform a real patch adversarial attack to show the significantly higher effectiveness achieved by HiCert empirically compared to its peers in terms of defense success ratios (see Fig.~\ref{fig:real_attack}).
HiCert also shows significantly stronger certification performance when the patch size increases (see Fig.~\ref{fig:vary_patch_size}).

The contribution of this work is threefold.
(1) It proposes a novel patch attack defender, HiCert, which offers comprehensive certification coverage on sample variety in terms of prediction correctness of certified samples as well as the label distribution of mutants. 
(2) It formally proves HiCert's soundness and demonstrates its feasibility through implementation. 
(3) It presents an evaluation showcasing HiCert's high effectiveness and scalability, both empirically and theoretically.

The rest of this paper is organized as follows. \S\ref{sec:Preliminaries} revisits the preliminaries. \S\ref{sec:motiva} discusses the motivation and challenges. \S\ref{sec:proposal} to \S\ref{sec:eva} present HiCert and its evaluation. 
\S\ref{sec:related_work} reviews related work, and \S\ref{sec:Conclusion} concludes the paper.

\section{Preliminaries}\label{sec:Preliminaries}
This section revisits the preliminaries \cite{li2022vip, xiang2021patchguard++}.
%
%
%
%
%
We use \textsc{J} to denote an all-ones matrix 
and $+$, $-$, and $\odot$ to denote element-wise addition, subtraction, and multiplication operators. 



\subsection{Classification Model and Patch Attack}

Given an image sample $\hat{x}$ 
$\in \mathcal{X} \subset \mathbb{R}^{w \times h}$ and its true (class) label $y_0$ $\in \mathcal{Y} = \{0, 1, \cdots, |\mathcal{Y}|-1\}$, 
an image classification (deep learning) model $f: \mathcal{X} \rightarrow \mathcal{Y}$ 
takes $\hat{x}$ as input and produces a class label $f(\hat{x}) \in \mathcal{Y}$ with confidence ${f_{\textit{conf}}}(\hat{x}) \in (0,1)$.
If  $f(\hat{x})$ is the true label $y_0$ for $\hat{x}$, the sample $\hat{x}$ is called \textbf{correctly predicted}, otherwise \textbf{incorrectly predicted}.

Like previous work \cite{levine2020randomized, xiang2021patchguard}, we represent a contiguous square \textbf{patch region} by a binary matrix $\textsc{p} \in \mathbb{P} \subset \{0,1\}^{w \times h}$, where elements within the region are 1, otherwise 0, and
$\mathbb{P}$ represents the \textbf{set of all patch regions} that meet the predefined conditions (e.g., shape, size).
An attack constraint set $\mathbb{A}_\mathbb{P}({x})$  represents all patched versions of a benign sample $x$%
:
$\mathbb{A}_\mathbb{P}({x})$ = $\{{x}'' \mid {x}''=(\textsc{J}-\textsc{p})\odot {x}+\textsc{p}\odot {x}'' \land \textsc{p}\in \mathbb{P} \}$, 
where
${x}''$ is a patched version of $x$ such that
 an attacker can modify any pixels within a patch region \textsc{p} on $x$ ($\textsc{p}\odot {x}''$) while keeping all pixels outside the region unmodified ($(\textsc{J}-\textsc{p})\odot {x}$).
 The attacker can place \textsc{p} anywhere on $x$.
%
A patched version
$x'\in\mathbb{A}_\mathbb{P}({x})$ 
with incorrectly predicted label ($f(x')\neq y_0$) is called a \textbf{harmful sample}.
The patched sample in Fig. \ref{fig:moti-attack} is a harmful sample.

\subsection{Certified Detection}

A \textbf{certified detection defender}
$D = \langle f, w, v\rangle$ 
is a base classification model $f$ plus a warning function $w(\cdot)$ and a certification (verification) function $v(\cdot)$.
Given an input sample $\hat{x}$, 
(1) $f$ outputs a label $f(\hat{x})$, 
(2)
$w(\hat{x})$ returns \textit{True} if it detects $\hat{x}$ as harmful,
otherwise \textit{False}, and
(3)
$v(\hat{x})$ returns \textit{True} if $D$ certifies $\hat{x}$ (see Def.~\ref{def:cert}), otherwise \textit{False}. 
 In pre-deployment, $D$ uses  $v(.)$ to certify a benign sample, while in post-deployment, $D$ detects a sample as harmful by $w(.)$

\textbf{Attacker's objective on detection defenders}: An attacker \cite{tao2023hardlabel} aims to find $x$'s harmful sample $x'$
that can pass a (certified) detection defender $D = \langle f, w, v \rangle $ without being detected, i.e., find $x'\in\mathbb{A}_\mathbb{P}({x})$ such that
$f(x') \neq y_0  \land w(x') = False$. 

%

%
%
%
If a detection scheme guarantees the detection of all harmful samples of $x$ as warned, it is certified detection%
\footnote{There are different schools of thought on the definition of certified detection in the literature.
The other one is to detect all the changes in the prediction labels. See Section \ref{app:certified_detection}.} (Def.~\ref{def:cert}).

\begin{Definition}[Certified Detection]\label{def:cert}
A defender $D = \langle f, w, v\rangle$  certifies  $x$, i.e., $v({x})$ = \textit{True}, if 
 $[\forall {x}' \in \mathbb{A}_\mathbb{P}({x})$, $f({x}')\neq y_0$ implies $w({x}')=\textit{True}]$. 
 $D$ reports $x$ as its \textbf{certified sample}.%

\end{Definition}

If all benign samples can be certified,
then no harmful sample can escape from $D$'s detection guard (which achieves completeness),
a safety-critical downstream process can use the samples (with their predicted labels from $D$'s classifier) without any safety concerns if $D$'s warning function does not issue a warning. 
However, this assumption is challenging to achieve due to the imperfections of deep learning models, unless the detector is trivial (e.g., it certifies and warns for all possible samples).
We settle for the next best option: a higher proportion of benign samples that can be certified by $D$'s certification function indicates that fewer benign samples might lead to harmful samples infiltrating detection (which threatens downstream tasks). Therefore, downstream tasks can use the samples emitted by $D$, which do not trigger $D$ to issue warnings, with greater assurance.


\begin{figure}[tb]
\centering
\includegraphics[width=0.65\linewidth]{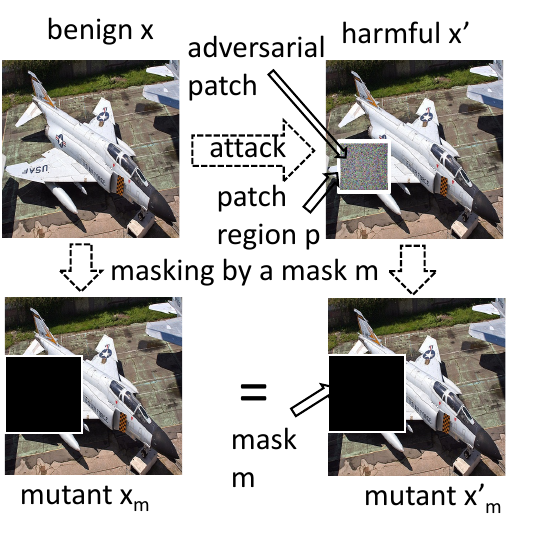}
\caption{ 
Illustration of the concepts of masking.
A military aircraft (from ImageNet \cite{deng2009imagenet}) may be patched to evade DL-based inspection.
 }\label{fig:concept}
\end{figure} 

\subsection{Masking-based Techniques, Types of Samples, and Mutants}\label{sec:OMA_related_work}
In the literature, almost all effective prior art in certified detection \cite{patchcensor,li2022vip,mccoyd2020minority,han2021scalecert,xiang2021patchguard++} against patch attacks are masking-based \cite{zhou2024crosscert}. 
They produce a specific set of mutants from a benign sample subject to certification and certify the sample if all of these mutants meet a certification condition.

\subsubsection{Masking a Sample to Produce Mutants}
A \textbf{mask} is represented by a binary matrix $\textsc{m} \in [0,1]^{w \times h}$, with elements set to 1 within the mask and 0 otherwise. 

We create the \textbf{mutant} $\hat{x}_\textsc{m}$ by removing the content of a sample $\hat{x}$ covered by a mask $\textsc{m}$: $\hat{x}_\textsc{m} = (\textsc{J}-\textsc{m}) \odot \hat{x}$.

%
A mask \textsc{m} \textit{covers} a patch region $\textsc{p}$ if and only if all elements of 1 in $\textsc{p}$
are elements of 1 in \textsc{m}, i.e.,
$\textsc{p}\odot\textsc{m}=\textsc{p}$. 
If 
a mask  $\textsc{m}_\textsc{p}$ 
covers a patch region
$\textsc{p}$, which further covers the difference between $x'$ and $x$,
the two mutants generated by this mask on $x'$ and $x$ 
will be identical due to the difference in content between them removed by the mask,
i.e., $\forall x'\in\{x'\mid{x}'=(\textsc{J}-\textsc{p})\odot {x}+\textsc{p}\odot {x}'\},
\textsc{p}\odot\textsc{m}_\textsc{p}=\textsc{p}\implies x_{\textsc{m}_\textsc{p}}={x}'_{\textsc{m}_\textsc{p}}$.

Fig.~\ref{fig:concept} illustrates these concepts on a military aircraft, which can be patched with paint to evade DL-based inspection.


To ease our presentation, we refer to \emph{placing a patch on a mutant} of a sample as a shorthand description of creating a patched version of the sample, such that the mask that produces the mutant covers the patch (region).

Since the specific patch region used by the attacker is unknown to a defender,
existing masking-based detection defenders 
commonly generate a \textbf{covering mask set} (referred to as window masks in \cite{xiang2021patchguard++}) $\mathbb{M}_\mathbb{P}$, ensuring that every patch region $\textsc{p}\in\mathbb{P}$ is covered by at least one mask in $\mathbb{M}_\mathbb{P}$ (i.e., $\forall \textsc{p}\in\mathbb{P}, \exists \textsc{m}_\textsc{p}\in\mathbb{M}_\mathbb{P}, \textsc{p}\odot{\textsc{m}_\textsc{p}}=\textsc{p}$) \cite{xiang2021patchguard++,patchcensor,li2022vip,xiang2022patchcleanser}.





\subsubsection{Mutation and Sample Types}\label{sec:oma}
A sample $\hat{x}$ satisfies the $D_\text{OMA}$ condition (Def. \ref{def:one-mask}) if all its mutants 
share the same prediction label.

\begin{figure}[t]
\centering
\includegraphics[width=0.99\linewidth]{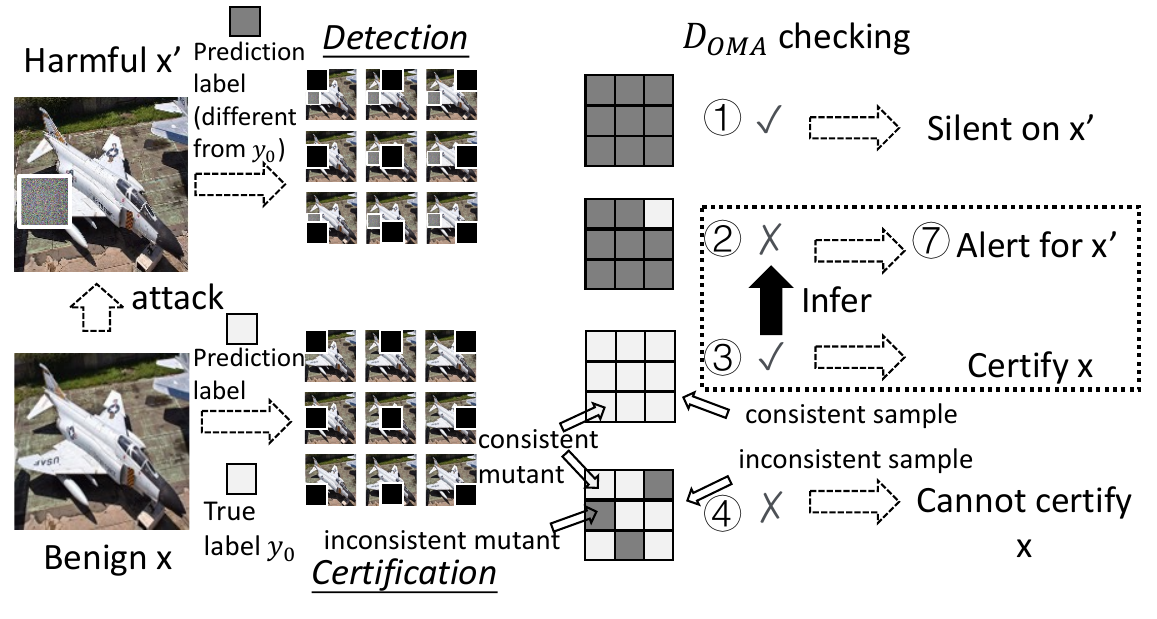}
\caption{
Illustration of the $D_\text{OMA}$ defender. 
}
\label{fig:oma}
\end{figure}


\begin{Definition}[{$D_\text{OMA}$ condition}]
\label{def:one-mask}

The One-Masking-Agreement condition ($D_\text{OMA}$ condition) is defined as
$[\forall \textsc{m} \in\mathbb{M}_\mathbb{P}, f(\hat{x}_\textsc{m}) = y]$ for some label $y$, denoted by OMA($\hat{x},y$).
OMA($\hat{x},y$) is $\textit{True}$ if the condition holds, and otherwise $\textit{False}$. 
\end{Definition}



Specifically, we call a sample $\hat{x}$ has a \textbf{label difference} if $f(\hat{x}_\textsc{m}) \neq f(\hat{x})$ for some mask \textsc{m} in the covering mark set, i.e., there exists mutants not predicted with the prediction label of the sample.
%
Moreover,
we refer to $\hat{x}$ as a \textbf{consistent sample} if \text{OMA}$(\hat{x}, y_0) = \textit{True}$ (all mutants predict the true label of $\hat{x}$), regardless of whether $\hat{x}$ is correctly predicted; otherwise, it is an \textbf{inconsistent sample}. An \textbf{inconsistent mutant} of $\hat{x}$ is a mutant $\hat{x}_m$ not predicted with the true label ($f(\hat{x}_m) \neq y_0$).
In contrast, a \textbf{consistent mutant} is a mutant predicted with the true label ($f(\hat{x}_m) = y_0$).


\begin{figure*}[t]
\centering
\includegraphics[width=1\linewidth]
{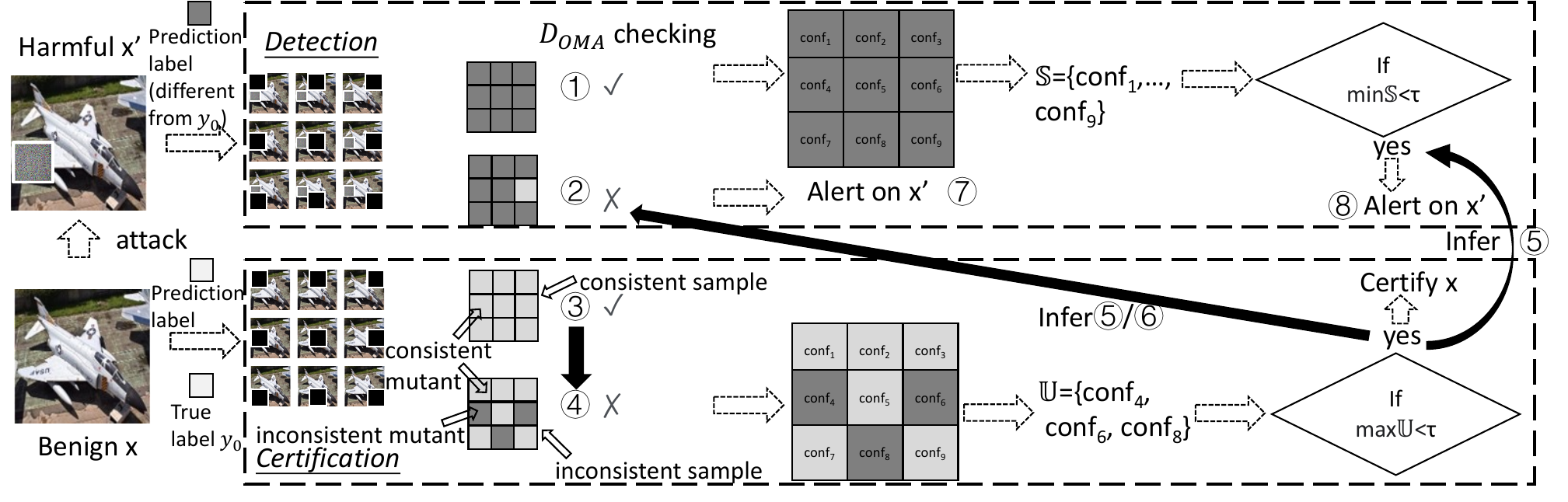}
\caption{
Overview of the design of HiCert on the interplay between the certification of benign samples and the detection of harmful samples.
}
\label{fig:overview}
\end{figure*}

\section{Existing Methods and Motivation}\label{sec:motiva}

\subsection{Categories of Existing Methods and Their Limitations}
Except for CrossCert \cite{zhou2024crosscert},
all existing certified detection defenders are masking-based defenders based on the $D_\text{OMA}$ condition,
which 
can be categorized into two kinds.


\subsubsection{C1: Detecting all the harm}
Minority Report (MR) \cite{mccoyd2020minority} (poor scalability), ScaleCert (SC) \cite{han2021scalecert} (not deterministic guarantee), and  PatchGaurd++ (PG++) \cite{xiang2021patchguard++}
aim to detect all the harmful samples $x'$ of certified benign samples $x$ with its true label $y_0$ (i.e., detecting $x'\in\mathbb{A}_\mathbb{P}(x), f(x')\neq y_0$), which is also adopted by this work.
%
PG++ is designed to \emph{only} warn an input sample if its prediction label differs from any mutant's prediction label with confidence above a threshold $\tau \in [0,1]$.
To achieve this, it certifies a benign sample $x$ with true label $y_0$
by requiring both \text{OMA}$(x, y_0) = \textit{True}$ and the confidence for every mutant of $x$ exceeding $\tau$.
(i.e., If  $[\forall \textsc{m} \in \mathbb{M}_\mathbb{P}, f(x_\textsc{m})=y_0\land f_{\textit{conf}}(x_\textsc{m})>\tau]$ holds, then $[\forall x'\in\mathbb{A}_\mathbb{P}({x}),[ f(x')\neq y_0]\implies[\exists \textsc{m} \in \mathbb{M}_\mathbb{P}, f(x'_\textsc{m})\neq f(x')\land f_{\textit{conf}}(x'_\textsc{m})>\tau]]$ holds \cite{xiang2021patchguard++}).
\textbf{PG++} \cite{xiang2021patchguard++} is formally defined as $\langle f, v, w \rangle$, where
\begin{itemize}
    \item $f$ is a classification model,
    \item $w(\hat{x}) := 
    [\exists \textsc{m} \in \mathbb{M}_\mathbb{P}, f(\hat{x}_\textsc{m})\neq f(\hat{x})\land f_{\textit{conf}}(\hat{x}_\textsc{m})>\tau]$%
    \footnote{
    Note that the condition $w(\hat{x})$ implies $[\text{OMA}(\hat{x}, f(\hat{x}))=\textit{False}]$.}, 
    \item
  $v(x) := 
  [\text{OMA}(x,y_0) = \small{\textit{True}} \land \forall \textsc{m} \in \mathbb{M}_\mathbb{P}, f_{\textit{conf}}(x_\textsc{m})>\tau]$,
  \end{itemize} 
where $\tau \in [0,1]$ is a constant.

If $\tau$ in PG++ 
is set to 0, then the clause $\forall \textsc{m} \in \mathbb{M}_\mathbb{P},f(x_\textsc{m})=y_0\land f_{\textit{conf}}(x_\textsc{m})>\tau$
will be degenerated into \text{OMA}$(x, y_0) = \textit{True}$, and PG++
will be reduced into $D_\text{OMA}$ defender.%
%

$\textbf{\textit{D}}_\text{\textbf OMA}$ is defined as 
$\langle f, v, w \rangle$, where 
\begin{itemize}
\item $f$ is a classification model,
\item $w(\hat{x}) := [\text{OMA}(\hat{x}, f(\hat{x}))=\textit{False}]$, and
\item $v(x) := [\text{OMA}(x,y_0)=\textit{True}]$.
\end{itemize}

Fig.~\ref{fig:oma} illustrates the $D_\text{OMA}$ defender. 
For each input sample, $D_\text{OMA}$ generates a set of its mutants, one for each mask in the covering mask set $\mathbb{M}_\mathbb{P}$ (nine mutants are shown), no matter for certification or for warning.
For a benign sample that is consistent (depicted at the endpoint \textcircled{3}),
all its harmful versions should 
exhibit label differences (depicted at the endpoint \textcircled{2}) and finally being warned at endpoint \textcircled{7}.
%
$D_\text{OMA}$ certifies the benign sample at the endpoint \textcircled{3},
depicted by a single certification-warning path \textcircled{3}-\textcircled{2}-\textcircled{7}
applicable to all consistent samples.
However, $D_\text{OMA}$
fails to certify any inconsistent samples (depicted at the endpoint \textcircled{4}, where some mutants are predicted with a label different from the true label, i.e., an inconsistent mutant) since the label difference may disappear on their harmful version (depicted at the endpoint \textcircled{1}), leaving security risks.

\textbf{Theoretical Limitation:}
Although their certification prerequisites may differ,
{all} defenders in C1 (MR, SC, PG++, and $D_\text{OMA}$) 
commonly require certified samples to be consistent samples (i.e., $\text{OMA}(x, y_0)=\textit{True}$).
They are
theoretically unable to certify \emph{any inconsistent samples}.

\subsubsection{C2: Detecting all the changes}
\label{sec:Detecting all the changes}
ViP \cite{li2022vip} and PatchCensor (PC) \cite{patchcensor} 
present the same variant of the certification theorem of $\textit{D}_\text{OMA}$
as theirs --- If $ [\text{OMA}(x,f(x))=\textit{True}]$ holds, then  $[\forall x'\in\mathbb{A}_\mathbb{P}({x}),[f(x')\neq f(x)\implies\text{OMA}(x', f(x'))=\textit{False}]]$ holds,
i.e., define $v(x) = [\text{OMA}(x,f(x))=\textit{True}]$ instead of $[\text{OMA}(x,y_0)=\textit{True}]$ in $\textit{D}_\text{OMA}$'s certification theorem (mutants are consistently predicted with the same prediction label of the sample) and keep $w(\hat{x}) = [\text{OMA}(\hat{x},f(\hat{x}))=\textit{False}]$.
However, 
the condition 
$f(x')\neq f(x)$ in 
this variant can only ensure that patched samples with prediction labels different from the corresponding ``certified'' benign samples are detected.
We can observe from the above a common strategy for {certification} and warning (referred to as ${\textit{D}}_\textbf{\textit{OMA}}$ \textbf{checking}) adopted by {defenders in C1 and C2}:
The certification function $v()$ always requires that a sample satisfies an $D_\text{OMA}$ condition to be certified; any warning raised by the warning function $w()$ must
associate with a violation of an $D_\text{OMA}$ condition.

\textbf{Theoretical Limitation:}
Defenders in C2 (ViP and PC) are insufficient to
ensure the detection of all harmful samples of a ``certified'' benign sample if a benign sample is incorrectly predicted (i.e., $f(x)\neq y_0$): an attacker can retain this incorrect prediction label but produce harmful samples.
On the other hand, if the benign samples are correctly predicted (i.e., $f(x)= y_0$), defenders in C2 will {incur} 
the same limitation of C1 in consistent samples (i.e., $[\text{OMA}(x,f(x))=\textit{True}]\Leftrightarrow [\text{OMA}(x,y_0)=\textit{True}]$). 

\subsection{When Will Certification Fail?
}
For example, to apply such a defender in autonomous driving scenarios \cite{patchcensor}, a focus is whether the harmful samples (e.g., an adversarially altered traffic sign that originally indicated a low speed but now classified with a high speed) can pass through the detection guard of the traffic sign recognition systems to downstream processes (e.g., increasing the vehicle's travel speed), which depends on whether a defender can ensure the detection of all harmful versions of a benign sample $x$.
\textbf{Inconsistent samples:}
As described in Fig. \ref{fig:moti-attack},
the inconsistent sample $x$, a ``Speed Limit 50 km/h'' sign from GTSRB, is correctly classified. 
However, many of its mutants are predicted with ``Speed Limit 50 km/h'', while the remaining mutants are predicted with ``Speed Limit 100 km/h'', i.e., $x$ cannot be certified by defenders in C1 or C2%
.
At the same time, the presented harmful sample $x'$ in Fig.~\ref{fig:moti-attack} indeed has no label difference.
A previous work
\cite{patchcensor} explained that 
PatchCensor incurred this problem
\textit{``because the [Deep Learning] DL models are imperfect''}, which we agreed.
Still, 
leaving the whole issue unaddressed by a defender is unsatisfactory.
\textbf{Incorrectly predicted samples:}
We performed a case study (see Section \ref{app:motivate_case_study}) and found that only 1 sample out of all 8751 incorrectly predicted samples in ImageNet is a consistent sample.
$D_\text{OMA}$ (defenders in C1) can only certify this single sample in all incorrectly predicted samples, and defenders in C2 can never ensure the detection of all their harmful samples.

\subsection{Certification Consideration}
As a base model $f$ is imperfect, 
many benign samples, especially those incorrectly predicted, have inconsistent mutants.
Without further information, a warning function $w(.)$ that uses label differences (i.e., $w(\hat{x}):= \neg OMA(\hat{x}, f(\hat{x}))$) is theoretically impossible to additionally detect a harmful sample $x'$ that satisfies
OMA$({x'}, f({x}'))$ (i.e., tries to warn more harmful samples to also cover more benign samples for certification)
but not detect benign samples 
that satisfies  
OMA$({x}, f({x}))$
(i.e., now $w(\hat{x})$ is always \textit{True}, which has to alert on \emph{all} input samples).

\section{{Our Proposal: HiCert}}\label{sec:proposal}

\subsection{Overview}\label{sec:design_of_hicert}
We formulate HiCert $D = \langle f, w, v\rangle$ as follows:
\begin{itemize} 
 \item $f$ is a classification model,
 
 \item $w(\hat{x}):=[\{\hat{x}_\textsc{m}\mid \textsc{m}\in\mathbb{M}_\mathbb{P},f(\hat{x}_\textsc{m})\neq f(\hat{x})\}\neq \emptyset] 
 \lor 
 [\min{\{f_{\textit{conf}}(\hat{x}_\textsc{m})\mid \textsc{m}\in\mathbb{M}_\mathbb{P},f(\hat{x}_\textsc{m})=f(\hat{x})\}}<\tau]$, and
 
 \item $v(x):= 
 [\max{\{f_{\textit{conf}}(x_\textsc{m})\mid \textsc{m}\in\mathbb{M}_\mathbb{P},f(x_\textsc{m})\neq y_0\}}<\tau]$,
 \end{itemize}
where $\max \emptyset=-\infty, \min \emptyset=\infty$, and   $\tau \in [0,1]$ is a constant.


Specifically, HiCert is reduced to $D_\text{OMA}$ when $\tau=0$, and reduced to a trivial detection defender when $\tau=1$ (see Section \ref{app:special_case_of_HiCert} for details).

Given an input $\hat{x}$, for certification,
HiCert generates its mutant $\hat{x}_\textsc{m}$ iteratively and checks whether its prediction label is the same as its true label (i.e., $f(\hat{x}_\textsc{m})\neq y_0$); if that is not the case, it checks whether its confidence is lower than the threshold $\tau$ (i.e., $f_{conf}(\hat{x}_\textsc{m})<\tau$). 
A certificate is assigned if all mutants meet one of these two conditions (implementation of the certification function $v()$).
Similarly, for detection,
HiCert generates its mutant $\hat{x}_\textsc{m}$ iteratively and checks whether its prediction label is the same as the prediction label of $x$ (i.e., $f(\hat{x}_\textsc{m})\neq f(\hat{x})$); if that is the case, it checks whether its confidence is lower than the threshold $\tau$ (i.e., $f_{conf}(\hat{x}_\textsc{m})<\tau$).
A warning is raised if any mutant fails to satisfy either condition (implementation of the warning function $w()$).
We also illustrate the flowcharts of the implementation of HiCert in the Section~\ref{sec:flow}.

The design of HiCert on the interplay between the certification of benign samples and the detection of harmful samples is shown in Fig.~\ref{fig:overview}.
A benign sample $x$ with its true label $y_0$ subject to certification is either consistent  (depicted at the endpoint \textcircled{3}) or inconsistent 
(depicted at the end point \textcircled{4}), satisfying OMA$(x,y_0)=\textit{True}$ and OMA$(x,y_0)=False$, respectively. 
Starting from them, there are three certification-warning paths in HiCert with the deterministic guarantee {on different combinations of chosen mutants of benign samples and derived mutants of harmful samples by attackers}:
\textcircled{4}-\textcircled{6}-\textcircled{2}-\textcircled{7}, \textcircled{4}-\textcircled{5}-\textcircled{2}-\textcircled{7},
and \textcircled{4}-\textcircled{5}-\textcircled{1}-\textcircled{8}.
\footnote{Note that the physical meanings of the endpoints \textcircled{5} and \textcircled{6} are that the patch is placed within the mask that generates an inconsistent mutant and a consistent mutant, respectively.
From the endpoint \textcircled{4}, the endpoint \textcircled{1} in the warning process can only be reached by endpoint \textcircled{5}, proven by Thm.~\ref{thm:infeasible}, and the endpoint \textcircled{2} can be reached by endpoint \textcircled{5} or endpoint \textcircled{6}.
}
HiCert first generates one mutant for each mask $\textsc{m}$ in the covering mask set $\mathbb{M}_\mathbb{P}$ (nine mutants are shown).
For an inconsistent sample $x$ at the endpoint \textcircled{4},
HiCert checks if the confidence of every inconsistent mutant of $x$
is below a given threshold $\tau$ (formulated as $[\max \mathbb{U} < \tau]$ where 
$\mathbb{U}={\{f_{\textit{conf}}(x_\textsc{m})\mid \textsc{m}\in\mathbb{M}_\mathbb{P},f(x_\textsc{m})\neq y_0\}}$, equivalent to $v()$).
If true, it ensures that all harmful samples of $x$ are warned by Thm.~\ref{thm:Inconsistent-Max-Min}: by detecting a label difference (the first expression in $w()$) on a harmful sample $x'$ if choosing Path  \textcircled{4}-\textcircled{6}-\textcircled{2}-\textcircled{7} or Path \textcircled{4}-\textcircled{5}-\textcircled{2}-\textcircled{7}, or 
by detecting the confidence of some inconsistent mutant of $x'$ below $\tau$ (formulated as $[\min \mathbb{S} < \tau]$ where 
$\mathbb{S}=\{f_{\textit{conf}}(\hat{x}_\textsc{m})\mid \textsc{m}\in\mathbb{M}_\mathbb{P},f(\hat{x}_\textsc{m})=f(\hat{x})\}$, equivalent to the second expression in $w()$)
if choosing Path \textcircled{4}-\textcircled{5}-\textcircled{1}-\textcircled{8}.
(Note that attackers are responsible for choosing endpoint \textcircled{5} or endpoint \textcircled{6}.)
Specifically,
at the endpoint \textcircled{2} on the warning side, if a difference in the label between a specific mutant and $x'$ appears (where each mutant is generated from each mask in the above set $\mathbb{M}_\mathbb{P}$), HiCert alerts on $x'$ (reaching \textcircled{7}).
If there is no difference in the label between this mutant-sample pair,
HiCert goes to the endpoint \textcircled{1}.
At endpoint \textcircled{1}, 
HiCert checks if the confidence of this specific mutant is below the threshold $\tau$ and alerts on $x'$ (reaching \textcircled{8}) if true.
It repeats this checking process over the two subpaths \textcircled{1}-\textcircled{8} and \textcircled{2}-\textcircled{7} for each mutant of $x'$.
The physical meaning of Path \textcircled{4}-\textcircled{5}-\textcircled{2}-\textcircled{7} is that the attacker is unable (too weak) to make all mutants predicted with the prediction label of the harmful sample even if it is theoretically possible, and in contrast, the physical meaning of Path \textcircled{4}-\textcircled{5}-\textcircled{1}-\textcircled{8} is that the attacker indeed makes all mutants predicted with the prediction label of the harmful sample, 
which is detected by the low confidence criterion.
Note that consistent samples are a special case of inconsistent samples (endpoint \textcircled{3} is a special case of endpoint \textcircled{4}) with $\mathbb{U}=\emptyset$, whose harmful samples should follow the Path \textcircled{4}-\textcircled{6}-\textcircled{2}-\textcircled{7} to be detected, proven by Thm.~\ref{thm:infeasible}. 
On the other hand, $D_\text{OMA}$
only has Path \textcircled{3}-\textcircled{2}-\textcircled{7} (not explicitly shown on the figure but implicitly included in Path \textcircled{4}-\textcircled{6}-\textcircled{2}-\textcircled{7}), which fail to certify any inconsistent samples in \textcircled{4} and fail to warn any harmful samples in \textcircled{1}.
The same evasion of harmful samples will also occur in PatchCensor, ViP, and PG++.




%

\subsection{Inconsistent Mutants + No Label Difference: Key Problem} \label{sec:why_fail}
This section presents our effort to come up with a direction to address the certification problem with inconsistent samples.

Our first insight (\textbf{Insight A}) is non-obvious at first sign: Certifying a consistent benign sample is a special case of certifying an inconsistent benign sample.

An inconsistent sample has two sets of mutants in general: one set for consistent mutants and the other set for inconsistent mutants, and
a special case is that this set for inconsistent mutants is empty.
On the one hand, if placing a patch on a consistent mutant, proven by Thm. \ref{thm:infeasible}, it cannot create a sample without a label difference if it is harmful;
On the other hand, if the attacker
places a patch on an inconsistent mutant (if it exists), it can choose to
produce a harmful patched sample 
that
exhibits no label difference or still keeps exhibiting a label difference.
In other words, consistent samples are a special case of inconsistent samples, placing a patch on consistent mutants is a special case of placing on inconsistent mutants, and the label difference is also a special case for detecting harmful samples of inconsistent samples.
Thm.~\ref{thm:infeasible} formally proves the infeasibility of placing a patch on consistent mutants of a sample without exhibiting a label difference to produce a harmful sample, which is applicable to all samples regardless of whether the sample is incorrectly predicted and whether it is inconsistent.
Intuitively, by relying on consistent mutants rather than consistent samples, Thm.~\ref{thm:infeasible} can eliminate a part of the attack cases on inconsistent samples.

\begin{figure}[!t]
\centering
\includegraphics[width=\linewidth]{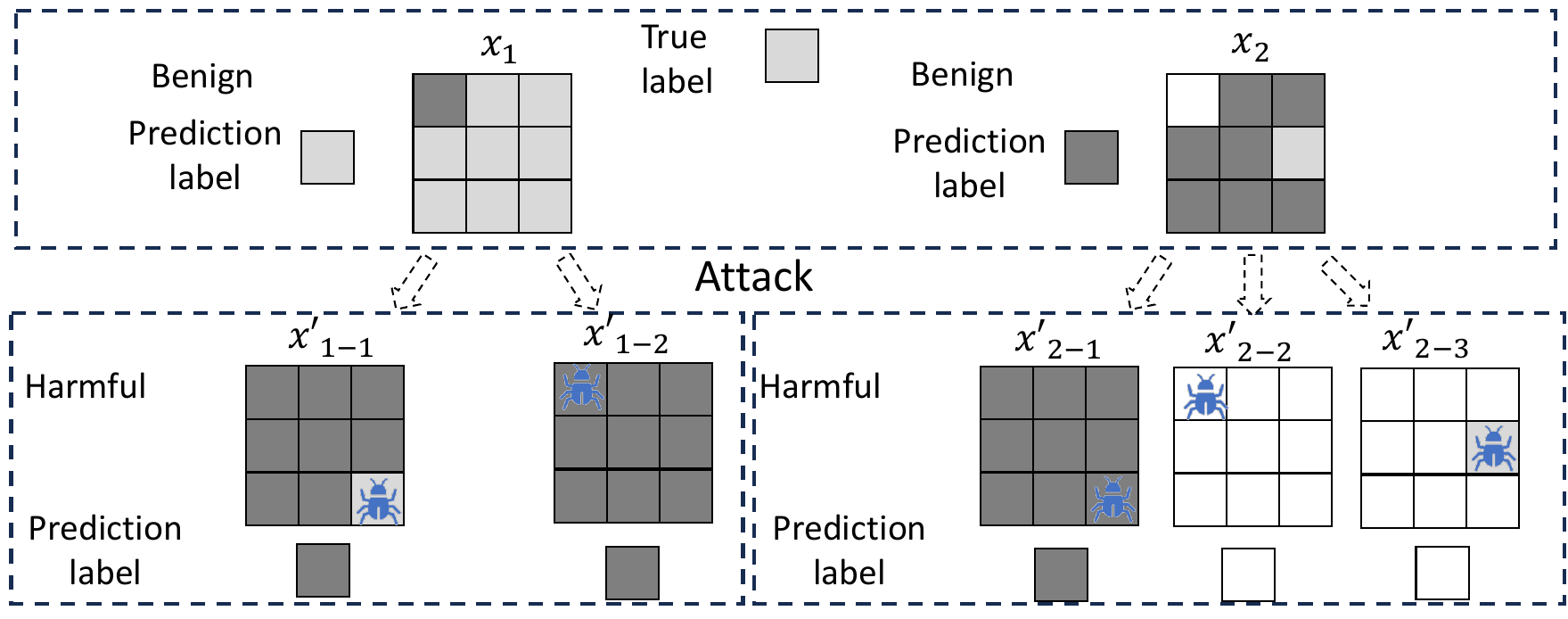}

\caption{
Illustration of (all) five possible attack cases of inconsistent benign samples with the best effort of the attacker.
9 mutants (squares) are generated for each sample, and different colors represent different prediction labels.
}

\label{fig:why_fail}
\end{figure}

\begin{thm}[Consistent mutants are infeasible places for attackers]
\label{thm:infeasible}
If the patch region is covered by a mask whose corresponding mutant's label is the same as the true label, it is infeasible for harmful samples to show no label difference. (i.e.,
if the condition $[\exists \textsc{m}_\textsc{p} \in\mathbb{M}_\mathbb{P},
\textsc{m}_\textsc{p}\odot{\textsc{p}}=\textsc{p}
\land 
f(x_{\textsc{m}_\textsc{p}})=y_0]
$
holds, the condition 
$[\forall x'\in\{x'\mid{x}'=(\textsc{J}-\textsc{p})\odot {x}+\textsc{p}\odot {x}'\},
[f(x')\neq y_0]\implies
[\exists \textsc{m} \in \mathbb{M}_\mathbb{P}, 
f({x}'_{\textsc{m}})  \neq f(x')]]$
holds.)

\end{thm}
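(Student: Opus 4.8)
The plan is to prove the implication directly, using as the witness mask the very $\textsc{m}_\textsc{p}$ whose existence is asserted in the hypothesis. Fix an arbitrary patched version $x' = (\textsc{J}-\textsc{p})\odot x + \textsc{p}\odot x'$ of $x$ with patch region $\textsc{p}$, and assume it is harmful, i.e.\ $f(x') \neq y_0$. It then suffices to exhibit a single mask $\textsc{m} \in \mathbb{M}_\mathbb{P}$ for which $f(x'_{\textsc{m}}) \neq f(x')$, since this is exactly the label-difference clause appearing in the conclusion.

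The one substantive step is to invoke the mask-invariance property recalled in Section~\ref{sec:OMA_related_work}: if a mask $\textsc{m}_\textsc{p}$ covers the patch region, i.e.\ $\textsc{p}\odot\textsc{m}_\textsc{p}=\textsc{p}$, then the mutants of $x$ and of $x'$ under that mask coincide, $x_{\textsc{m}_\textsc{p}} = x'_{\textsc{m}_\textsc{p}}$, because $x$ and $x'$ differ only on pixels inside $\textsc{p}$ and those pixels are erased by $\textsc{m}_\textsc{p}$. The hypothesis supplies precisely such a mask $\textsc{m}_\textsc{p} \in \mathbb{M}_\mathbb{P}$ together with the extra guarantee $f(x_{\textsc{m}_\textsc{p}}) = y_0$. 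Chaining the two facts gives $f(x'_{\textsc{m}_\textsc{p}}) = f(x_{\textsc{m}_\textsc{p}}) = y_0$, while $f(x') \neq y_0$ by the harmful-sample assumption; hence $f(x'_{\textsc{m}_\textsc{p}}) \neq f(x')$. Taking $\textsc{m} = \textsc{m}_\textsc{p}$ closes the argument, and since $x'$ was an arbitrary $\textsc{p}$-patched version of $x$, the universally quantified conclusion follows.

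I do not expect a genuine obstacle here: the statement is essentially a one-line corollary of the mask-invariance property once the quantifiers are unwound. The only points worth stating explicitly are (i) that the ``the patch region covers the difference between $x'$ and $x$'' premise required by that property is automatic from $x'$ being a $\textsc{p}$-patched version of $x$, and (ii) that the chosen witness $\textsc{m}_\textsc{p}$ genuinely belongs to the covering mask set $\mathbb{M}_\mathbb{P}$ that the defender actually evaluates --- both already encoded in the antecedent. I would also add a remark that this is the formal justification for HiCert ruling out attack cases that place the patch under a consistent mutant (endpoint \textcircled{5} or \textcircled{6} in Fig.~\ref{fig:overview}), which is how the theorem is subsequently used, and that the argument makes no use of whether $x$ is correctly predicted or consistent, so it applies uniformly to all benign samples.
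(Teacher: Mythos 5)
Your proof is correct and follows essentially the same route as the paper's: both use the hypothesized covering mask $\textsc{m}_\textsc{p}$ as the witness, establish $x'_{\textsc{m}_\textsc{p}} = x_{\textsc{m}_\textsc{p}}$ from $\textsc{p}\odot\textsc{m}_\textsc{p}=\textsc{p}$ (the paper spells out the element-wise algebra that you cite as the mask-invariance property from Section~\ref{sec:OMA_related_work}), and conclude $f(x'_{\textsc{m}_\textsc{p}})=y_0\neq f(x')$. No gaps.
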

The formal proof is in Section \ref{app:proof}. 
Intuitively, with its mask covering the patch, a consistent mutant will always be predicted to the true label, no matter how the attacker attacks. 
With Thm.~\ref{thm:infeasible}, we can turn the focus 
to the focus on inconsistent mutants on a finer granularity.

Fig.~\ref{fig:why_fail} illustrates possible attack cases on inconsistent samples to result in harmful samples, where $x_1$ and $x_2$ are correctly and incorrectly predicted, respectively.
Among all harmful samples shown, $x'_{1-1}$ and $x'_{2-3}$ are the harmful samples generated by placing the patch on the consistent mutant of their benign counterparts ($x_1$ and $x_2$). Based on Thm.~\ref{thm:infeasible}, the labels of these mutants should be different from 
the prediction labels of $x'_{1-1}$ and $x'_{2-3}$ (i.e., the label difference appears), respectively.
However, attackers may alternatively modify $x_1$ and $x_2$ to create other harmful samples by placing patches on their inconsistent mutants, as shown in  Fig.~\ref{fig:why_fail}, which may create harmful samples without any label difference (e.g., $x'_{1-2}$, $x'_{2-1}$ and  $x'_{2-2}$).
\subsection{Low Confidence Across Inconsistent Mutants: Key Solution}
\label{sec:invariant_solution}

Following Insight A stated in Section~\ref{sec:why_fail}, 
we can focus on addressing the certification problem of inconsistent samples by inconsistent mutants while not forgetting those consistent.



Our second sight (\textbf{Insight B}) is:
Creating harmful samples by patching a typical inconsistent sample inevitably leaves traces of prediction-based evidence in every harmful sample, where we identify an effective trace of evidence for certification: \emph{either a label difference is evident, or a low-confidence mutant is retained}.


We observe from our experiment that
most samples with inconsistent mutants of benign samples in the ImageNet dataset have relatively low confidence across all their respective inconsistent mutants empirically
(see Histogram \textcircled{1} in Fig.~\ref{fig:confidence_min_max} for ImageNet+MAE),
where we refer to it as the low confidence property over the set of inconsistent mutants of a sample to ease our reference.
Efforts to mitigate attacks on them should have a higher priority.
Low confidence indicates that the classifier in question has weak support on any labels for such mutants.
Intuitively, the classification results of such mutants (especially those of an incorrectly predicted sample, which is in a messier state as a whole) are more unreliable and easier to manipulate by attackers, which may empirically lead to the disappearance of the label difference in harmful samples.%
\footnote{See the experimental results in Fig.~\ref{fig:real_attack} for the actual attack on $D_\text{OMA}$.
With a patch size of 32 pixels, 
on ImageNet/CIFAR100 (less accurate), almost all inconsistent samples (those samples cannot be certified by $D_\text{OMA}$) can be successfully attacked (i.e., the label difference disappears in harmful samples);
on GTSRB (highly accurate), more than a quarter of inconsistent samples can be successfully attacked.
}
%
However, since the effort of attackers appearing in a harmful sample that results in no label difference will be removed if the specific mask for an inconsistent mutant covers the patch in the harmful sample, this low confidence property of the inconsistent mutants for the sample under attack will always be unveiled by the harmful sample after the patch is covered; in other words, \emph{the reason why a sample with such mutants can be easily attacked successfully will become a trace of harm creation retained in every resulting harmful sample as evidence}.
Detecting a necessary condition of the low confidence property on harmful samples can mitigate the threat.
If the effort of attackers on inconsistent mutants is not strong enough to make a harmful sample not exhibit a label difference, it leaves another trace of evidence.
On the other hand,
despite that most consistent mutants of consistent benign samples are relatively high in confidence (a heuristic used by PG++ and see Histogram \textcircled{4} in Fig. \ref{fig:confidence_min_max}),
by Thm.~\ref{thm:infeasible}, a label difference must appear in every resulting harmful sample 
as long as the attacker attacks a benign sample through its consistent mutants,
which is applicable for both consistent samples and those inconsistent samples with consistent mutants.

Therefore, if we have verified that \emph{all} inconsistent mutants in a sample are relatively low in confidence (which is typical as observed in Histogram \textcircled{1},
modeled as \emph{the confidence of every such mutant below a threshold $\tau$} to define the low confidence property mentioned above), any successful attack on a sample with the low confidence property can \emph{only} lead to one of the two consequences:
Either to produce a harmful sample with a label difference or to produce a harmful sample retaining a mutant with relatively low confidence.

If we design HiCert to warn in both cases, then it should be able to detect all harmful samples the attacker produces from the samples with the low confidence property.
Thm.~\ref{thm:Inconsistent-Max-Min} captures this insight (on top of Insight A and Thm. \ref{thm:infeasible}) to form the certification theorem of HiCert.
Intuitively, Thm.~\ref{thm:Inconsistent-Max-Min} uses low confidence as the trace to strengthen Thm.~\ref{thm:infeasible},
further ensure the infeasibility of those attack cases Thm.~\ref{thm:infeasible} cannot eliminate.


\begin{thm}
[HiCert Certification]
\label{thm:Inconsistent-Max-Min}
If the maximum confidence of inconsistent mutants of a benign sample $x$ is below a threshold $\tau$,
  each harmful sample $x'$ either incurs a label difference or has mutant(s) with minimum confidence below $\tau$ that are predicted with a label the same as $x'$
 --- if the condition 
$[\max{\{f_{\textit{conf}}({x}_\textsc{m})\mid \textsc{m}\in\mathbb{M}_\mathbb{P},f({x}_\textsc{m})\neq y_0\}}<\tau]$~holds, the condition $[\forall x'\in\mathbb{A}_\mathbb{P}({x}),[f(x')\neq y_0]
\implies 
[\{{x}'_\textsc{m}\mid  \textsc{m}\in\mathbb{M}_\mathbb{P},f({x}'_\textsc{m})\neq f({x'})\}\neq \emptyset] 
 \lor 
 [\min{\{f_{\textit{conf}}({x}'_\textsc{m})}$ ${\mid \textsc{m}\in\mathbb{M}_\mathbb{P},f({x}'_\textsc{m})=f({x'})\}}<\tau]]$ holds, 
 which is
 $v(x)\implies [\forall x'\in\mathbb{A}_\mathbb{P}({x}), f(x')\neq y_0 \implies w(x')]$ in HiCert.
\end{thm}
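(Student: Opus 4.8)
The plan is to fix an arbitrary harmful sample $x' \in \mathbb{A}_\mathbb{P}(x)$ with $f(x') \neq y_0$, let $\textsc{p} \in \mathbb{P}$ be the patch region that realizes $x'$, and use the covering mask set property ($\forall \textsc{p}\in\mathbb{P}, \exists \textsc{m}_\textsc{p}\in\mathbb{M}_\mathbb{P}, \textsc{p}\odot\textsc{m}_\textsc{p}=\textsc{p}$) to pick a mask $\textsc{m}_\textsc{p} \in \mathbb{M}_\mathbb{P}$ covering $\textsc{p}$. The single identity driving the whole argument is that this mask erases the only region where $x$ and $x'$ can differ, so $x_{\textsc{m}_\textsc{p}} = x'_{\textsc{m}_\textsc{p}}$, and therefore $f(x_{\textsc{m}_\textsc{p}}) = f(x'_{\textsc{m}_\textsc{p}})$ and $f_{\textit{conf}}(x_{\textsc{m}_\textsc{p}}) = f_{\textit{conf}}(x'_{\textsc{m}_\textsc{p}})$. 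I would then split on the label of this shared mutant $x_{\textsc{m}_\textsc{p}}$.

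\textbf{Case $f(x_{\textsc{m}_\textsc{p}}) = y_0$ (the patch lands on a consistent mutant).} Then $f(x'_{\textsc{m}_\textsc{p}}) = y_0 \neq f(x')$, so $x'_{\textsc{m}_\textsc{p}}$ witnesses a label difference: the set $\{x'_\textsc{m}\mid \textsc{m}\in\mathbb{M}_\mathbb{P}, f(x'_\textsc{m})\neq f(x')\}$ is nonempty, i.e. the first disjunct of $w(x')$ holds. This is exactly the situation covered by Thm.~\ref{thm:infeasible}, which I can cite directly instead of re-deriving. Note that this case also absorbs the consistent-sample special case: if $x$ is consistent, every mutant is consistent, the $\max$ in the hypothesis is taken over the empty set ($= -\infty < \tau$, vacuously true), and we always land here.

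\textbf{Case $f(x_{\textsc{m}_\textsc{p}}) \neq y_0$ (the patch lands on an inconsistent mutant).} Then $x_{\textsc{m}_\textsc{p}}$ belongs to $\{x_\textsc{m}\mid \textsc{m}\in\mathbb{M}_\mathbb{P}, f(x_\textsc{m})\neq y_0\}$, so the hypothesis gives $f_{\textit{conf}}(x_{\textsc{m}_\textsc{p}}) \le \max\{f_{\textit{conf}}(x_\textsc{m})\mid \textsc{m}\in\mathbb{M}_\mathbb{P}, f(x_\textsc{m})\neq y_0\} < \tau$, and by the shared-mutant identity $f_{\textit{conf}}(x'_{\textsc{m}_\textsc{p}}) < \tau$ while $f(x'_{\textsc{m}_\textsc{p}}) = f(x_{\textsc{m}_\textsc{p}}) \neq y_0$. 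Now I sub-split on whether $f(x'_{\textsc{m}_\textsc{p}}) = f(x')$: if not, the first disjunct of $w(x')$ holds again (label difference); if so, then $x'_{\textsc{m}_\textsc{p}}$ lies in $\{x'_\textsc{m}\mid \textsc{m}\in\mathbb{M}_\mathbb{P}, f(x'_\textsc{m})= f(x')\}$, making that set nonempty and forcing $\min\{f_{\textit{conf}}(x'_\textsc{m})\mid \textsc{m}\in\mathbb{M}_\mathbb{P}, f(x'_\textsc{m})= f(x')\} \le f_{\textit{conf}}(x'_{\textsc{m}_\textsc{p}}) < \tau$, i.e. the second disjunct of $w(x')$ holds. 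In every branch $w(x') = \textit{True}$, which yields $v(x)\implies[\forall x'\in\mathbb{A}_\mathbb{P}(x), f(x')\neq y_0 \implies w(x')]$ and completes the proof.

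\textbf{Main obstacle.} There is no deep difficulty: once the shared-mutant identity $x_{\textsc{m}_\textsc{p}} = x'_{\textsc{m}_\textsc{p}}$ is available, the result is a short two-level case analysis. The points that need care are (i) invoking the covering mask set guarantee to obtain $\textsc{m}_\textsc{p}$ at all; (ii) keeping the conventions $\max\emptyset = -\infty$ and $\min\emptyset = \infty$ straight, so that the consistent-sample case and the ``no low-confidence same-label mutant survives'' situation are handled vacuously rather than by separate arguments; and (iii) remembering that the two disjuncts of $w(x')$ are stated in terms of $x'$'s own mutants and $f(x')$, not $x$'s, so the transfer of both the label and the confidence bound from $x_{\textsc{m}_\textsc{p}}$ to $x'_{\textsc{m}_\textsc{p}}$ must go explicitly through the masking identity. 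I would present the first case as a one-line appeal to Thm.~\ref{thm:infeasible} to avoid duplicating its argument.
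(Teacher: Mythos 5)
Your proposal is correct and follows essentially the same route as the paper's proof: fix the covering mask $\textsc{m}_\textsc{p}$, use the shared-mutant identity $x_{\textsc{m}_\textsc{p}}=x'_{\textsc{m}_\textsc{p}}$, and case-split on whether that mutant is consistent (label difference via Thm.~\ref{thm:infeasible}) or inconsistent (confidence transfer plus a sub-split on $f(x'_{\textsc{m}_\textsc{p}})=f(x')$). The only cosmetic difference is that you sub-split directly on the label of the single mutant $x'_{\textsc{m}_\textsc{p}}$ rather than on emptiness of the whole label-difference set as the paper does, which is an equivalent and slightly tidier bookkeeping choice.
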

The formal proof is in Section \ref{app:proof}.
Intuitively, with its mask covering the patch, a low confidence inconsistent mutant will always be low confidence, no matter how the attacker attacks. If all inconsistent mutants are in low confidence, then the low confidence mutant must exist even under attacks.
%
%
Thm.~\ref{thm:Inconsistent-Max-Min} has the following three special cases SC1--SC3 corresponding to the three purposes outlined in Fig.~\ref{fig:overall-effect}.
Note that $\max \emptyset=-\infty$ and $\min \emptyset=\infty$.

\begin{description}
[style=nextline]
\item[(SC1) Certifying inconsistent samples:]
Attacks with the patch on inconsistent mutants of inconsistent benign samples, which are identified in the last subsection as the focal problem in certifying inconsistent samples, are tackled by  $ [\min{\{f_{\textit{conf}}({x}'_\textsc{m})}$ ${\mid \textsc{m}\in\mathbb{M}_\mathbb{P},f({x}'_\textsc{m})=f({x'})\}}<\tau]]$ (Path \textcircled{4}-\textcircled{5}-\textcircled{1}-\textcircled{8}) for label difference omission and 
$[\{{x}'_\textsc{m}\mid  \textsc{m}\in\mathbb{M}_\mathbb{P},f({x}'_\textsc{m})\neq f({x'})\}\neq \emptyset]$ (Path \textcircled{4}-\textcircled{5}-\textcircled{2}-\textcircled{7}) for observable label differences; 
attacks with the patch on the other (i.e., consistent) mutants are tackled by $[\{{x}'_\textsc{m}\mid  \textsc{m}\in\mathbb{M}_\mathbb{P},f({x}'_\textsc{m})\neq f({x'})\}\neq \emptyset]$ (Path \textcircled{4}-\textcircled{6}-\textcircled{2}-\textcircled{7}).


    \item[(SC2) Certifying consistent samples:] 
Consistent benign samples are a special case to make the antecedent of the implication relation in Thm.~\ref{thm:Inconsistent-Max-Min} hold (which is $[\max \emptyset < \tau$]), 
where the inconsistent sample degenerates into a consistent sample with an empty set of inconsistent mutants.
Thus, SC2 is actually \emph{a special case} of SC1, but not vice versa, and consistent samples are tackled by $[\{{x}'_\textsc{m}\mid \textsc{m}\in\mathbb{M}_\mathbb{P},f({x}'_\textsc{m})\neq f({x'})\}\neq \emptyset]$ (Path \textcircled{4}-\textcircled{6}-\textcircled{2}-\textcircled{7}).

\item[(SC3) Certifying and detecting incorrectly predicted samples:]
Thm.~\ref{thm:Inconsistent-Max-Min} is applicable to certify incorrectly predicted samples, which ensures the detection of all of its patched versions and the incorrectly predicted sample itself if the sample is certified. Since such a sample may have both consistent and inconsistent mutants in general, certifying it needs both conditions and all three paths to support.
Alerting on a certified incorrectly predicted sample is simple by equating $x$ to $x'$ in the relation.
\end{description}

By designing a warning function that follows Thm.~\ref{thm:Inconsistent-Max-Min},
HiCert places attackers in a \emph{dilemma} if they attempt to create a harmful sample $x'$ of any benign sample $x$ with $v(x)=\textit{True}$ and aim to make HiCert silent on their created harmful samples (see Section \ref{app:proof} for more details).


We have conducted a case study (See Section \ref{app:case_study_on_design} for experimental setting) to show the advancement achieved by HiCert, as shown in Fig.~\ref{fig:confidence_min_max}.
{Histograms \textcircled{1} and \textcircled{2} represent HiCert.}
Histogram \textcircled{1} shows the number of inconsistent samples with the largest confidence  (x-axis) among all inconsistent mutants of the same sample, while Histogram \textcircled{2} shows the number of consistent samples with the smallest confidence (x-axis) among all (consistent) mutants of the same sample. 
{Histograms \textcircled{3} and \textcircled{4} represent  PG++}. 
The bars for samples certified by the corresponding defenders (see the labels for the $y$-axis) are displayed in a solid color; otherwise, they are semi-transparent, where the confidence threshold $\tau$ is set to 0.8 for illustration purposes.
Irrespective of any threshold, the maximum confidence among the confidences of all inconsistent mutants of the same samples for all samples spreads over the range [0,1] without a sharp peak, and their central tendency of these data points shown in the histograms is far from the confidence of 1.
HiCert is able to certify a significant number of inconsistent samples, as shown in sub-figure \textcircled{1}.
As a consequence of extending the scope of certified detection to certify inconsistent samples, it also certifies all consistent samples 
as shown in sub-figure \textcircled{2}.
As a comparison,  PG++ cannot certify any inconsistent samples (sub-figure \textcircled{3}) and can only certify a slice of all consistent samples (sub-figure \textcircled{4}). 
Similarly, we also conducted an ablation study by respectively flipping the inequality symbol for HiCert and PG++ (see Section \ref{app:case_study_on_design}).
Both of them are ineffective in certifying inconsistent samples, which shows that modifying the defenders to have the ability to certify both consistent benign samples and inconsistent benign samples (even in part) effectively is nontrivial.

In short, HiCert tackles the certification problem and formulates a common solution for a suite of closely related problems scenarios that previous works cannot handle them all.
The solution is finer in granularity as well.
HiCert achieves the same time complexity as $D_\text{OMA}$ and is sound but incomplete like $D_\text{OMA}$ (also PatchCensor \cite{patchcensor}, ViP \cite{li2022vip}, and PG++ \cite{xiang2021patchguard++}), which is discussed in detail in Section \ref{app:discussion-on-HiCert-design}.


\begin{figure}[tb] 
\centering
\includegraphics[width=0.8\linewidth]{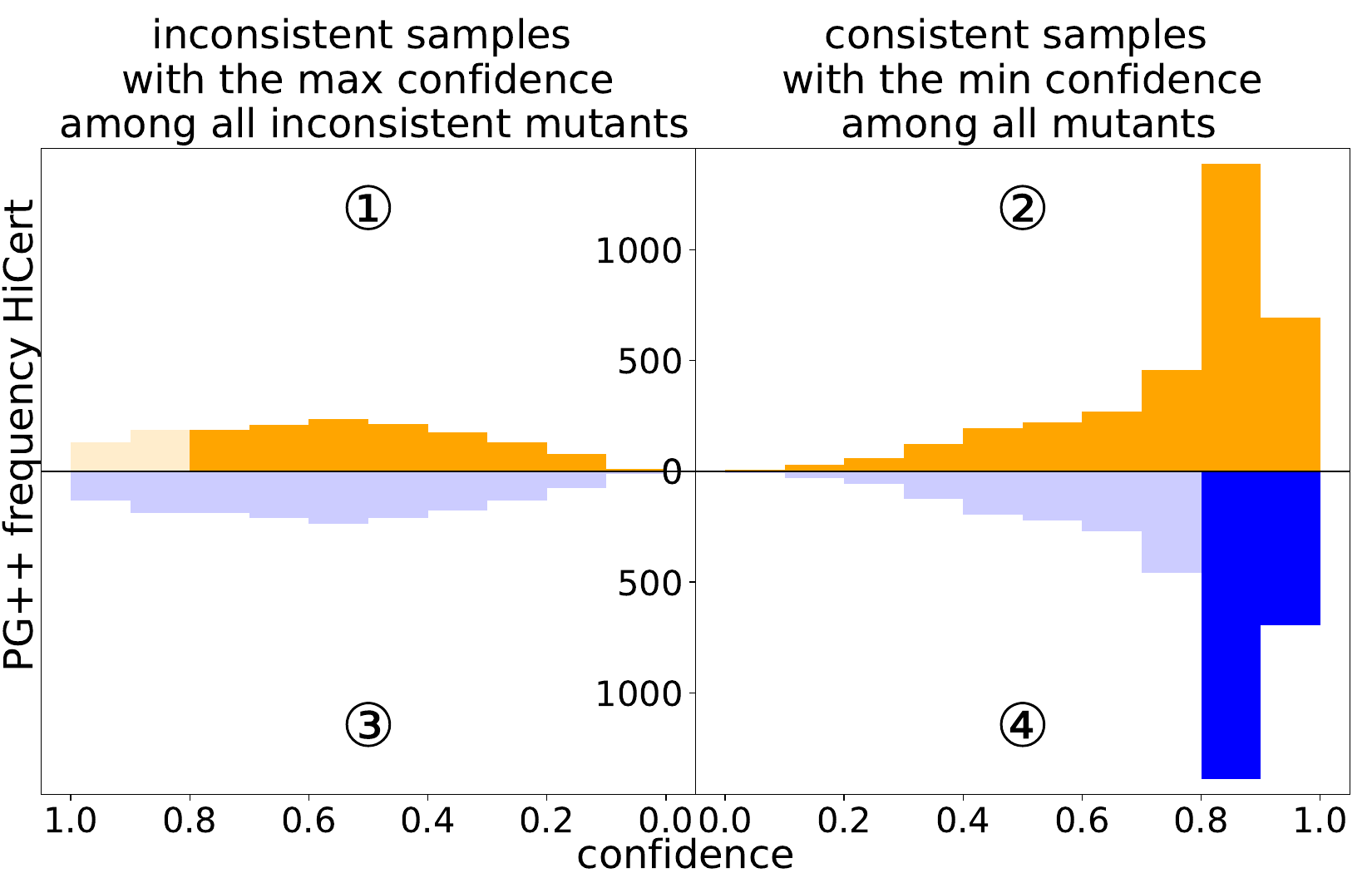}
\caption{
The plots show the maximum and minimum confidences among those mutants of the same sample for those samples out of all samples, as stated in the column headings.
}
\label{fig:confidence_min_max}
\end{figure}

\section{Evaluation}\label{sec:eva}
Our implementation package of HiCert can be found in \cite{HiCert_github}.




\subsection{Research Questions}
We aim to answer the following research questions:
\begin{itemize}
\item[RQ1] How does {HiCert} perform compared to state-of-the-art certified detection defenders against patch attacks?
\item[RQ2] To what extent is {HiCert} effective in defending against a real adversarial patch attack?
\item[RQ3] 
How does {HiCert} perform against stronger attackers in terms of patch sizes?
\end{itemize}
\begin{table}[]
\caption{Clean accuracy of base models for different datasets
}\label{tab:clean_acc}\centering
\begin{tabular}{|c|c|c|c|}
\hline
    & ImageNet & CIFAR100 & GTSRB \\ \hline
MAE (default) & 82.5     & 90.2     & 97.5  \\ \hline
ViT & 81.7     & 92.3     & 98.8  \\ \hline
RN  & 80.4     & 88.8     & 97.9  \\ \hline
\end{tabular}
\end{table}

\newcolumntype{C}[1]{>{\centering\arraybackslash}p{#1}}
\newcolumntype{M}[1]{>{\centering\arraybackslash}m{#1}}

\begin{table*}[]\centering
\caption{Metrics of a defender $D = \langle f, w, v\rangle$ on a dataset $\mathbb{S}$ with each sample $x$ and its true label $y_0$.
Except for $r_{\textit{fa}}$ and $r_{\textit{fs}}$, 
higher values for all other metrics indicate better quality
}\label{tab:metrics}

\begin{tabular}{|M{0.11\linewidth}|M{0.38\linewidth}|M{0.44\linewidth}|}
\hline
Metrics                                    & Formulation & Description \\ \hline
Clean accuracy                             & $acc_{\textit{clean}}=\frac{\mid\{{x}\in\mathbb{S}\mid f({x})=y_0\}\mid}{\mid\mathbb{S}\mid}$        & to evaluate the inherent
classification capability of the base model        \\ \hline \rowcolor{gray!10}\multicolumn{3}{c}{Certification Metrics} \\ \hline
Certified   accuracy                       & $acc_{\textit{cert}}=\frac{\mid\{{x}\in\mathbb{S}\mid f({x})=y_0\land v({x})=\textit{True}\}\mid}{\mid\mathbb{S}\mid}$         & to evaluate the certification ability on
correctly predicted samples        \\ \hline
Certified   ratio                          & $r_{cert}=\frac{\mid\{{x}\in\mathbb{S}\mid v({x})=\textit{True}\}\mid}{\mid\mathbb{S}\mid}$        & to evaluate the certification ability on
all samples        \\ \hline
Certified   ratio for inconsistent samples & $r_{\textit{cert}_{\textit{inc}}}=\frac{\mid\{{x}\in\mathbb{S}\mid v({x})=\textit{True}\land \text{OMA}({x},y_0)=\textit{False}\}\mid}{\mid\{{x}\in\mathbb{S}\mid \text{OMA}({x},y_0)=\textit{False}\}\mid}$        & to evaluate the certification ability on
 inconsistent samples        \\ \hline \rowcolor{gray!10}\multicolumn{3}{c}{Secondary Metrics} \\  \hline
Silent   accuracy                          & $
acc_{\neg w}=\frac{\mid\{{x}\in\mathbb{S}\mid w({x})=\textit{False}\land f({x})=y_0\}\mid}{\mid\{{x}\in\mathbb{S}\mid w({x})=\textit{False}\}\mid}$        & the accuracy on the set of benign samples without warnings triggered (for silent path)       \\ \hline
False alert   ratio                        & $r_{\textit{fa}}=\frac{\mid\{{x}\in\mathbb{S}\mid w({x})=\textit{True}\land f({x})=y_0\}\mid}{\mid\{{x}\in\mathbb{S}\mid f({x})=y_0\}\mid}$         & the fraction of correctly predicted samples for which a defender returns a warning alert         \\ \hline
False silent   ratio                       & $r_{\textit{fs}}=\frac{\mid\{{x}\in\mathbb{S}\mid w({x})=\textit{False}\land f({x})\neq y_0\}\mid}{\mid\{{x}\in\mathbb{S}\mid f({x})\neq y_0\}\mid}$        & the fraction of incorrectly predicted samples for which we do not return an alert        \\ \hline
Defense   success ratio                    & $r_{suc}=\frac{|\{{x}\in\mathbb{S}_{sub}\mid \forall {x}'\in \mathbb{A}^{act}_{\mathbb{P}}({x}), f({x}')\neq y_0\implies w({x}')=\textit{True}\}|}{|\{\mathbb{S}_{sub}\}|}$        &  the proportion of benign samples for which all harmful samples generated by an attacker tool are detected by the defender, where $\mathbb{S}_{sub}$ is a subset of $\mathbb{S}$ used by an actual attacker tool as seed input, $\mathbb{A}^{act}_{\mathbb{P}}({x})$ is a subset of $\mathbb{A}_{\mathbb{P}}({x})$ generated by the actual attacker tool      \\ \hline
\end{tabular}
\end{table*}
\subsection{Experimental Setup}

We adopt ImageNet \cite{deng2009imagenet}, CIFAR100 \cite{krizhevsky2009learning}, and GTSRB \cite{Stallkamp-IJCNN-2011} as our datasets.
We adopt MAE \cite{he2022masked}, Vision Transformer (ViT) \cite{dosovitskiy2021an}, and ResNet (RN) \cite{he2016deep} as the architectures of the base models of defenders (see Table~\ref{tab:clean_acc} for their clean accuracy). We also use the model-agnostic pixel-level strategy to generate the covering mask set following PatchCleanser \cite{xiang2022patchcleanser} and CrossCert \cite{zhou2024crosscert}.
We compare top-performing certified detection defenders implemented in our infrastructure to HiCert (\textbf{HC}):  $\textbf{\textit{D}}_\textit{\text{OMA}}$ (ViP/PatchCensor) and \textbf{PG++} \cite{xiang2021patchguard++}. 
With the same base model and the same masking strategy,
\textbf{ViP} \cite{li2022vip} and PatchCensor (\textbf{PC}) \cite{patchcensor} must share the same certified accuracy and clean accuracy with $D_\text{OMA}$ since each sample $x$ counted by these two metrics satisfies $f(x)= y_0$, and then the condition OMA$(x, f(x)) \land f(x)= y_0$ for ViP and PC is equivalent to the condition OMA$(x, y_0)$ for $D_\text{OMA}$ in certification functions.
We further compare HiCert with more state-of-the-art certified detection defenders, and mark them with the symbol $\star$: ScaleCert (\textbf{SC$_\star$}) \cite{han2021scalecert}, PatchGaurd++ (\textbf{PG++$_\star$}) \cite{xiang2021patchguard++}, Adapted Minority Reports (\textbf{MR+$_\star$}) \cite{patchcensor}, PatchCensor (\textbf{PC$_\star$}) \cite{patchcensor}, ViP (\textbf{ViP$_\star$}) \cite{li2022vip}, and CrossCert (\textbf{CC$_\star$}) \cite{zhou2024crosscert} based on the results reported in the literature. 
%
Our metrics are summarized in Table~\ref{tab:metrics}, and we also collect each combination in certified detection (see Table~\ref{tab:case}) for analysis case by case.

See Section \ref{app:setup} for details of 
datasets, baselines, metrics, and experimental setup for RQs.


\subsection{Experimental Results and Data Analysis}




\subsubsection{Answering RQ1} 
In this section, we compare HC with peer defenders regarding their certification ability.

\paragraph{\textbf{Overall Comparison on Certified Accuracy}} 
Table~\ref{tab:main_eva_results} summarizes the overall results 
in clean accuracy $acc_{\textit{clean}}$ and certified accuracy $acc_{\textit{cert}}$ on
MAE as the base model.

PC$_\star$, ViP$_\star$, and CC$_\star$ were 
the three top-performing defenders in the literature in these two metrics.
They perform comparably with $D_\text{OMA}$/ViP/PC.
PG++$_\star$  used a weaker base model, and if replaced with MAE, it becomes PG++.
PG++ ($\tau = 0.5$) performs comparably with   $D_\text{OMA}$/ViP/PC.

Furthermore, as the threshold $\tau$ increases from 0.5 to 0.9, 
the certified accuracy of HC increases. 
For example, when $\tau = 0.8$, HC's $acc_{\textit{cert}}$ is greater than
$D_\text{OMA}$'s one
on ImageNet, CIFAR100, and GTSRB by 12.7\%--17.4\%,
which are 
0.5\%--5.8\%
lower than the clean accuracy, respectively.
In contrast, the gaps between clean accuracy and certified accuracy {for all other defenders in the table}
are much larger.
The implication is that almost all correctly predicted samples can be certified by HC, lowering the threat of successful adversarial patch attacks to damage the downstream operations.
To avoid overloading readers with repetitive information, 
the results of the defenders in our experiment on all combinations of base models and datasets are summarized in 
Fig.~\ref{fig:vary_patch_size},
where $\tau=0.8$ is used for both PG++ and HC,
and ViP/PC/$D_\text{OMA}$ share the same results.
We will discuss Fig.~\ref{fig:vary_patch_size} in Section~\ref{sec:RQ3}.

\begin{table}[t]
\caption{All eight possible cases based on the combination of three conditions for benign samples in certified detection. $\checkmark$ if the condition is \textit{True}, otherwise \textit{False}.
}\label{tab:case}\centering
\begin{tabular}{|c|c|c|c|c|c|c|c|c|}
\hline
Case   & 1 & 2 & 3 & 4 & 5 & 6 & 7 & 8 \\ \hline
$f(x)=y_0$ & \checkmark & \checkmark                                                & \checkmark & \checkmark &   &                                                  &   &   \\ \hline
$w(x)$   & \checkmark &                                                  & \checkmark &   & \checkmark &                                                  & \checkmark &   \\ \hline
$v(x)$   & \checkmark & \checkmark                                                &   &   & \checkmark & \checkmark                                                &   &   \\ \hline

\end{tabular}
\end{table}

    \begin{table}[]
\caption{The clean accuracy $acc_{clean}$ (Clean) and certified accuracy $acc_{cert}$ (Cert) of certified detection defenders on ImageNet, CIFAR100, and GTSRB, with patch size 32 (2\%), 35 (2.4\%), and 32 (2\%) pixels for the three datasets, respectively.}\label{tab:main_eva_results}\centering
\resizebox{\linewidth}{!}
{
\begin{tabular}{|c|cc|cc|cc|}
\hline
Dataset      & \multicolumn{2}{c|}{ImageNet}                      & \multicolumn{2}{c|}{CIFAR100}                      & \multicolumn{2}{c|}{GTSRB}                         \\ \hline
 Accuracy         & \multicolumn{1}{c|}{Clean}         & Cert          & \multicolumn{1}{c|}{Clean}         & Cert          & \multicolumn{1}{c|}{Clean}         & Cert          \\ \hline
SC$_\star$ \cite{han2021scalecert}           & \multicolumn{1}{c|}{$\oplus$}           & 55.4          & \multicolumn{1}{c|}{$\oplus$}           & $\oplus$           & \multicolumn{1}{c|}{$\oplus$}           & $\oplus$           \\ \hline
PG++$_\star$ ($\tau$ = 0.8)\cite{PatchCensor_arvix}\setfootnotemark\label{beforefirst} & \multicolumn{1}{c|}{62.9}          & 28.0          & \multicolumn{1}{c|}{$\oplus$}           & $\oplus$           & \multicolumn{1}{c|}{$\oplus$}           & $\oplus$           \\ \hline
PG++$_\star$ ($\tau$ = 0.7)\cite{PatchCensor_arvix} & \multicolumn{1}{c|}{62.9}          & 32.0          & \multicolumn{1}{c|}{$\oplus$}           & $\oplus$           & \multicolumn{1}{c|}{$\oplus$}           & $\oplus$           \\ \hline
PG++$_\star$ ($\tau$ = 0.6)\cite{PatchCensor_arvix} & \multicolumn{1}{c|}{62.9}          & 35.5          & \multicolumn{1}{c|}{$\oplus$}           & $\oplus$           & \multicolumn{1}{c|}{$\oplus$}           & $\oplus$           \\ \hline
PG++$_\star$ ($\tau$ = 0.5)\cite{PatchCensor_arvix} & \multicolumn{1}{c|}{62.9}          & 39.0          & \multicolumn{1}{c|}{$\oplus$}           & $\oplus$           & \multicolumn{1}{c|}{$\oplus$}           & $\oplus$           \\ \hline
MR+$_\star$ \cite{PatchCensor_arvix}         & \multicolumn{1}{c|}{75.5}          & 56.3          & \multicolumn{1}{c|}{$\oplus$}           & $\oplus$           & \multicolumn{1}{c|}{96.4}          & 54.7          \\ \hline
PC$_\star$ \cite{PatchCensor_arvix}\setfootnotemark\label{third}           & \multicolumn{1}{c|}{81.8}          & 69.4          & \multicolumn{1}{c|}{$\oplus$}           & $\oplus$           & \multicolumn{1}{c|}{97.1}          & 70.3          \\ \hline
ViP$_\star$ \cite{li2022vip} & \multicolumn{1}{c|}{\textbf{83.7}} & 74.6          & \multicolumn{1}{c|}{$\oplus$}           & $\oplus$           & \multicolumn{1}{c|}{$\oplus$}           & $\oplus$           \\ \hline
CC$_\star$ \cite{zhou2024crosscert}           & \multicolumn{1}{c|}{81.7}          & 64.8          & \multicolumn{1}{c|}{\textbf{92.5}} & 73.2          & \multicolumn{1}{c|}{$\oplus$}           & $\oplus$           \\ \hline
\hline
PG++ ($\tau$ = 0.9) & \multicolumn{1}{c|}{82.5}          & 13.7          & \multicolumn{1}{c|}{90.2}          & 18.8          & \multicolumn{1}{c|}{\textbf{97.5}} & 39.9          \\ \hline
PG++ ($\tau$ = 0.8) & \multicolumn{1}{c|}{82.5}          & 42.4          & \multicolumn{1}{c|}{90.2}          & 52.0          & \multicolumn{1}{c|}{\textbf{97.5}} & 54.1          \\ \hline
PG++ ($\tau$ = 0.7) & \multicolumn{1}{c|}{82.5}          & 51.5          & \multicolumn{1}{c|}{90.2}          & 60.4          & \multicolumn{1}{c|}{\textbf{97.5}} & 61.0          \\ \hline
PG++ ($\tau$ = 0.6) & \multicolumn{1}{c|}{82.5}          & 57.0          & \multicolumn{1}{c|}{90.2}          & 65.4          & \multicolumn{1}{c|}{\textbf{97.5}} & 65.4          \\ \hline
PG++ ($\tau$ = 0.5) & \multicolumn{1}{c|}{82.5}          & 61.4          & \multicolumn{1}{c|}{90.2}          & 69.2          & \multicolumn{1}{c|}{\textbf{97.5}} & 69.0          \\ \hline
$D_\text{OMA}$/PC/ViP          & \multicolumn{1}{c|}{82.5}          & 69.3          & \multicolumn{1}{c|}{90.2}          & 75.0          & \multicolumn{1}{c|}{\textbf{97.5}} & 74.3          \\ \hline
HC ($\tau$ = 0.5)  & \multicolumn{1}{c|}{82.5}          & 77.9          & \multicolumn{1}{c|}{90.2}          & 81.2          & \multicolumn{1}{c|}{\textbf{97.5}} & 80.7          \\ \hline
HC ($\tau$ = 0.6)  & \multicolumn{1}{c|}{82.5}          & 80.0          & \multicolumn{1}{c|}{90.2}          & 84.0          & \multicolumn{1}{c|}{\textbf{97.5}} & 84.2          \\ \hline
HC ($\tau$ = 0.7)  & \multicolumn{1}{c|}{82.5}          & 81.2          & \multicolumn{1}{c|}{90.2}          & 86.6          & \multicolumn{1}{c|}{\textbf{97.5}} & 87.7          \\ \hline
HC ($\tau$ = 0.8)  & \multicolumn{1}{c|}{82.5}          & 82.0          & \multicolumn{1}{c|}{90.2}          & 88.4          & \multicolumn{1}{c|}{\textbf{97.5}} & 91.7          \\ \hline
HC ($\tau$ = 0.9)  & \multicolumn{1}{c|}{82.5}          & \textbf{82.4} & \multicolumn{1}{c|}{90.2}          & \textbf{89.9} & \multicolumn{1}{c|}{\textbf{97.5}} & \textbf{96.6} \\ \hline
\multicolumn{4}{l}{Note: $\oplus$=No data is provided in the literature. }
\end{tabular}
}
\end{table}

\begin{table*}[]
\caption{Certification and secondary metric results on ImageNet with breakdown analysis (see Table~\ref{tab:case}) for patch size of 32 pixels (2\%)
}
\label{tab:cases_imagenet_2}\centering

\begin{tabular}{|cc||D{.}{.}{1}|D{.}{.}{1}|D{.}{.}{1}||D{.}{.}{1}|D{.}{.}{1}|D{.}{.}{1}||D{.}{.}{1}|D{.}{.}{1}|D{.}{.}{1}|D{.}{.}{1}|D{.}{.}{1}|D{.}{.}{1}|D{.}{.}{1}|D{.}{.}{1}|}
\hline
\multicolumn{2}{|c||}{\multirow{2}{*}{Defender}} & \multicolumn{3}{c||}{Certification} & \multicolumn{3}{c||}{Secondary Metrics} & \multicolumn{8}{c|}{Case (in \%)} \\
\cline{3-16}
& & \multicolumn{1}{c|}{$acc_{cert}$} & \multicolumn{1}{c|}{$r_{cert}$} & \multicolumn{1}{c||}{$r_{\textit{cert}_{\textit{inc}}}$} & \multicolumn{1}{c|}{$acc_{\neg w}$} & \multicolumn{1}{c|}{$r_{\textit{fa}}$} & \multicolumn{1}{c||}{$r_{\textit{fs}}$} & \multicolumn{1}{c|}{1} & \multicolumn{1}{c|}{2} & \multicolumn{1}{c|}{3} & \multicolumn{1}{c|}{4} & \multicolumn{1}{c|}{5} & \multicolumn{1}{c|}{6} & \multicolumn{1}{c|}{7} & \multicolumn{1}{c|}{8} \\
\hline
\multicolumn{1}{|c|}{\multirow{5}{*}{PG++}} & $\tau$ = 0.9 & 13.7 & 13.7 & 0.0 & 82.6 & 0.2 & 99.7 & 0.0 & 13.7 & 0.1 & 68.7 & 0.0 & 0.0 & 0.1 & 17.4 \\
\cline{2-16}
\multicolumn{1}{|c|}{} & $\tau$ = 0.8 & 42.4 & 42.4 & 0.0 & 82.7 & 0.7 & 98.3 & 0.0 & 42.4 & 0.5 & 39.6 & 0.0 & 0.0 & 0.3 & 17.2 \\
\cline{2-16}
\multicolumn{1}{|c|}{} & $\tau$ = 0.7 & 51.5 & 51.5 & 0.0 & 82.9 & 1.5 & 95.5 & 0.0 & 51.5 & 1.3 & 29.7 & 0.0 & 0.0 & 0.8 & 54.3 \\
\cline{2-16}
\multicolumn{1}{|c|}{} & $\tau$ = 0.6 & 57.0 & 57.0 & 0.0 & 83.6 & 3.1 & 89.8 & 0.0 & 57.0 & 2.5 & 22.9 & 0.0 & 0.0 & 1.8 & 15.7 \\
\cline{2-16}
\multicolumn{1}{|c|}{} & $\tau$ = 0.5 & 61.4 & 61.4 & 0.0 & 84.7 & 5.6 & 80.3 & 0.0 & 61.4 & 4.6 & 16.5 & 0.0 & 0.0 & 3.4 & 14.1 \\
\hline
\rowcolor{gray!10}
\multicolumn{2}{|l||}{$D_\text{OMA}$} & 69.3 & 69.3 & 0.0 & 91.3 & 16.0 & 37.9 & 0.0 & 69.3 & 13.2 & 0.0 & 0.0 & 0.0 & 10.9 & 6.6 \\
\hline
\multicolumn{1}{|c|}{\multirow{5}{*}{HC}} & $\tau$ = 0.5 & 77.9 & 81.6 & 39.9 & 94.3 & 25.6 & 21.3 & 16.5 & 61.4 & 4.6 & 0.0 & 3.7 & 0.0 & 10.1 & 3.7 \\
\cline{2-16}
\multicolumn{1}{|c|}{} & $\tau$ = 0.6 & 80.0 & 86.0 & 54.3 & 95.5 & 30.9 & 15.5 & 22.9 & 57.0 & 2.5 & 0.0 & 6.0 & 0.0 & 8.8 & 2.7 \\
\cline{2-16}
\multicolumn{1}{|c|}{} & $\tau$ = 0.7 & 81.2 & 90.1 & 67.8 & 96.6 & 37.6 & 10.3 & 29.7 & 51.5 & 1.3 & 0.0 & 8.9 & 0.0 & 6.8 & 1.8 \\
\cline{2-16}
\multicolumn{1}{|c|}{} & $\tau$ = 0.8 & 82.0 & 93.8 & 79.8 & 97.5 & 48.6 & 6.1 & 39.6 & 42.4 & 0.5 & 0.0 & 11.9 & 0.0 & 4.6 & 1.1 \\
\cline{2-16}
\multicolumn{1}{|c|}{} & $\tau$ = 0.9 & 82.4 & 97.5 & 92.0 & 98.6 & 83.4 & 1.3 & 68.7 & 13.7 & 0.1 & 0.0 & 15.2 & 0.0 & 2.1 & 0.2 \\
\hline
\end{tabular}

\end{table*}

To facilitate further investigation, we performed a detailed analysis on ImageNet samples with the patch size 32 pixels, summarized in Table~\ref{tab:cases_imagenet_2}.
\paragraph{\textbf{Overall Precision in Certification}}
The certification column in Table~\ref{tab:cases_imagenet_2} shows that HC achieves higher certified accuracy $acc_{cert}$, certified ratio $r_{cert}$, and certified ratio for inconsistent samples $r_{cert_{inc}}$ than the peer defenders.
$D_\text{OMA}$ and PG++ are 0 in $r_{cert_{inc}}$ since they cannot certify any inconsistent samples by their theories. 
HC has a strong ability to certify these samples with $r_{cert_{inc}}$ in 39.9\%--92.0\%.
Also, the results in $acc_{cert}$ and $r_{cert}$ for each of PG++ and $D_\text{OMA}$ (PC, ViP) in the table are identical (no difference in number of samples) because these defenders are highly ineffective on incorrectly predicted samples.
We find that all
incorrectly predicted samples under the setting in Table~\ref{tab:cases_imagenet_2} are inconsistent samples, which is a major certification target by the design of HC.
HC demonstrates a growing difference between $r_{cert}$ and $acc_{cert}$ (from 3.7\% to 15.1\%) as $\tau$ increases, where $acc_{cert}$ itself is increasing at a slower pace than $r_{cert}$, highlighting HC's effectiveness in certifying incorrectly predicted samples and further in inconsistent samples.

\paragraph{\textbf{Analysis with Secondary Metrics}}
HC shows higher silent accuracy $acc_{\neg w}$ (94.3\%--98.6\%) compared to PG++ (82.6\%--84.7\%) and $D_\text{OMA}$ (91.3\%), indicating higher reliability 
when a detector keeps silent in Table~\ref{tab:cases_imagenet_2}. 
This aligns with HC's design rationale of offering correctness and robustness for the samples passing through undetected: incorrectly predicted samples are effectively filtered out by issuing a warning. 

Readers may wonder whether HC relies on producing a relatively excessive number of indiscriminate warnings to reject harmful samples. 
On the one hand, PG++ abstains from the warning on many ambiguous samples, decreasing the false alert ratio $r_{\textit{fa}}$ from 16.0\% to 0.2\%--5.6\%;
however, its certified ratio $r_{cert}$ is suppressed at a low level, as stated above, and its false silent ratio $r_{\textit{fs}}$ becomes pretty high (80.3\%--99.7\%).
Indeed, tightening the warning criterion produces more false alerts:
As shown in Table~\ref{tab:cases_imagenet_2}, for example, when the hyperparameter $\tau$ = 0.8, HC maintains 61.2\% ($=42.4/69.3$) of all samples that $D_\text{OMA}$ can make certified and silent (grouped as Case 2 by $D_\text{OMA}$), where $r_{\textit{fa}}$ increases from 16.0\% to 48.6\%.
At the same time, HC produces a very small proportion of samples are incorrectly predicted, uncertified, and silent (Case 8), only 16.7\% ($=1.1/6.6$) of those offered by $D_\text{OMA}$, where $r_{\textit{fs}}$ also decreases from 37.9\% to 6.1\%, which makes HC a much more robust choice than $D_\text{OMA}$ in addressing the adversarial patch issue in safety-critical scenarios.

\paragraph{\textbf{Breakdown of Cases on ImageNet}}
%
To understand the tradeoff that HC made and its implications, we analyze the proportions of samples in all eight cases (see Table~\ref{tab:case}, where $y_0$ is the true label for input $x$, $w()$ and $v()$ are respectively the warning function and the certification function) shown in the case column in Table~\ref{tab:cases_imagenet_2}.
The detailed breakdowns of samples in PG++ and HC demonstrate how their philosophies differ and how HC achieves state-of-the-art performance in detail.
    
First, when their $\tau$ values are equal, their sample proportions are the same in two cases (Cases 2 and 3).
This is because when $f(x)=y_0$, the condition of Case 2 ($v(x)\land \neg w(x)$) for PG++ and HC is reduced to the same condition, so does $\neg v(x)\land w(x)$ (Case 3).
{They are also the targets of $D_\text{OMA}$.}
\footnotetext[\getrefnumber{beforefirst}]{The main tables (Table~1 and Table~2) of the published version of PatchCensor \cite{patchcensor} have results in different metrics but marked with the same names, which are rectified in their updated white paper \cite{PatchCensor_arvix}. Therefore, we adopt the updated ones from \cite{PatchCensor_arvix} in this work. }
\footnotetext[\getrefnumber{third}]{Results on GTSRB of PC$_\star$ from \cite{PatchCensor_arvix} in their main table (Table 2) violate that in their detailed result table (Table 7). We adopt the latter one.}

Second, even though Case 1 for HC and Case 4 for PG++ refer to the same set of samples (i.e., the condition of $v(x)\land w(x)$ in HC is equivalent to the condition of $\neg v(x)\land \neg w(x)$ in PG++),
HC and PG++ offer drastically different consequences:
HC warns and certifies these samples (guiding these samples in Case 1 and all their harmful samples to the alert path in Fig. \ref{fig:application}), but
PG++ abstains from providing both warning and certification (guiding these samples in Case 4 to the silent path in Fig. \ref{fig:application} and has no guiding effects on harmful samples).
%

Last but not least, HC and PG++ behave differently on incorrectly predicted samples. 
HC proactively warns samples (in Cases 5 and 7) and further certifies some (these in Case 5) with a warning guarantee on their harmful samples.
PG++ frequently abstains (Case 8) with the ratio of Case 8 to the sum of Case 7 and Case 8 ranging from 80.6\% to 99.4\%  for the five $\tau$ values.
Both get zero in Case 6 ---
incorrectly predicted ImageNet samples with strong support for the true label are extremely difficult to find.
$D_\text{OMA}$ 
always abstains from certifying any benign samples with inconsistent mutants,
including those correctly predicted (some in Case 3) and especially those incorrectly predicted (all in Cases 7 and 8), unlike HC.
Regarding these samples in Case 3 of $D_\text{OMA}$,
HC gradually transfers them to Case 1 by checking the confidence of mutants (e.g., only 0.5\% samples left in Case 3 when $\tau=0.8$), which is certified and warned.
{As two sides of the same coin, HC also moves certified samples in Case 2 of $D_\text{OMA}$ to its Case 1,
which increases $r_{\textit{fa}}$, reducing the number of samples originally passed to the downstream operations but protecting them from the harmful samples of those in Case 3 of $D_\text{OMA}$.}

For inconsistent samples in Cases 7 and 8,
HC gradually transfers them to Case 5 
(e.g., 67.6\% incorrectly predicted samples are certified and warned when $\tau=0.8$, which is 0 in $D_\text{OMA}$);
note that these samples are incorrectly predicted, and those in Case 8 are without warning.
Doing so, HC not only certifies more samples but also increases the silent accuracy $acc_{\neg w}$ and decreases the false silent ratio $r_{\textit{fs}}$.

We also summarize the results on ImageNet with ViT and RN for the same patch size 32 pixels (2\%) in Section \ref{app:other_result}, whose observations are similar to those in MAE.

\textbf{Summary}: HiCert achieves a new state-of-the-art performance in terms of certified accuracy and certified ratio (both in general and for inconsistent samples) on all three datasets.
\begin{figure*}[t] 
\centering
\includegraphics[width=0.75\linewidth]{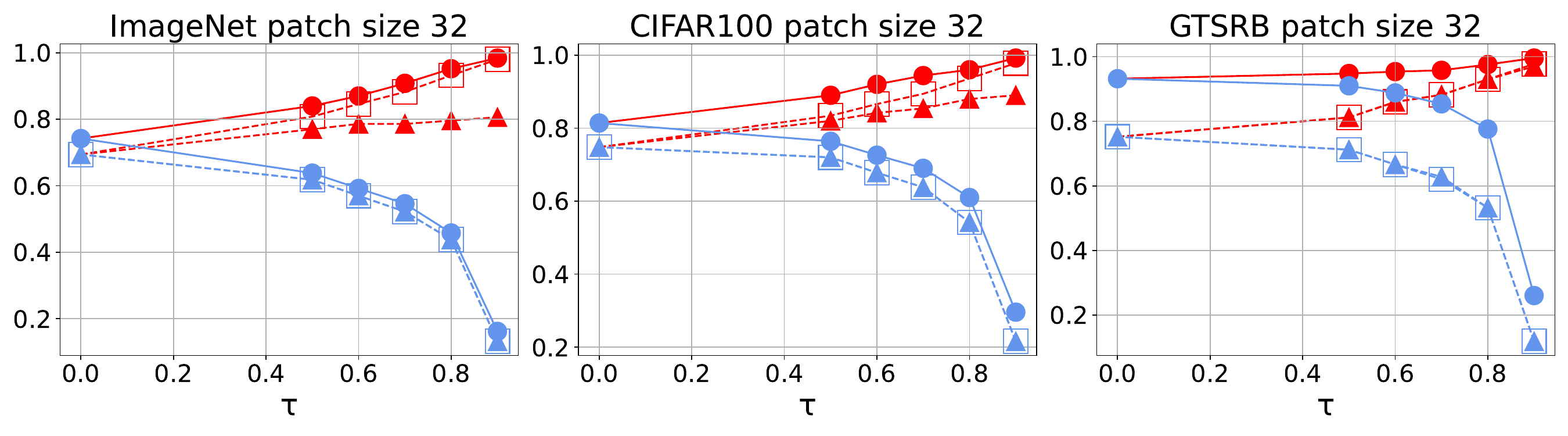}
\includegraphics[width=0.75\linewidth]{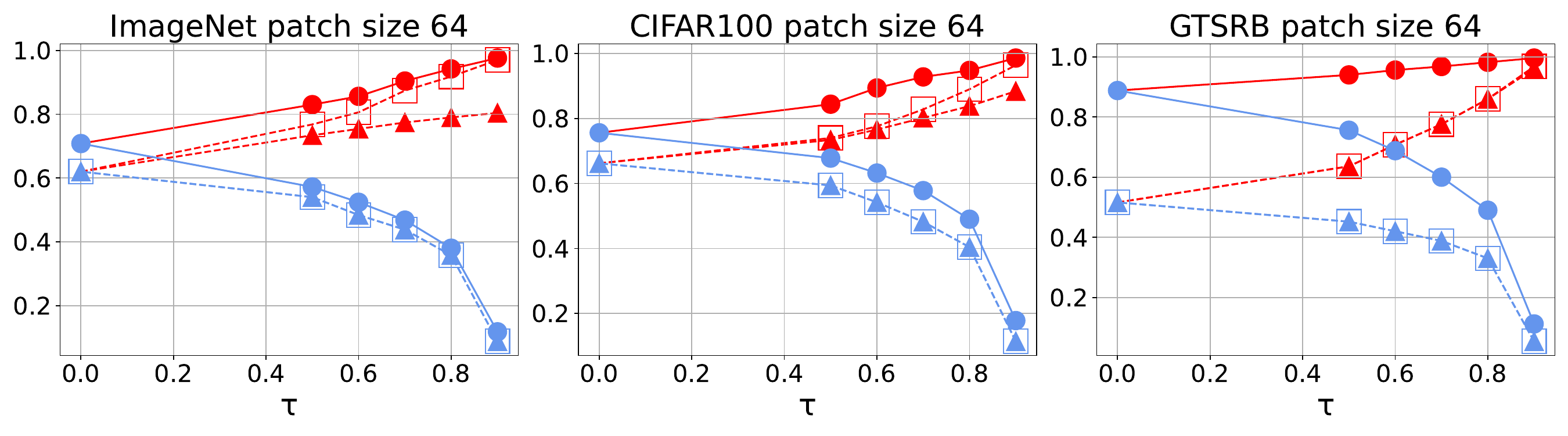}
\includegraphics[width=0.75\linewidth]{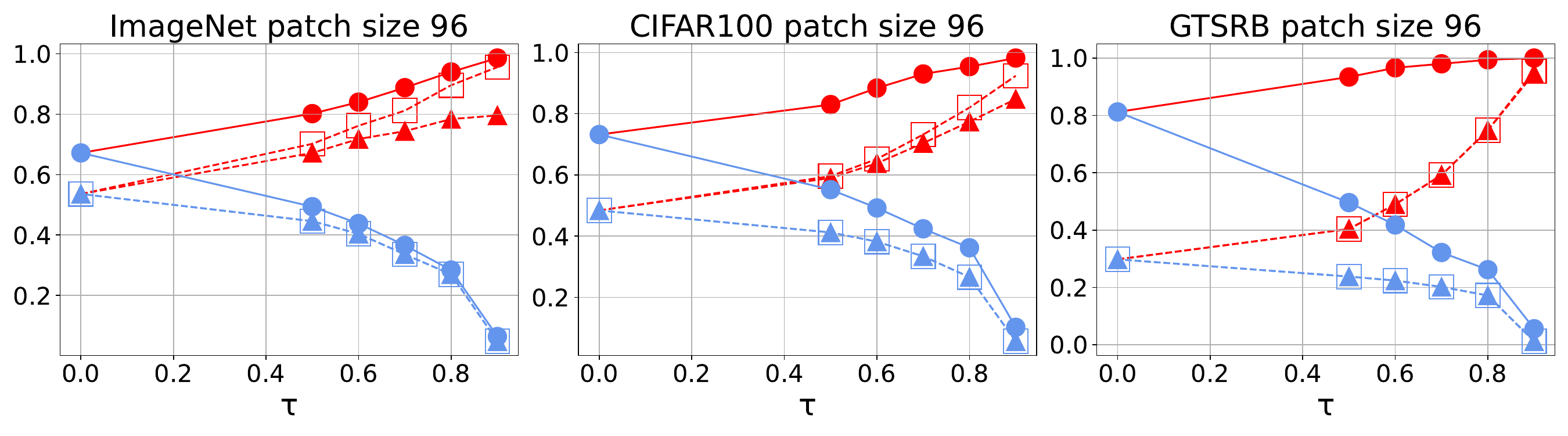}
\includegraphics[width=0.9\linewidth]{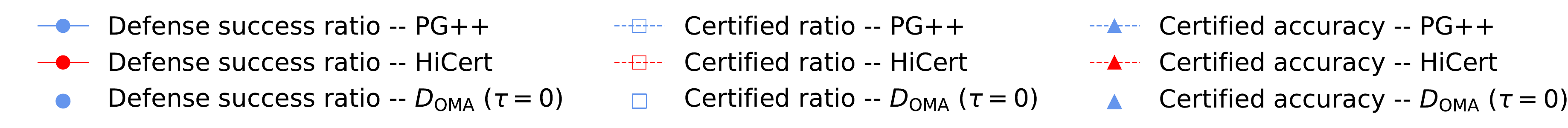}
\caption{
Actual attacks vs. theoretical guarantee.
Each plot shows six lines: three for PG++ and three for HC, representing 
defense success ratio (solid line), certified ratio (dashed line with hollow markers), and certified accuracy (dashed line with solid markers).
Each line has six markers, where the one shared by HC and PG++ at $\tau=0$ is the result of $D_\text{OMA}$, and the other five are the corresponding results for HC/PG++ with $\tau \in [0.5, 0.9]$ ($x$-axis).
}

\label{fig:real_attack}
\end{figure*}

\subsubsection{Answering RQ2}\label{sec:attack}
In this section, we study the defense ability of defenders through conducting an actual attack.

\paragraph{\textbf{Overall Comparison on Defense Success Ratio}}
Fig.~\ref{fig:real_attack} summarizes the defense success ratio $r_{suc}$ against the actual adversarial patch attack (empirical) and the certified accuracy $acc_{cert}$ and certified ratio $r_{cert}$ (theoretical) of different defenders with MAE.
To our knowledge, this paper is the first work to perform a real adversarial patch attack on certified detection defenders.
Every $r_{suc}$ is larger than the corresponding $r_{cert}$ of each defender, as expected.

The overall results in Fig.~\ref{fig:real_attack} show the strong empirical defense ability of HC and a significantly larger security risk of $D_\text{OMA}$ (including PC and ViP) and PG++.
On ImageNet with all three patch sizes, the attacker tool can successfully attack $D_\text{OMA}$ with
more than a quarter of benign samples  (where the gap between its blue circle markers and the ceiling (the line of $y=1$) is still observably large), which appears dangerous for safety-critical downstream operations.
HC largely mitigates this problem by significantly reducing the gap (e.g., the proportion of benign samples that can be successfully attacked on HC is always less than a fifth as shown in all sub-plots of Fig.~\ref{fig:real_attack}) even if the patch size is larger --- for example, HC with $\tau=0.8$ is 94\% in the defense success ratio, while $D_\text{OMA}$ is only 67.2\% and PG++ with $\tau=0.8$ are 6.4\%--49.4\%.
We also observe a similar trend on CIFAR100 and GTSRB, where the proportion of samples that failed to be defended is 18.6\%--27.7\% on CIFAR100 and 6.8\%--18.8\% on GTSRB
for $D_\text{OMA}$ in all patch sizes, and the gap between the red {dot markers of HC and the ceiling is significantly smaller than the other two defenders in all cases.}

\begin{figure*}[]
\caption{
The table in the figure shows  $acc_{cert}$ when the patch size is 16. The nine plots show $acc_{cert}$ of HC ($\tau=0.8$), PG++ ($\tau=0.8$), and $D_\text{OMA}$ on ImageNet, CIFAR100, and GTSRB on the base models MAE, ViT, and RN, respectively (from top to bottom and from left to right) relative to the certified accuracy  $acc_{cert}$ of the same defender on the same dataset when the patch size is 16.
}
{
\vspace{3ex}
\setlength{\tabcolsep}{1pt}
\begin{minipage}{\linewidth}
\resizebox{\linewidth}{!}
{
\begin{tabular}{|c|ccccccccc|ccccccccc|ccccccccc|}
\hline
Model    & \multicolumn{9}{c|}{MAE}                                                                                                                                                                                                             & \multicolumn{9}{c|}{ViT}                                                                                                                                                                                                             & \multicolumn{9}{c|}{RN}                                                                                                                                                                                                              \\ \hline
Dataset  & \multicolumn{3}{c|}{ImageNet}                                                     & \multicolumn{3}{c|}{CIFAR100}                                                     & \multicolumn{3}{c|}{GTSRB}                                   & \multicolumn{3}{c|}{ImageNet}                                                     & \multicolumn{3}{c|}{CIFAR100}                                                     & \multicolumn{3}{c|}{GTSRB}                                   & \multicolumn{3}{c|}{ImageNet}                                                     & \multicolumn{3}{c|}{CIFAR100}                                                     & \multicolumn{3}{c|}{GTSRB}                                   \\ \hline
Defender & \multicolumn{1}{c|}{HC}   & \multicolumn{1}{c|}{$D_\text{OMA}$}  & \multicolumn{1}{c|}{PG++} & \multicolumn{1}{c|}{HC}   & \multicolumn{1}{c|}{$D_\text{OMA}$}  & \multicolumn{1}{c|}{PG++} & \multicolumn{1}{c|}{HC}   & \multicolumn{1}{c|}{$D_\text{OMA}$}  & PG++ & \multicolumn{1}{c|}{HC}   & \multicolumn{1}{c|}{$D_\text{OMA}$}  & \multicolumn{1}{c|}{PG++} & \multicolumn{1}{c|}{HC}   & \multicolumn{1}{c|}{$D_\text{OMA}$}  & \multicolumn{1}{c|}{PG++} & \multicolumn{1}{c|}{HC}   & \multicolumn{1}{c|}{$D_\text{OMA}$}  & PG++ & \multicolumn{1}{c|}{HC}   & \multicolumn{1}{c|}{$D_\text{OMA}$}  & \multicolumn{1}{c|}{PG++} & \multicolumn{1}{c|}{HC}   & \multicolumn{1}{c|}{$D_\text{OMA}$}  & \multicolumn{1}{c|}{PG++} & \multicolumn{1}{c|}{HC}   & \multicolumn{1}{c|}{$D_\text{OMA}$}  & PG++ \\ \hline
$acc_{cert}$     & \multicolumn{1}{c|}{82.1} & \multicolumn{1}{c|}{71.7} & \multicolumn{1}{c|}{45.1} & \multicolumn{1}{c|}{89.4} & \multicolumn{1}{c|}{78.8} & \multicolumn{1}{c|}{57.1} & \multicolumn{1}{c|}{93.0} & \multicolumn{1}{c|}{82.2} & 66.2 & \multicolumn{1}{c|}{78.1} & \multicolumn{1}{c|}{68.9} & \multicolumn{1}{c|}{62.1} & \multicolumn{1}{c|}{87.4} & \multicolumn{1}{c|}{81.7} & \multicolumn{1}{c|}{76}   & \multicolumn{1}{c|}{80.7} & \multicolumn{1}{c|}{59.1} & 47.5 & \multicolumn{1}{c|}{76.7} & \multicolumn{1}{c|}{61.1} & \multicolumn{1}{c|}{51.1} & \multicolumn{1}{c|}{80.5} & \multicolumn{1}{c|}{62.9} & \multicolumn{1}{c|}{49.8} & \multicolumn{1}{c|}{80.6} & \multicolumn{1}{c|}{42.8} & 29.8 \\ \hline
\end{tabular}
}
\end{minipage}
}
\begin{minipage}{\linewidth}
\label{fig:vary_patch_size}
\vspace{2ex}
\centering
\includegraphics[width=0.32\linewidth]{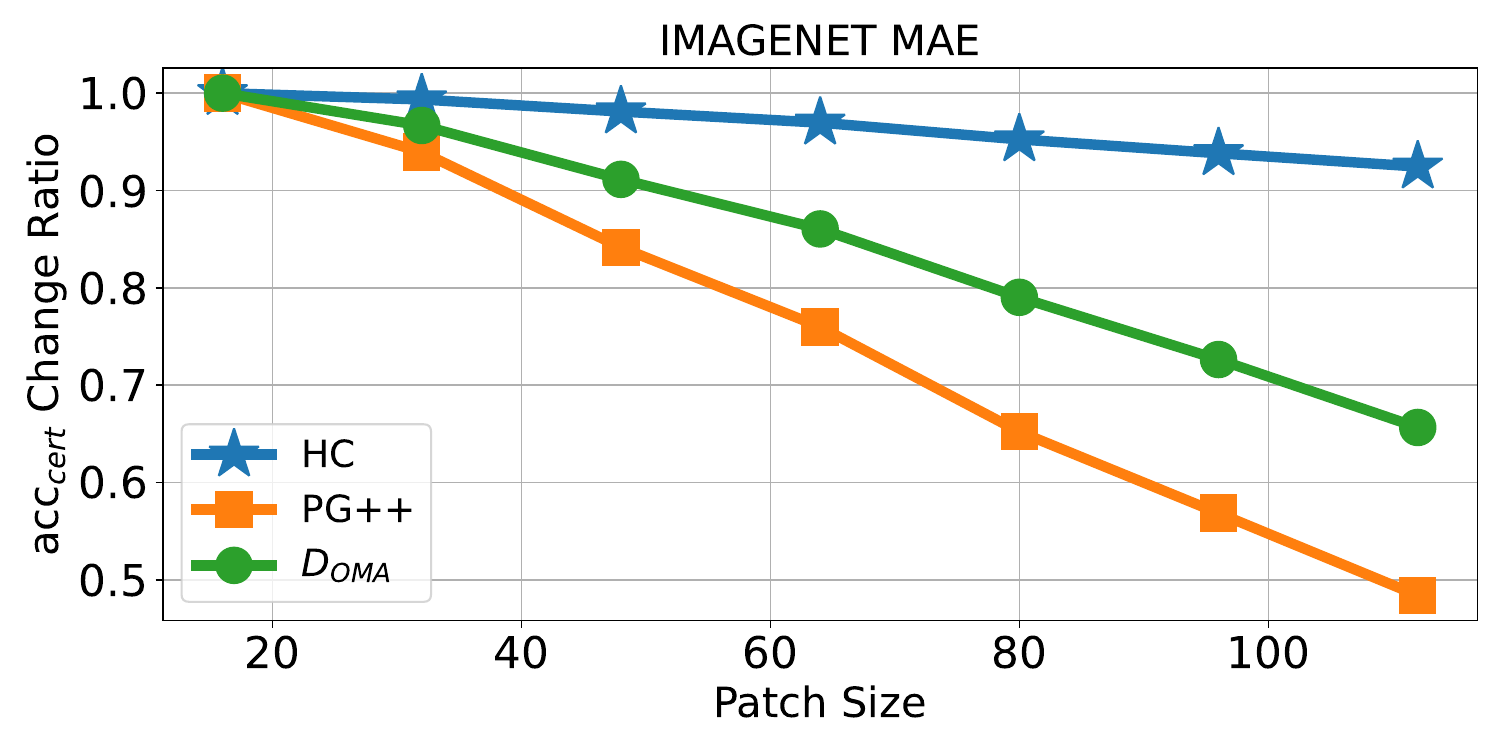}
\includegraphics[width=0.32\linewidth]{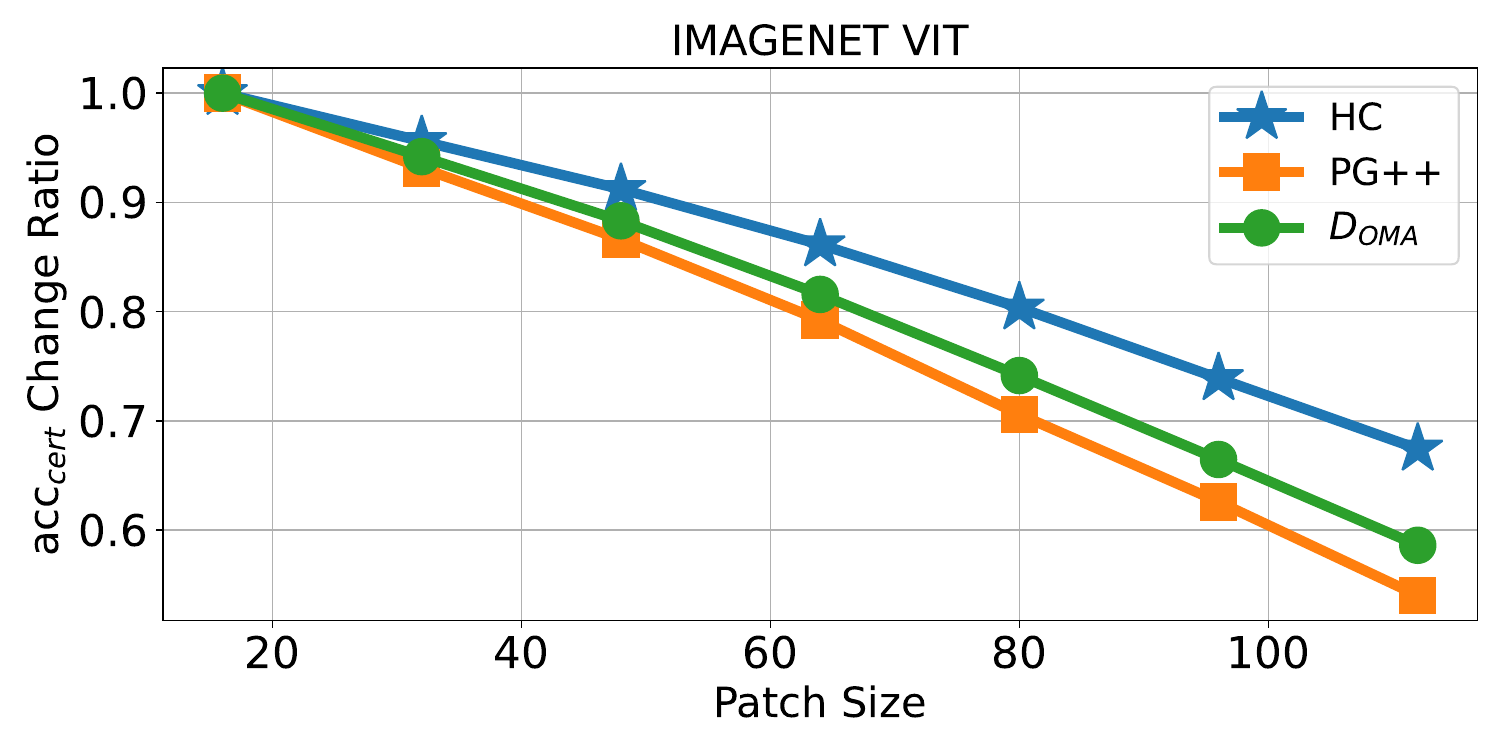}
\includegraphics[width=0.32\linewidth]{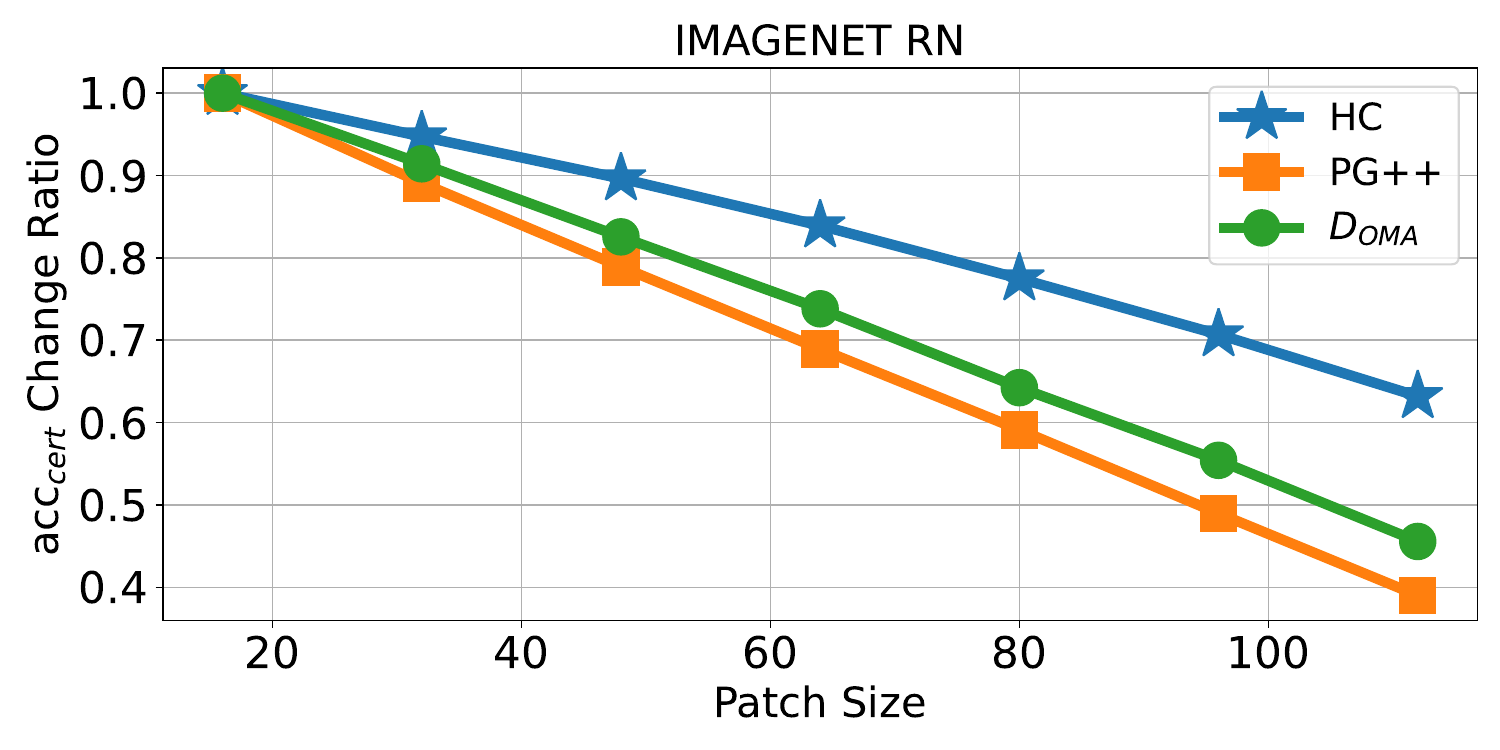}
\includegraphics[width=0.32\linewidth]{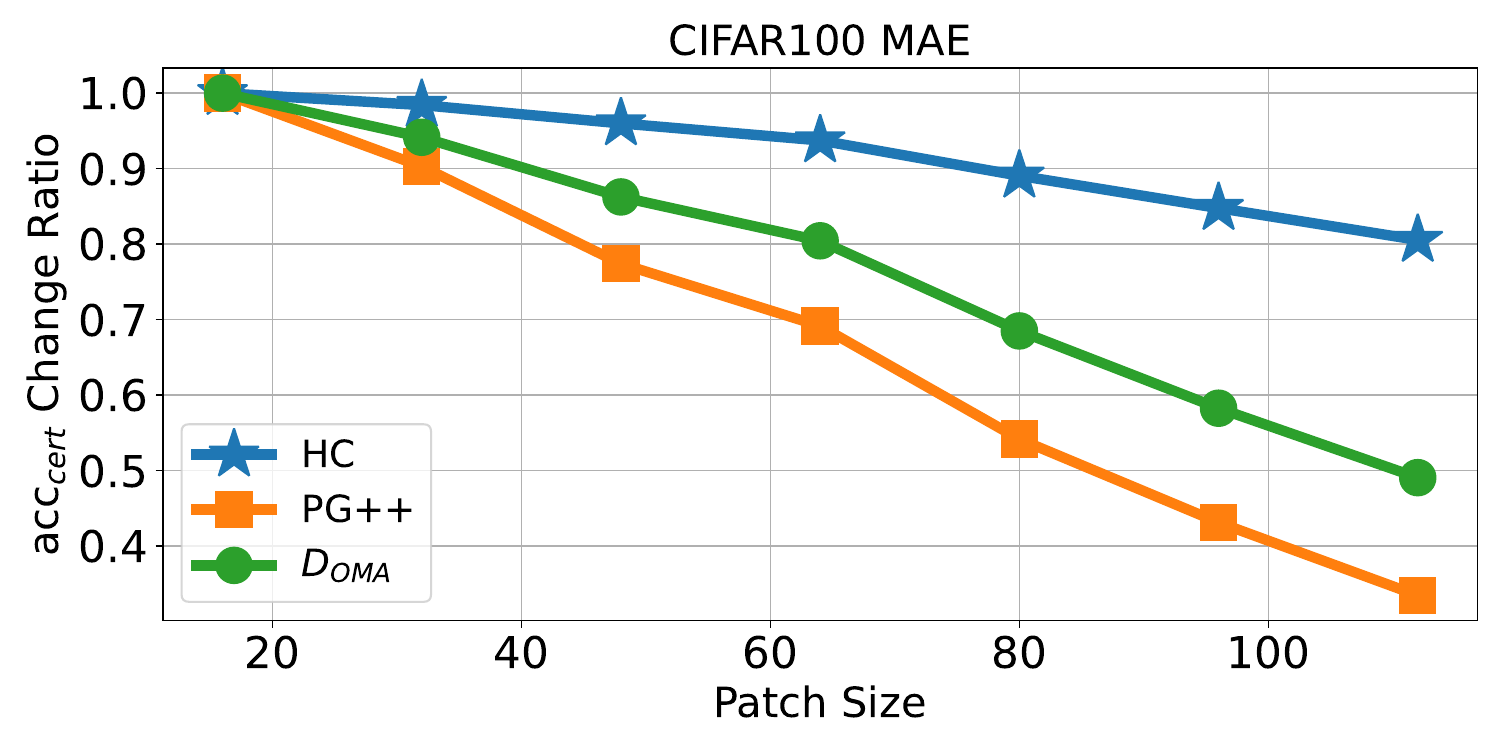}
\includegraphics[width=0.32\linewidth]{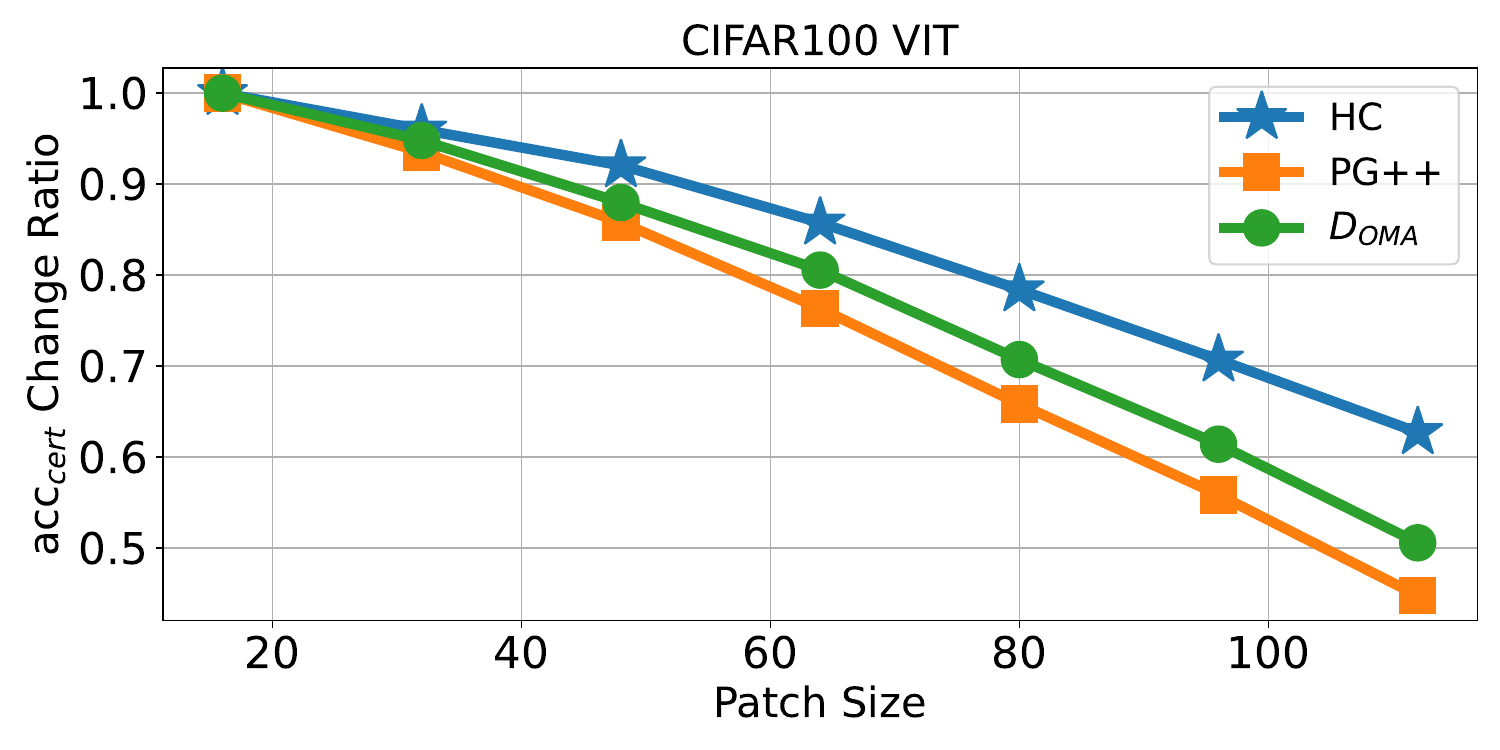}
\includegraphics[width=0.32\linewidth]{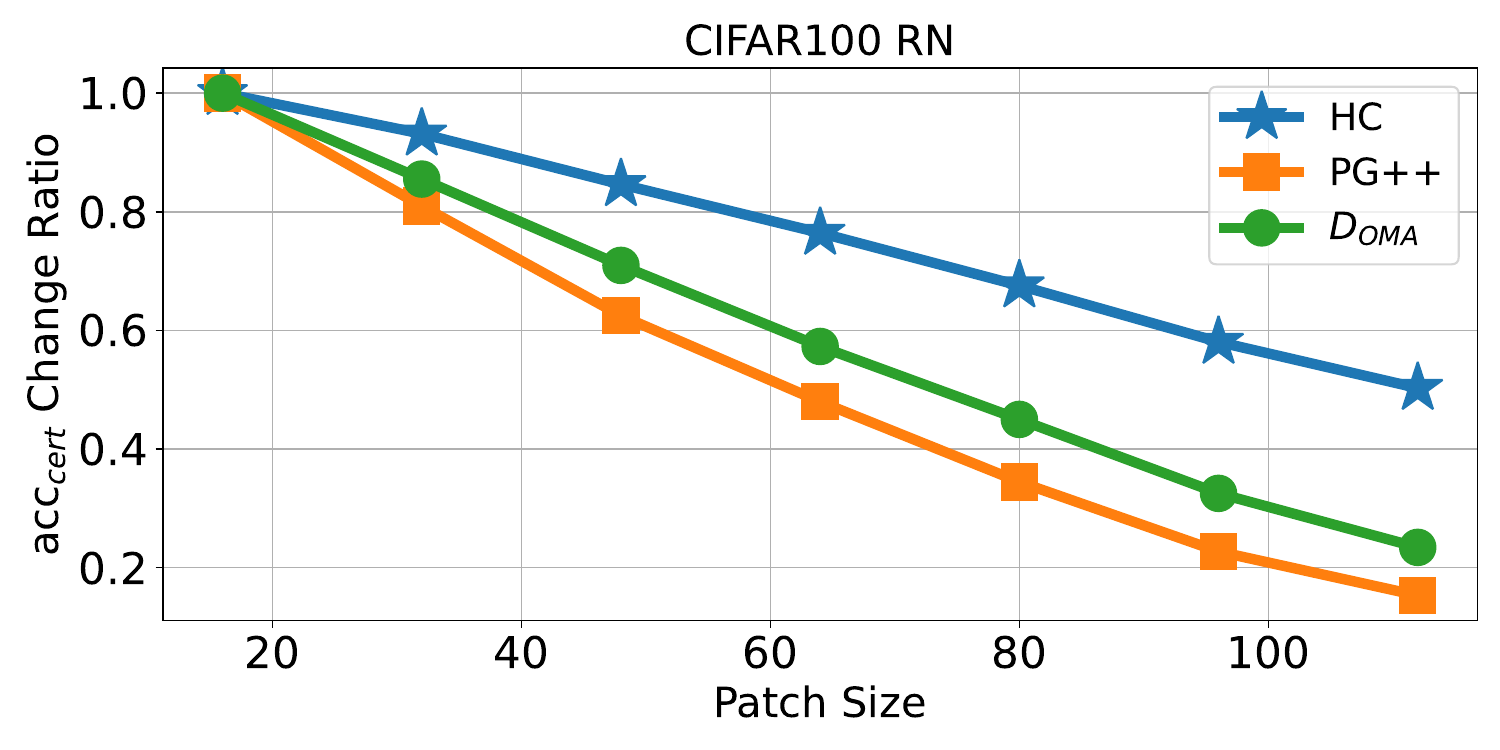}
\includegraphics[width=0.32\linewidth]{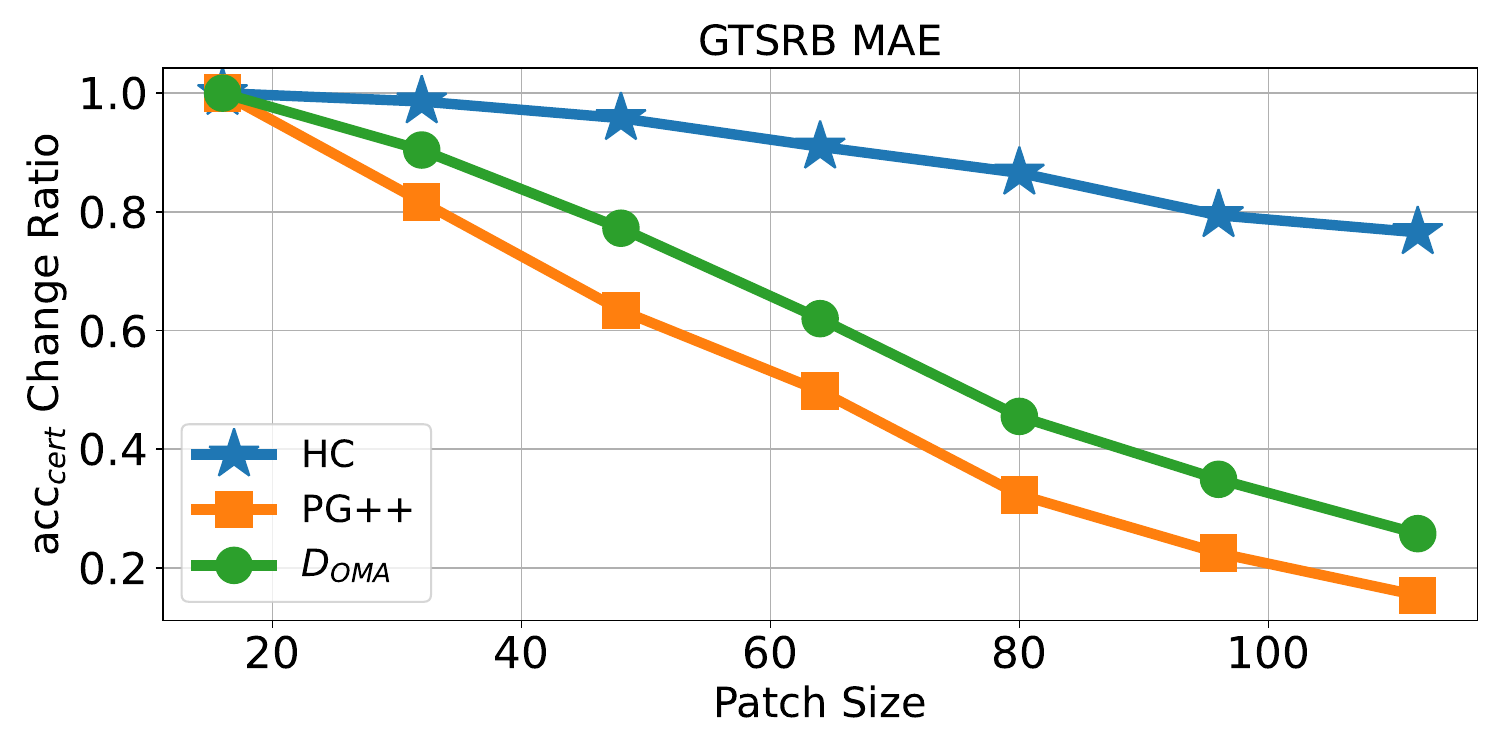}
\includegraphics[width=0.32\linewidth]{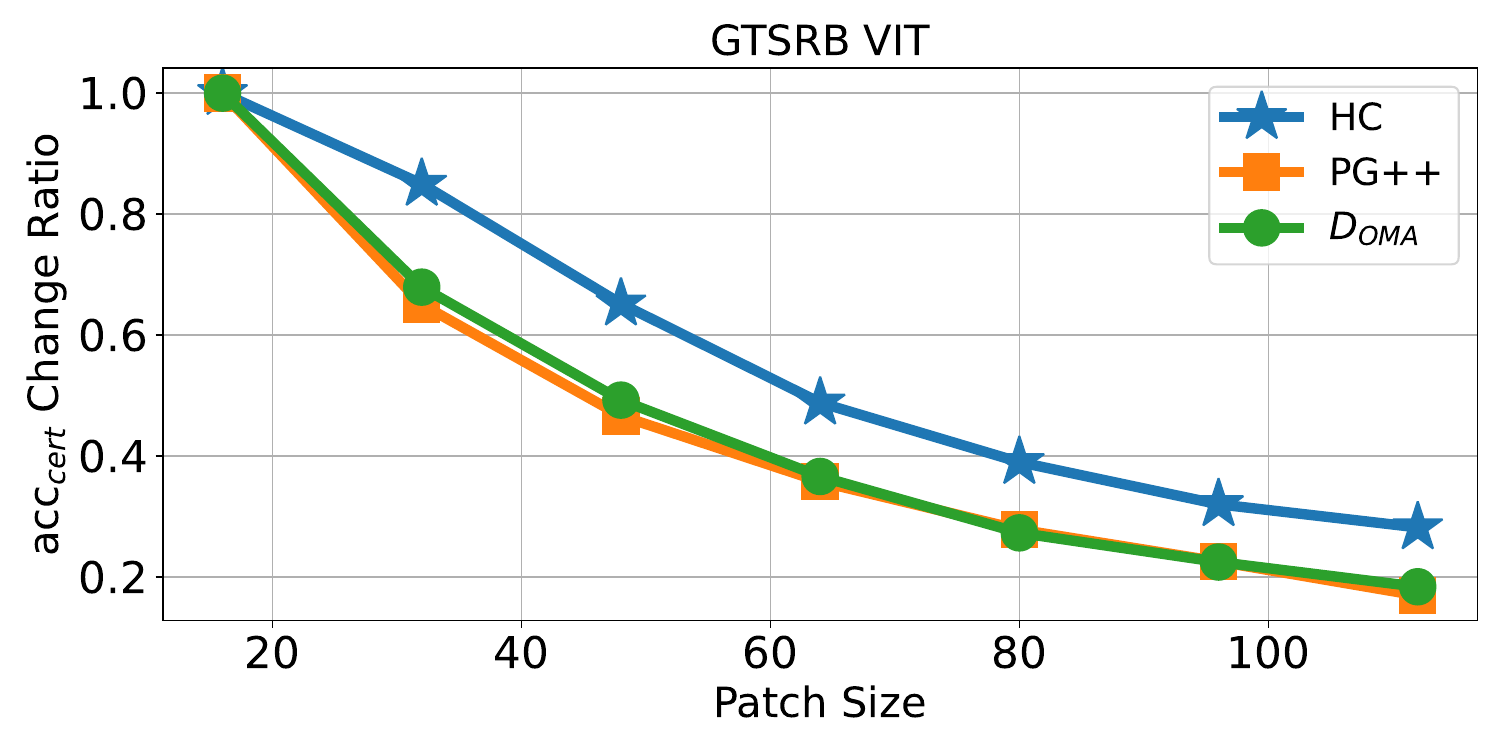}
\includegraphics[width=0.32\linewidth]{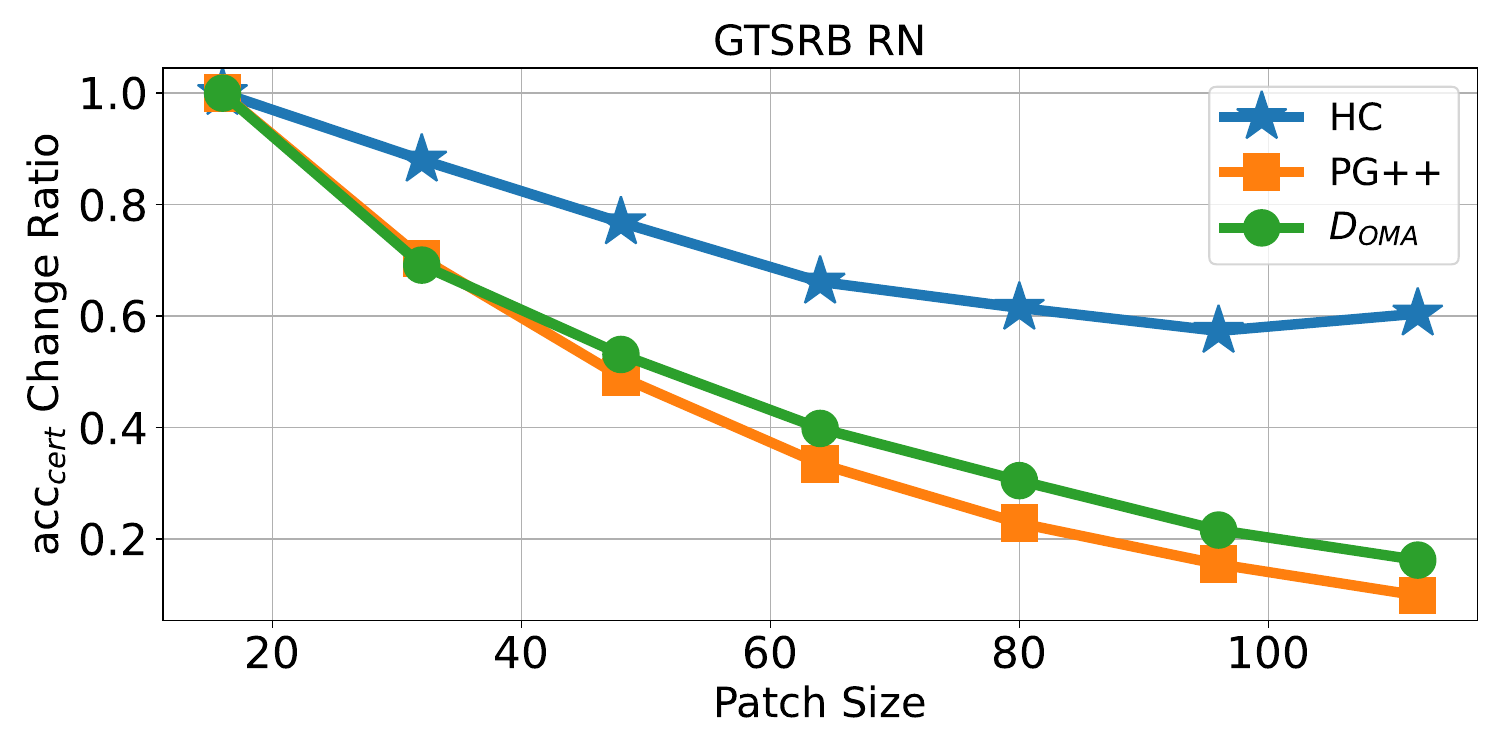}

\label{example}
\end{minipage}  
\end{figure*}

\paragraph{\textbf{Certified Ratio vs. Defense Success Ratio}}
As datasets vary from GTSRB (simple) to CIFAR100 and ImageNet (complex), the curves for certified ratio $r_{cert}$ increasingly approximate those for defense success ratio $r_{suc}$ for PG++, $D_\text{OMA}$, and HC, and
 the gap between each corresponding pair of the curves is much smaller for the more complex ImageNet dataset, suggesting that the theory behind these masking-based certified detection defenders is more applicable on more complex and real-life datasets like ImageNet. This trend is also observed as the patch size decreases from 96 to 32, indicating tighter approximations with smaller patch sizes.



\paragraph{\textbf{Comparison on Accurate Base Models}}
The base models are accurate on GTSRB, correctly predicting almost all benign samples (see Table \ref{tab:clean_acc}).
This allows indirect reviews on the effect of ${D_\text{OMA}}$ checking (designed to certify such samples).
However, the certified accuracy $acc_{cert}$ for $D_\text{OMA}$ deteriorates significantly as the patch size increases.
HC inherits this limitation.
Its ability to certify inconsistent samples improves the situation as $\tau$ increases, and its gap between certified accuracy $acc_{cert}$ and certified ratio $r_{suc}$ remains noticeable even at $\tau = 0.8$ with a patch size of 96. 
PG++'s gap also decreases but at the cost of a much lower defense success ratio.
The result indicates that ${D_\text{OMA}}$ checking for correctly predicted samples on an accurate base model incurs non-trivial issues compared to a less accurate base model for real-life datasets like ImageNet.
{Besides, the clean accuracy for ${D_\text{OMA}}$ generally decreases from GTSRB (simpler dataset with a quite accurate base model) to CIFAR and ImageNet (more complex datasets with less accurate base models), but, the respective certified accuracy increases from GTSRB to CIFAR and ImageNet across different patch sizes. It seems that the difference is narrowing (instead of widening) as the model becomes much less accurate and the dataset becomes more real.}
These observations challenge the conventional wisdom for $D_{\textit{OMA}}$ checking that certification could focus on correctly predicted consistent samples {(evident by the pivot role of the $D_{\textit{OMA}}$ condition in their certification-warning designs and the emphasis on measuring on correctly predicted benign samples in their published evaluations) and leave the base model to improve its clean accuracy to address other issues (hopefully improving the proportion of benign samples they can certify)}, aligning with HiCert's motivation.

\textbf{Summary}:
HiCert is significantly more effective than peer techniques (PG++ and $D_{\textit{OMA}}$).
Its defense success ratios are higher than 80\% in defending against actual patch attacks across all combinations of dataset and patch size combinations.

\subsubsection{Answering RQ3}
\label{sec:RQ3}
In this section, we study the sensitivity of defenders' robustness to changes in patch size.
Fig.~\ref{fig:vary_patch_size} shows the certified accuracy results for HC, $D_\text{OMA}$, and PG++ with all base models on all datasets at $\tau = 0.8$. HC consistently achieves the highest certified accuracy across datasets with the patch size of 16 pixels shown in the table in Fig.~\ref{fig:vary_patch_size}, similar to the results in Tables \ref{tab:main_eva_results} and \ref{tab:cases_imagenet_2}. 
The results in the table are then normalized to 1 to be used as the starting point in all nine plots in Fig.~\ref{fig:vary_patch_size}, respectively, for analyzing the sensitivity under different patch sizes.


In Fig.~\ref{fig:vary_patch_size}, across all plots, the certified accuracy of HC decreases less with increasing patch size compared to those of $D_\text{OMA}$ and PG++. 
For example, on CIFAR100 with ViT, when the patch size is 32, the certified accuracy change ratios (the change ratio of a defender at the patch size $n$ is defined as certified accuracy $acc_{cert}$ at patch size $n$ divided by $acc_{cert}$ at patch size 16) for $D_\text{OMA}$, PG++, and HC are all around 95\%.
At a patch size of 112 (note: 112/32 = 3.5), $D_\text{OMA}$ and PG++ drop to 50.5\% and 44.8\%, while HC attains 62.7\%. 
The primary reason is that HC can certify inconsistent samples, and there are increasingly more inconsistent samples with the increasing patch size and the corresponding increasing size of the mask.
The effectiveness of MAE is better than ViT and RN for all defenders across all datasets.
Additionally, in the plots, HC drops more gently on ImageNet with MAE, maintaining higher than 90\% {in certified accuracy}, while PG++ and $D_\text{OMA}$ drop to 48.3\% and 65.6\%, respectively.
The result shows that the overall effectiveness of HC in certified accuracy is less sensitive to the change in patch size than its peer defenders, which stands out more in a tougher patch adversary scenario.

\textbf{Summary}:
HiCert consistently surpasses peers in certified accuracy and change ratio across diverse patch sizes.

\subsection{Evaluation Summary}
HC is significantly more effective in protecting safety-critical downstream processes by reducing the potential number of harmful samples for their operations.
The certification effect is evaluated on both small (RQ1) and larger patches (RQ3) and validated empirically against actual attacks (RQ2).

\subsection{Further Discussion}
We further investigate HiCert's performance on other patch configurations, its trade-off between performance and time cost, as well as between certified ratio and false alert ratio, 
and hard samples that HiCert fails to certify in a high threshold $\tau$.

HiCert can defend against two adversarial patches by applying two masks on a single mutant, as well as against one patch of an arbitrary rectangular shape using a general rectangle covering mask set, following the strategy of~\cite{xiang2022patchcleanser}. 
More patches or multiple rectangular patches can also be covered by this methodology.
Our defense succeeds as long as at least one mutant includes mask(s) that cover all patch(es), thereby fully eliminating the attacker's influence.
We conduct a case study to evaluate HiCert under these two additional patch configurations.
The results show that HiCert can largely preserve certification performance (with variations within 2\% in certified accuracy, 4\% in certified ratio, and 13\% in certified ratio in inconsistent samples), 
at the cost of a modest increase in false alerts.
The details are shown in Section~\ref{sec:shape/number}.
We note that this drop in performance is a common limitation in masking-based certified detection techniques \cite{li2022vip}, as larger or more numerous masks generally increase the total masked area. Furthermore, over-approximation can help mitigate the shared challenge that some information about the patch attacker may be unknown. 
Notably, the results in RQ2 suggest that HiCert equipped with MAE remains robust even when mutants employ large mask areas.

With the same masking strategy to generate mutants, the three defenders $D_\text{OMA}$, PG++, and HiCert share the same time cost for mutant generation since the confidence value of each mutant is also obtained when $D_\text{OMA}$ predicts their labels.
Regarding the time cost for inference,  the computation time on the certification results of HC/$D_\text{OMA}$/PG++ for all test samples in one configuration is all within 1 second. 
Note that all $D_\text{OMA}$/PG++/HiCert adopt the same 36 masks in the covering mask set in our experiments, and the number of masks in a covering mask set can be reduced by increasing the mask size, which further decreases the number of mutants and the time cost of HiCert.
The trade-off among certified ratio $r_{cert}$/false alert ratio $f_{fa}$ and inference time of HiCert is shown in Section~\ref{sec:time}, demonstrating that HiCert can shorten the inference time cost (e.g., even reaching 0.02s per sample) without significantly sacrificing performance.
Note that Table 1 in \cite{li2022vip} reports the original ViP and PC have a per-sample runtime of 0.92 s, higher than 0.16 s by HiCert in the main setting.

Like PG++ and MR, HiCert is also parametric. 
We visualize HiCert's trade-off between $r_{{cert}}$ and $r_{{fa}}$ by varying $\tau$ from 0 to 1 in steps of 0.01, with the result shown in Section~\ref{sec:visual}. From the result, we observe that both $r_{{cert}}$ and $r_{{fa}}$ are relatively insensitive for $\tau \in [0, 0.4]$ and increase steadily as $\tau$ approaches 0.8. Notably, $r_{{fa}}$ rises sharply once $\tau$ exceeds 0.8.
HiCert users may choose a larger $\tau$ (e.g., $\tau$ = 0.8) to protect more benign samples in safety-critical applications or a smaller $\tau$ to reduce false alerts.
We leave the investigation into nonparametric HiCert as future work.

We also perform a qualitative analysis on hard samples that cannot be certified by HiCert, even when the confidence reaches 0.9 (see Section~\ref{sec:hard sample}). 
Upon manual inspection, we find that these hard samples fall into two main categories:
(1) inputs containing two (or more) items from different classes, where masking one item causes the mutant to be predicted as the other class; and
(2) inputs containing a single item, where the mask changes its semantics, leading to misclassifications of the masked mutants.
Promising research directions include adapting HiCert to be a certification method for multi-label classification and incorporating reliable contextual information from the masked regions in mutants.

\subsection{Threats to Validity}
We evaluate the top-performing and closely related peer defenders in the experiments and follow common practices in recent studies \cite{patchcensor,li2022vip, zhou2024crosscert} on certified detection to compare the reported results of more defenders published in the literature due to our limited GPU resource. 
The main experiments use patch regions in one square, like most certified detection defenders \cite{patchcensor,zhou2024crosscert,xiang2021patchguard++,han2021scalecert, mccoyd2020minority}.
We perform actual adversarial patch attacks on a subset of configurations like \cite{levine2020randomized}, due to high costs.
We use 
false alert ratio, false silent ratio, and silent accuracy as secondary metrics and avoid composite metrics following \cite{yatsura2023certified} due to their lack of theoretical guarantees. 

We do not include CrossCert \cite{zhou2024crosscert} and the original PG++ \cite{xiang2021patchguard++}, PC \cite{patchcensor}, and ViP\cite{li2022vip} in our infrastructure for evaluation.
CrossCert requires a pair of different types of certified recovery defenders as its base defenders and
has lower certified accuracy than ViP/PC.
The original PG++/PC/ViP is deeply coupled with their specific base models; for fair comparison, we unify PC/ViP as $D_\text{OMA}$ and PG++ in our infrastructure and achieve a comparable performance to their reported result.
Unlike traditional program analysis techniques, it is generally impractical to determine whether a benign sample is certifiable, irrespective of the defender.
We are also not aware of the presence of such a dataset.
%
%
To our knowledge, all existing works on certified detection rely on theorem development to ensure complete detection coverage of all harmful versions of certified benign samples, and we follow this practice.
We additionally apply a weaker criterion for evaluation: generating numerous patched versions for each sample and empirically checking whether any generated harmful version bypasses the detection. 
Note that although the attack tool we adopt is a traditional one, recent attack techniques primarily target more difficult scenarios in which much of the deep learning model's internal information is hidden (such as treating the model as a black box \cite{wei2023simultaneously} or further limiting the prediction label query budget \cite{tao2023hardlabel}). In contrast, our chosen tool has full access to gradients, confidence scores, and predicted labels of the base deep learning model, enabling a more effective evaluation.

The experiment has only evaluated certified detection defenders in certain combinations of datasets, patch sizes, base models, $\tau$ values, covering mask sets, a masking strategy, and an attacker tool. 
A larger experiment would certainly enhance the generalization of the current results.

In our experiment, we consistently configure all techniques with the same base model for comparison 
because there would be different results if using different base models for the same defender to certify the same sample, which is a 
common property for these certification techniques \cite{mccoyd2020minority, han2021scalecert, patchcensor, li2022vip, xiang2021patchguard++}.
These techniques will also incur higher computation overhead if a larger base model or a larger dataset is used for evaluation. 
We leave the research on certified defenses alleviating the impact from the base model as future work.
We also observe that many samples in the same class 
are very similar,
while HiCert provides different results of warning; so one possible practical way to mitigate the problem of noisy warnings is to rapid-fire a few shots on an item to make the final decision.

Our implementation may contain bugs. We have validated it through testing and removed all bugs we know.


\section{Related Work}\label{sec:related_work}
There has been a growing focus on the reliability of deep learning systems against adversarial attacks \cite{hussain2024evaluating, Xu2023ASQ, AlMaliki2023Toward}.
Patch adversarial attacks are a typical adversarial paradigm that is physically realizable \cite{hussain2024evaluating}.
Brown et al. 
\cite{brown2017adversarial} first propose patch attacks
as a threat model for physically adversarial attacks. In line with it, various approaches \cite{eykholt2018robust,liu2020bias,wei2022adversarial, tao2023hardlabel, wei2023simultaneously} have been proposed for more advanced adversarial patch attacks against DL models.
%
Several empirical defenses \cite{chen2023jujutsu,naseer2019local,hayes2018visible,hussain2024evaluating} have been introduced, while they are known to be easily compromised \cite{tramer2020adaptive, li2023sok} 
if an attacker is aware of the defense strategies, which are also applicable for defenses against patch attacks demonstrated by Chiang et al. \cite{chiang2020certified}.
Therefore, to develop a provable deterministic defense for such scenarios, a category of robustness certification defenders against patch attacks emerges.

Certified recovery, including those voting-based \cite{levine2020randomized,salman2022certified,xiang2021patchguard, metzen2021efficient, zhou2023majority, li2022vip} and masking-based \cite{xiang2022patchcleanser, xiang2024patchcure}, aims to correctly predict the label of samples even if an adversarial patch is present.
On the other hand, certified detection \cite{patchcensor,li2022vip,mccoyd2020minority,han2021scalecert,xiang2021patchguard++,zhou2023majority,zhou2024crosscert} is dedicated to issuing warnings on patched harmful samples to achieve a higher certification coverage on benign samples.
Minority Report \cite{mccoyd2020minority} initially presents a mask-sliding mechanism for detection certification, but suffers from inadequate performance and poor scalability \cite{han2021scalecert}. 
Based on Minority Report, PatchGuard++ \cite{xiang2021patchguard++} upgraded the empirical masking strategy in PatchGuard \cite{xiang2021patchguard} as the guaranteed feature masking strategy to enhance the scalability. 
For the same purpose, ScaleCert \cite{han2021scalecert} introduces a neural network compression strategy, while it is still possible for its strategy to be broken shown by its paper, making it produce a theoretically weaker certification notion.
PatchCensor and ViP \cite{patchcensor, li2022vip} further refine the masking mechanism and respectively adopt ViT and MAE as their backbones to further improve the performances while they change the target from warning the harmful samples to warning the change of the prediction label from the benign sample. 
We have indirectly compared HiCert with these two defenders when we discussed and evaluated $D_\text{OMA}$ in the previous sections.
Demasked Smoothing \cite{yatsura2023certified} adopts $D_\text{OMA}$ in semantic segmentation tasks.
All of the methods mentioned above are masking-based detection and cannot certify any inconsistent sample; subsequently, a cross-checking-based detection, CrossCert \cite{zhou2024crosscert}, adopts two different types of certified recovery defenders and cross-checks the recovered labels for detection.
It offers unwavering certification, a novel type of guarantee
(which other certified detection defenders cannot offer).
However, it does not exceed PatchCensor and ViP in certified accuracy for detection in its experiment, and requiring to execute two recovery defenders makes CrossCert much slower than its peers.

To our knowledge, HiCert is the first certified defense framework that effectively mitigates the risk on inconsistent samples, provably detecting all their harmful samples. 
%
%
Incorrectly predicted samples are inherently inconsistent. 
In this connection, existing certified detection defenders \cite{li2022vip,patchcensor,han2021scalecert,zhou2024crosscert,mccoyd2020minority,xiang2021patchguard++} also face challenges in certifying incorrectly predicted benign samples in general.

\section{Conclusion}\label{sec:Conclusion}
We have introduced a novel framework, HiCert, capable of certifying both consistent and inconsistent samples, regardless of prediction correctness and the label distribution of mutants, with the guarantee of detecting all their harmful samples deterministically. 
We have presented theorems to prove the soundness of the framework. The design of HiCert can also raise alerts on incorrectly predicted samples if they are certified. 
To our knowledge, HiCert is the first work capable of this comprehensive coverage level for certification. Comprehensive experiments demonstrate its high effectiveness in certifying various kinds of benign samples and detecting all their harmful counterparts with high defense success ratios.

It is interesting to further develop HiCert to handle
real-time video analysis and processing by reducing the time cost and incorporating time-series context for certification and detection, for example. 
Future work also includes extending HiCert to certify samples for
other tasks (e.g., semantic segmentation, text classification) and attack types (e.g., few-pixel attacks), and further developing a comprehensive detection framework. 
It is also interesting to adapt HiCert's philosophy to certifably detect AI-generated fake samples.
Additionally, future research will explore the relationship between mutant confidence values across different patch shapes and sizes, base models, and datasets.
It is also interesting to integrate HiCert into the pipeline of probabilistic certified defense so that it provides a probabilistic guarantee for benign samples rejected by the current deterministic counterpart. 
%
%
Addressing correctly predicted samples with unchanged labels after patch attacks also remains an area for further research.
\bibliographystyle{IEEEtran}
\bibliography{references_short}

\clearpage
\appendices
\renewcommand{\appendixname}{Section}
\thispagestyle{empty}

\section{Definition of certified detection}\label{app:certified_detection}
There are different schools of thought on the definition of certified detection in the literature. 
Some works \cite{xiang2021patchguard++,han2021scalecert, mccoyd2020minority} aim to detect all harmful samples (i.e., inferring a violation of $f(x') = y_0$), while others \cite{patchcensor,zhou2024crosscert, li2022vip} aim to detect the change in the prediction label from the benign sample $x$ (i.e., inferring a violation of $f(x') = f(x)$). 
The latter kind underestimates the attacker's ability (e.g., the attacker may know $y_0$ in producing $x'$ \cite{wei2023simultaneously,wei2022adversarial}), making them insensitive to detecting those harmful samples without changing the prediction label perturbed from incorrectly predicted benign samples,
such that $f(x) = f(x') \neq y_0$.
Some works \cite{xiang2021patchguard++,han2021scalecert} further require a certified detection defender silent on certified samples, but some others \cite{yatsura2023certified} do not, or they \cite{zhou2024crosscert} only require a defender silent on a proper subset of their certified samples with the guarantee by making the defender compatible in semantics with certified recovery.
Our adopted definition goes for worst-case detection (detecting all harmful samples) with the least assumption on benign samples (leaving whether a benign sample should be in a particular detection state unspecified).

\section{A Case Study on the Ineffectiveness of Peers for Incorrectly Predicted Benign Samples}\label{app:motivate_case_study}

For incorrectly predicted ImageNet samples, we performed a case study 
with MAE and three different sizes of the patch (32, 64, 96 pixels) under the same experimental settings as the experiment for answering RQ1 in Section~\ref{sec:eva}.
PG++ with all five values (from 0.5 to 0.9) of $\tau$ in the experiment cannot certify any sample out of 
all 8751 incorrectly predicted samples in the ImageNet dataset,
and
$D_\text{OMA}$ can certify
only 1 sample (the file index is \texttt{n01751748/ILSVRC2012\_val\_00002154}).
\section{Special case of HiCert}\label{app:special_case_of_HiCert}

\begin{figure}[]
\centering
\includegraphics[width=\linewidth]
{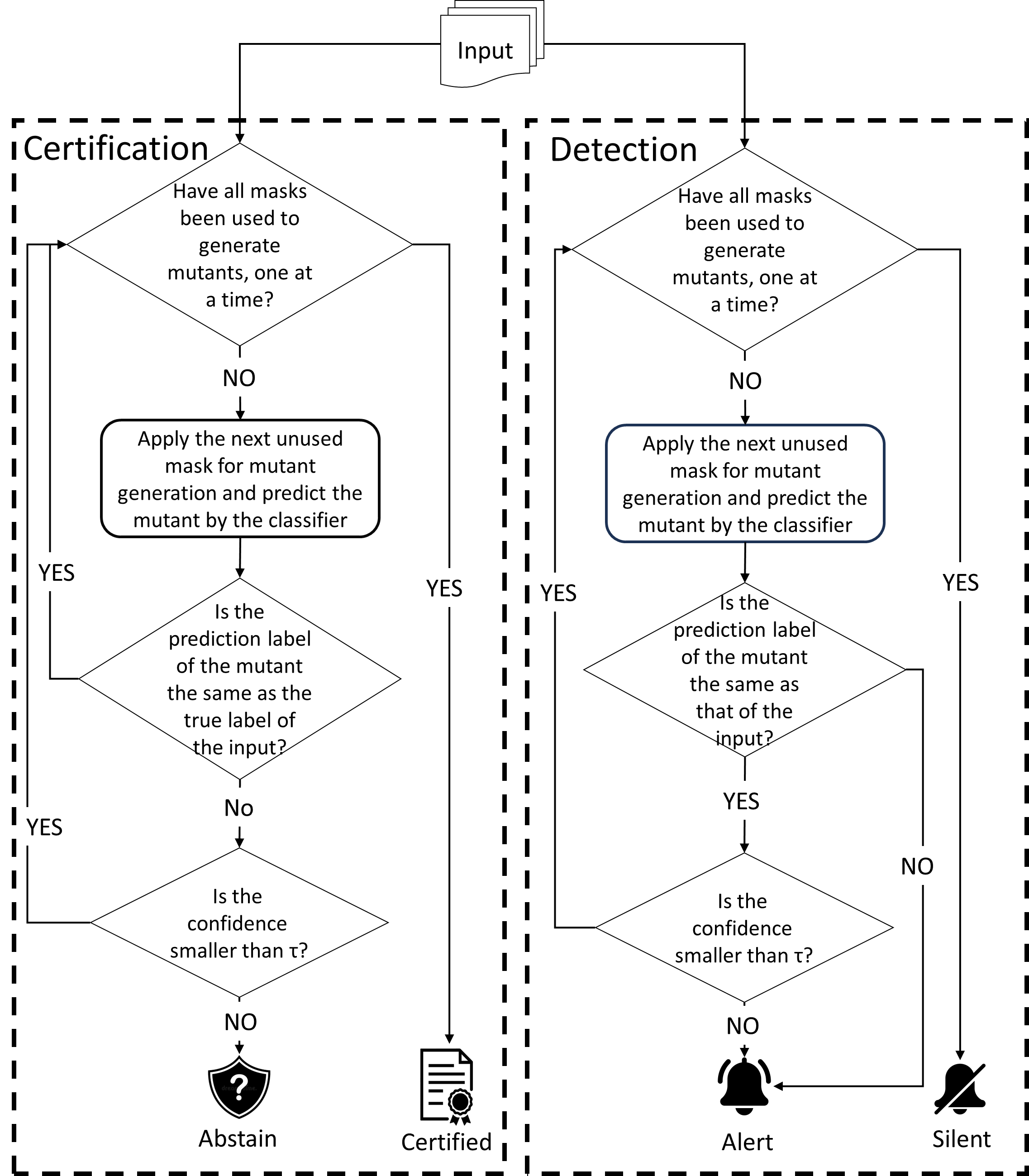}
\caption{The flowcharts of HiCert on certification and detection processes.
}\label{fig:flow}
\end{figure} 

When $\tau=0$, HiCert is reduced to $D_\text{OMA}$.
This is because  the set $\{f_{\textit{conf}}(x_\textsc{m})\mid \textsc{m}\in\mathbb{M}_\mathbb{P},f(x_\textsc{m})\neq y_0\}$ becomes empty ($\emptyset$)  if a given sample $x$ with the true label $y_0$ is consistent (i.e., $[\text{OMA}(x, y_0)=\textit{True}]$), thereby obtaining $v(x)=[\max \emptyset < 0] = [-\infty<0]=\textit{True}$.
Meanwhile, $w(\hat{x})$ is reduced to $[\{\hat{x}_\textsc{m}\mid \textsc{m}\in\mathbb{M}_\mathbb{P},f(\hat{x}_\textsc{m})\neq f(\hat{x})\}\neq \emptyset]$, where the set $\{\hat{x}_\textsc{m}\mid \textsc{m}\in\mathbb{M}_\mathbb{P},f(\hat{x}_\textsc{m})\neq f(\hat{x})\}$ becomes non-empty  only if the input sample $\hat{x}$ has a label difference, i.e., $[\text{OMA}(\hat{x}, f(\hat{x}))=\textit{False}]$.

On the other hand, when $\tau=1$, HiCert is reduced to a trivial detection defender that certifies every benign sample $x$ (i.e., $ [\max{\{f_{\textit{conf}}(x_\textsc{m})\mid \textsc{m}\in\mathbb{M}_\mathbb{P},f(x_\textsc{m})\neq y_0\}}<\tau]$ always holds) and warns every input sample $\hat{x}$ (i.e., $ [\min{\{f_{\textit{conf}}(\hat{x}_\textsc{m})\mid \textsc{m}\in\mathbb{M}_\mathbb{P},f(\hat{x}_\textsc{m})=f(\hat{x})\}}<\tau]$ always holds).

\section{
Flowcharts of HiCert
}~\label{sec:flow}
The flowcharts of HiCert are shown in Fig.~\ref{fig:flow}, illustrating two separate certification and detection processes in HiCert. In both flows, HiCert iteratively generates mutants and checks their prediction labels and confidence. 
For detection, if both conditions on the prediction label and confidence are met for all mutants, HiCert keeps silent on the input; otherwise, if any mutant violates either condition of prediction label or confidence, HiCert raises an alert.
For certification, if either condition on the prediction label and confidence is met for all mutants, HiCert certifies this input; otherwise, if any mutant violates both the conditions of the prediction label and confidence, HiCert abstains from certifying this input.

\section{Theorem, proof, and the dilemma of attackers} \label{app:proof}
\newtheorem*{unthm}{Theorem}

\begin{unthm}[Consistent mutants are infeasible places for attackers]
If the patch region is covered by a mask whose corresponding mutant's label is the same as the true label, it is infeasible for harmful samples to show no label difference. (i.e.,
if the condition $[\exists \textsc{m}_\textsc{p} \in\mathbb{M}_\mathbb{P},
\textsc{m}_\textsc{p}\odot{\textsc{p}}=\textsc{p}
\land 
f(x_{\textsc{m}_\textsc{p}})=y_0]
$
holds, the condition 
$[\forall x'\in\{x'\mid{x}'=(\textsc{J}-\textsc{p})\odot {x}+\textsc{p}\odot {x}'\},
[f(x')\neq y_0]\implies
[\exists \textsc{m} \in \mathbb{M}_\mathbb{P}, 
f({x}'_{\textsc{m}})  \neq f(x')]]$
holds.)

\end{unthm}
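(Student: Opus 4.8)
The plan is to prove the implication directly by exhibiting an explicit witness mask for the label-difference conclusion, rather than by any case analysis. Suppose the antecedent holds, so fix a mask $\textsc{m}_\textsc{p}\in\mathbb{M}_\mathbb{P}$ with $\textsc{m}_\textsc{p}\odot\textsc{p}=\textsc{p}$ and $f(x_{\textsc{m}_\textsc{p}})=y_0$. Let $x'$ be an arbitrary patched version of $x$ of the stated form $x'=(\textsc{J}-\textsc{p})\odot x+\textsc{p}\odot x'$, and assume $x'$ is harmful, i.e.\ $f(x')\neq y_0$. The claim I would establish is that this \emph{same} $\textsc{m}_\textsc{p}$ already witnesses $[\exists \textsc{m}\in\mathbb{M}_\mathbb{P},\, f(x'_\textsc{m})\neq f(x')]$, which discharges the inner implication; quantifying over all such $x'$ then yields the consequent.

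The key step is to show the mutant coincidence $x_{\textsc{m}_\textsc{p}}=x'_{\textsc{m}_\textsc{p}}$. Since $x'=(\textsc{J}-\textsc{p})\odot x+\textsc{p}\odot x'$, the samples $x$ and $x'$ agree on every pixel outside the patch region $\textsc{p}$, and they may differ only on pixels where $\textsc{p}$ is $1$. Because $\textsc{m}_\textsc{p}$ covers $\textsc{p}$ (i.e.\ $\textsc{p}\odot\textsc{m}_\textsc{p}=\textsc{p}$), the mask $\textsc{m}_\textsc{p}$ sets to zero in the mutants exactly all the coordinates where $x$ and $x'$ could possibly disagree, so the two mutants are identical. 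This is precisely the mutant-coincidence fact recalled in the Preliminaries ($\textsc{p}\odot\textsc{m}_\textsc{p}=\textsc{p}\implies x_{\textsc{m}_\textsc{p}}=x'_{\textsc{m}_\textsc{p}}$); if a fully self-contained derivation is wanted, a short element-wise computation with $\odot$, $+$, $-$, and $\textsc{J}$ confirms it, using $\textsc{m}_\textsc{p}\odot\textsc{p}=\textsc{p}$ to absorb the patched coordinates.

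Given this coincidence, the remainder is immediate: $f(x'_{\textsc{m}_\textsc{p}})=f(x_{\textsc{m}_\textsc{p}})=y_0$ by hypothesis, whereas $f(x')\neq y_0$ by the harmfulness assumption. Hence $f(x'_{\textsc{m}_\textsc{p}})=y_0\neq f(x')$, so $\textsc{m}_\textsc{p}$ is a mask in $\mathbb{M}_\mathbb{P}$ for which the predicted label of the mutant differs from that of $x'$, establishing $[\exists \textsc{m}\in\mathbb{M}_\mathbb{P},\, f(x'_\textsc{m})\neq f(x')]$. Since $x'$ was an arbitrary harmful patched version within $\textsc{p}$, the inner implication and the universal quantification over $x'$ both follow, completing the proof.

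I do not expect a genuine obstacle; the statement is essentially a one-line consequence of the mutant-coincidence property once the correct witness mask is identified. The only points requiring care are bookkeeping ones: making sure the class of $x'$ quantified in the statement (those patched only within $\textsc{p}$) matches exactly the class for which the coincidence property was stated, and confirming that the mask $\textsc{m}_\textsc{p}$ supplied by the antecedent indeed lies in the covering mask set $\mathbb{M}_\mathbb{P}$ so that it is eligible as a witness for the existential appearing in the conclusion.
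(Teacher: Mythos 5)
Your proposal is correct and follows essentially the same route as the paper's proof: both use the covering mask $\textsc{m}_\textsc{p}$ from the antecedent as the explicit witness, establish the mutant coincidence $x_{\textsc{m}_\textsc{p}}=x'_{\textsc{m}_\textsc{p}}$ via the element-wise computation, and conclude $f(x'_{\textsc{m}_\textsc{p}})=y_0\neq f(x')$. No gaps.
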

\begin{proof}

By 
$[\textsc{m}_\textsc{p}\odot{\textsc{p}}=\textsc{p}]$, we know
$x'_{\textsc{m}_\textsc{p}}
=
((\textsc{J}-\textsc{p})\odot {x}+\textsc{p} \odot{x}')\odot(\textsc{J}-\textsc{m}_\textsc{p})
=
((\textsc{J}-\textsc{p})\odot {x}+\textsc{p} \odot{x}')-((\textsc{J}-\textsc{p})\odot {x}+\textsc{p} \odot{x}')\odot\textsc{m}_\textsc{p}
=
(\textsc{J}-\textsc{p})\odot {x}-(\textsc{J}-\textsc{p})\odot {x}\odot\textsc{m}_\textsc{p}
=
(\textsc{J}-\textsc{p})\odot {x}\odot(\textsc{J}-\textsc{m}_\textsc{p})
=
{x}\odot(\textsc{J}-\textsc{m}_\textsc{p})
=
x_{\textsc{m}_\textsc{p}}
$ (see Fig.~\ref{fig:concept} for illustration).
Here we also know 
$
f(x_{\textsc{m}_\textsc{p}})=y_0
$.
So, we have
$
f(x'_{\textsc{m}_\textsc{p}})=y_0
$.
Further, if 
$f(x')\neq y_0$, we know 
[$\exists \textsc{m} \in \mathbb{M}_\mathbb{P}, 
f({x}'_{\textsc{m}})  \neq f(x')$].%
\end{proof}

\begin{unthm}[HiCert Certification]
If the maximum confidence of inconsistent mutants of a benign sample $x$ is below a threshold $\tau$,
  each harmful sample $x'$ either incurs a label difference or has mutant(s) with minimum confidence below $\tau$ that are predicted with a label the same as $x'$
 --- if the condition 
$[\max{\{f_{\textit{conf}}({x}_\textsc{m})\mid \textsc{m}\in\mathbb{M}_\mathbb{P},f({x}_\textsc{m})\neq y_0\}}<\tau]$~holds, the condition $[\forall x'\in\mathbb{A}_\mathbb{P}({x}),[f(x')\neq y_0]
\implies 
[\{{x}'_\textsc{m}\mid  \textsc{m}\in\mathbb{M}_\mathbb{P},f({x}'_\textsc{m})\neq f({x'})\}\neq \emptyset] 
 \lor 
 [\min{\{f_{\textit{conf}}({x}'_\textsc{m})}$ ${\mid \textsc{m}\in\mathbb{M}_\mathbb{P},f({x}'_\textsc{m})=f({x'})\}}<\tau]]$ holds, 
 which is
 $v(x)\implies [\forall x'\in\mathbb{A}_\mathbb{P}({x}), f(x')\neq y_0 \implies w(x')]$ in HiCert.
\end{unthm}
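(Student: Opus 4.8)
The plan is to fix an arbitrary harmful sample $x' \in \mathbb{A}_\mathbb{P}(x)$, so $f(x') \neq y_0$, generated by some patch region $\textsc{p} \in \mathbb{P}$, and to exhibit a \emph{single} mask that already witnesses one of the two disjuncts of $w(x')$. First I would invoke the covering-mask-set property to pick $\textsc{m}_\textsc{p} \in \mathbb{M}_\mathbb{P}$ with $\textsc{p} \odot \textsc{m}_\textsc{p} = \textsc{p}$, and then reuse the identity $x_{\textsc{m}_\textsc{p}} = x'_{\textsc{m}_\textsc{p}}$ already established in the preliminaries and re-derived in the proof of Thm.~\ref{thm:infeasible}: masking the patched region erases the attacker's modification, so the two mutants coincide pixelwise and therefore carry the same predicted label and the same confidence, i.e., $f(x'_{\textsc{m}_\textsc{p}}) = f(x_{\textsc{m}_\textsc{p}})$ and $f_{\textit{conf}}(x'_{\textsc{m}_\textsc{p}}) = f_{\textit{conf}}(x_{\textsc{m}_\textsc{p}})$.

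Next I would split on the label of this shared mutant. If $f(x_{\textsc{m}_\textsc{p}}) = y_0$ --- which is in particular the only possibility when $x$ is consistent, where the antecedent reads $\max\emptyset = -\infty < \tau$ and holds vacuously --- then $\textsc{m}_\textsc{p}$ is a consistent mutant covering $\textsc{p}$, and Thm.~\ref{thm:infeasible} immediately yields some $\textsc{m} \in \mathbb{M}_\mathbb{P}$ with $f(x'_\textsc{m}) \neq f(x')$, so $\{x'_\textsc{m} \mid \textsc{m} \in \mathbb{M}_\mathbb{P}, f(x'_\textsc{m}) \neq f(x')\} \neq \emptyset$ and the first disjunct of $w(x')$ holds.

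The remaining case is $f(x_{\textsc{m}_\textsc{p}}) \neq y_0$, i.e., $\textsc{m}_\textsc{p}$ produces an inconsistent mutant of $x$. Here I would use the antecedent: $f_{\textit{conf}}(x_{\textsc{m}_\textsc{p}})$ is an element of $\{f_{\textit{conf}}(x_\textsc{m}) \mid \textsc{m} \in \mathbb{M}_\mathbb{P}, f(x_\textsc{m}) \neq y_0\}$, whose maximum is below $\tau$, so $f_{\textit{conf}}(x_{\textsc{m}_\textsc{p}}) < \tau$, and hence $f_{\textit{conf}}(x'_{\textsc{m}_\textsc{p}}) < \tau$ by the pixelwise identity. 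A final two-way split on whether $f(x'_{\textsc{m}_\textsc{p}}) = f(x')$ closes the argument: if not, $\textsc{m}_\textsc{p}$ witnesses a label difference on $x'$ (first disjunct); if so, $x'_{\textsc{m}_\textsc{p}}$ is a same-label mutant of $x'$ with confidence below $\tau$, so $\min\{f_{\textit{conf}}(x'_\textsc{m}) \mid \textsc{m} \in \mathbb{M}_\mathbb{P}, f(x'_\textsc{m}) = f(x')\} < \tau$ (second disjunct). Either way $w(x')$ is true; since $x'$ was arbitrary, this gives $v(x) \implies [\forall x' \in \mathbb{A}_\mathbb{P}(x), f(x') \neq y_0 \implies w(x')]$, and the three special cases SC1--SC3 are then read off by instantiating which branch fires.

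I do not expect a genuine obstacle: once the pivot mask $\textsc{m}_\textsc{p}$ is isolated, the argument is pure bookkeeping over the covering property and Thm.~\ref{thm:infeasible}, with no global reasoning over $\mathbb{M}_\mathbb{P}$ required. The one place to be careful is the boundary behaviour of $\max$ and $\min$ on the empty set (the conventions $\max\emptyset = -\infty$, $\min\emptyset = \infty$): this is precisely what makes the consistent-sample case drop out of the same statement, and I would want to confirm it never lets a vacuously-true second disjunct masquerade as a genuine witness --- in the case where the second disjunct fires, the set $\{x'_\textsc{m} \mid f(x'_\textsc{m}) = f(x')\}$ is nonempty because it contains $x'_{\textsc{m}_\textsc{p}}$, so the $\min$ is attained and strictly below $\tau$.
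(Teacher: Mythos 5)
Your proposal is correct and follows essentially the same route as the paper's proof: isolate the covering mask $\textsc{m}_\textsc{p}$, use the identity $x_{\textsc{m}_\textsc{p}}=x'_{\textsc{m}_\textsc{p}}$, split on whether that shared mutant is consistent (deferring to Thm.~\ref{thm:infeasible}) or inconsistent (using the confidence bound plus a final split on whether $f(x'_{\textsc{m}_\textsc{p}})=f(x')$). Your added check on the empty-set conventions for $\max$ and $\min$ is a sound refinement but does not change the argument.
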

\begin{proof}
Recall that $\textsc{m}_\textsc{p}$ denotes the mask in the covering mask set $\mathbb{M}$ covering the patch in a harmful version $x'$ of $x$ (i.e., $[\textsc{m}_\textsc{p}\odot{\textsc{p}}=\textsc{p}]$).
We get $x_{\textsc{m}_\textsc{p}} ={x}'_{\textsc{m}_\textsc{p}}$ (see the proof of Thm.~\ref{thm:infeasible} above and Fig.~\ref{fig:concept} for illustration).
Case 1: 
Suppose 
$f(x_{\textsc{m}_\textsc{p}})\neq y_0$.
We know $[\max\{f_{\textit{conf}}({x}_\textsc{m})\mid \textsc{m}\in\mathbb{M}_\mathbb{P},f({x}_\textsc{m})\neq y_0\}<\tau]$, 
which means $f_{\textit{conf}}(x_{\textsc{m}_\textsc{p}})<\tau$
and $f_{\textit{conf}}(x'_{\textsc{m}_\textsc{p}})<\tau$.
Sub-Case 1.1:
Suppose $[\{{x'}_\textsc{m}\mid  \textsc{m}\in\mathbb{M}_\mathbb{P},f({x}'_\textsc{m})\neq f({x'})\}\neq\emptyset]$ does not hold, which means $f({x}'_{\textsc{m}_\textsc{p}})=f({x'})$.
Therefore,
$ [\min\{f_{\textit{conf}}({x}'_\textsc{m})\mid \textsc{m}\in\mathbb{M}_\mathbb{P},f({x}'_\textsc{m})=f({x'})\}<\tau]$ holds.
Sub-Case 1.2: 
$[\{{x'}_\textsc{m}\mid  \textsc{m}\in\mathbb{M}_\mathbb{P},f({x}'_\textsc{m})\neq f({x'})\}\neq\emptyset]$ holds.
Case 2 (Thm.~\ref{thm:infeasible}): 
Suppose $f(x_{\textsc{m}_\textsc{p}})=y_0$. 
We have $f(x'_{\textsc{m}_\textsc{p}})=f(x_{\textsc{m}_\textsc{p}})=y_0$. 
Recalled that $x'$ is harmful, we should have $f(x')\neq y_0$ and then $f(x'_{\textsc{m}_\textsc{p}})\neq f(x')$, thereby
$[\{{x'}_\textsc{m}\mid \textsc{m}\in\mathbb{M}_\mathbb{P},f({x}'_\textsc{m})\neq f({x'})\}\neq \emptyset]$ holds.
\end{proof}

By designing a warning function that follows Thm.~\ref{thm:Inconsistent-Max-Min},
HiCert places attackers in a \emph{dilemma} if they attempt to create a harmful sample $x'$ of any benign sample $x$ with $v(x)=\textit{True}$ and aim to make HiCert silent on their created harmful samples.
All these attempts of the attackers must be failed by HC since it can consistently alert on all these harmful samples.
\begin{figure}[] 
\centering
\includegraphics[width=\linewidth]{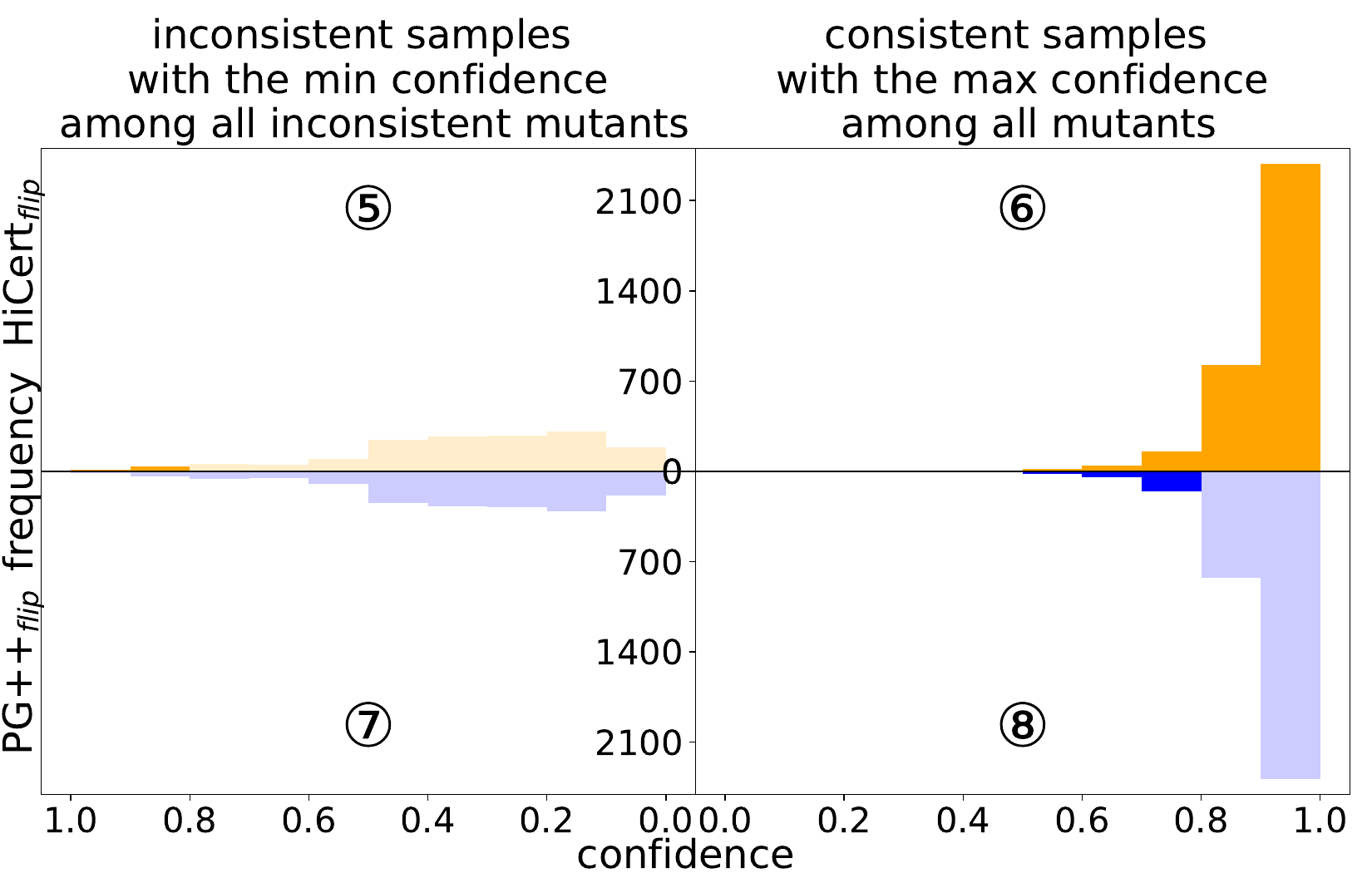}
\caption{
The plots show the maximum and minimum confidences among those mutants of the same sample for those samples out of all samples, as stated in the column headings.
}
\label{fig:confidence_min_max_ablation}

\end{figure}
\begin{description}
 
    \item[\textbf{Case 1 of the Dilemma}:] 
     Suppose that an adversarial patch is placed within a given mask and its corresponding mutant of $x$ (and also $x'$) is inconsistent. Then, the confidence of this mutant of $x$ must be lower than $\tau$.
     If the harmful sample and all its mutants share the same prediction label (following Sub-Case 1.1 in the proof), then this mutant should be predicted with a label the same as the harmful sample, which means the minimum confidence of these mutants is lower than $\tau$ and the warning function of HiCert will return \textit{True}.
     Otherwise, any mutants of the harmful sample that are predicted with a label different from the harmful sample (following Sub-Case 1.2 in the proof) indicate a label difference. So, the warning function of HiCert will also return \textit{True}.
     

\item[\textbf{Case 2 of the Dilemma}:] 
Suppose that an adversarial patch is placed within a given mask and its corresponding mutant of $x$ (and also $x'$) is consistent.
In this case, this mutant should be different from the harmful sample in the prediction label (following Case 2 in the proof).
Like Case 1, the warning function of HiCert will also return \textit{True}.
\end{description}


The two cases in the dilemma correspond to the two main cases in the proof of Thm. \ref{thm:Inconsistent-Max-Min}, respectively, which also correspond to the certification-warning paths depicted in Fig. \ref{fig:overview}:
\textcircled{4}-\textcircled{5}-\textcircled{1}-\textcircled{8} (for Sub-Case 1.1), 
\textcircled{4}-\textcircled{5}-\textcircled{2}-\textcircled{7} (for Sub-Case 1.2),
and 
\textcircled{4}-\textcircled{6}-\textcircled{2}-\textcircled{7} (for Case 2).
Suppose both inconsistent benign samples in Fig. \ref{fig:why_fail} satisfy the antecedent of the implication relation in Thm. \ref{thm:Inconsistent-Max-Min}. In this case, the first ($x'_{1-1}$) and the last ($x'_{2-3}$) harmful samples in the figure will be detected by the label difference condition through the Path \textcircled{4}-\textcircled{6}-\textcircled{2}-\textcircled{7}, and the remaining three in between will be detected by the low confidence property through the Path \textcircled{4}-\textcircled{5}-\textcircled{1}-\textcircled{8}.
\thispagestyle{empty}
\section{A case study on the effectiveness of the design of HiCert}\label{app:case_study_on_design}

We analyze 5000 randomly selected ImageNet samples with their mutants from the experiment with MAE as the base model and the patch size of 32 pixels. Other settings are the same as answering RQ1 in Section \ref{sec:eva}. 
We denote the set of 5000 samples by the set $D$.
We split this set of samples $D$ into two subsets: they contain the samples with and without inconsistent mutants (i.e., inconsistent samples and consistent samples), respectively, and are denoted by sets $D_1$ and $D_2$, respectively.
We compute the maximum and minimum confidences among the confidences of all inconsistent mutants of the same sample for all samples in $D_1$ and these two bounds among the confidences of all (consistent) mutants of the same sample for all samples in $D_2$, to study the two types of confidence bounds on all samples in $D$, denoted by $\max(x)_1$ and $\min(x)_1$ for $x \in D_1$ and $\max(x)_2$ and $\min(x)_2$ for $x \in D_2$, respectively.
The two columns of histograms in Fig.~\ref{fig:confidence_min_max} from left to right correspond to the distributions for $\max(x)_1$, $\min(x)_2$ for all samples $x$ in respective sub-datasets $D_1$ and $D_2$.

Under the same setting, we also conduct an ablation study.
We have further constructed two additional defenders PG++$_{\textit{flip}}$ and HiCert$_{\textit{flip}}$ to pair with PG++ and HiCert, respectively, by replacing the ``$>$'' operator with the ``$<$'' operator in PG++ for PG++$_{\textit{flip}}$ and replacing the ``$<$'' operator with the ``$>$'' operator in HiCert for HiCert$_{\textit{flip}}$,
to demonstrate the inability of PG++$_{\textit{flip}}$ and the ineffectiveness of HiCert$_{\textit{flip}}$ to certify inconsistent samples (the meaning behind the direction of inequality sign of PG++$_{\textit{flip}}$ is to require low confidence on consistent mutants; on the contrary, that of inequality sign of HiCert$_{\textit{flip}}$ is to require high confidence on inconsistent mutants. Both are counterintuitive.). 
For clarification, their certification functions are
$v(x) := 
  [\text{OMA}(x,y_0) = \small{\textit{True}} \land \forall \textsc{m} \in \mathbb{M}_\mathbb{P}, f_{\textit{conf}}(x_\textsc{m})<\tau]$
  for PG++$_{\textit{flip}}$
  and
  $v(x):= 
 [\min{\{f_{\textit{conf}}(x_\textsc{m})\mid \textsc{m}\in\mathbb{M}_\mathbb{P},f(x_\textsc{m})\neq y_0\}}>\tau]$
  for HiCert$_{\textit{flip}}$.
Their results are shown in sub-figures \textcircled{5}--\textcircled{8} in Fig.~\ref{fig:confidence_min_max_ablation}.
The two columns of histograms in Fig.~\ref{fig:confidence_min_max_ablation} from left to right correspond to the distributions for $\min(x)_1$ and $\max(x)_2$ for all samples $x$ in respective sub-datasets $D_1$ and $D_2$.
The bars for samples certified by the corresponding defenders (see the labels for the $y$-axis) are displayed in a solid color; otherwise, they are semi-transparent, where the confidence threshold $\tau$ is set to 0.8 for illustration purposes.
{Histograms \textcircled{5}--\textcircled{6} and \textcircled{7}--\textcircled{8} represent HiCert$_{\textit{flip}}$ and PG++$_{\textit{flip}}$}, respectively.
Although HiCert$_{\textit{flip}}$ can certify all consistent samples (Histogram \textcircled{6}), it needs a small $\tau$ to cover a majority of all inconsistent samples to be effective in certifying these samples.
PG++$_{\textit{flip}}$ cannot certify any inconsistent samples (Histogram \textcircled{7}). 
Note that the ability for HiCert$_{\textit{flip}}$ to certify inconsistent mutants is due to the advancements achieved by HiCert.
The study shows that modifying the defenders to have the ability to certify both consistent benign samples and inconsistent benign samples (even in part) effectively is nontrivial.


\section{Discussion on the design of HiCert}
\label{app:discussion-on-HiCert-design}

\subsubsection{Achieving the same time complexity as $D_\text{OMA}$}
In terms of time complexity, compared to the $D_\text{OMA}$ defender, HiCert only additionally requires a constant-time check on selective mutants' confidence against $\tau$ during certification or warning detection and there are $|\mathbb{M}|$ mutants in total.
Since  $D_\text{OMA}$ already generates these $|\mathbb{M}|$ mutants and uses them for label prediction (and has to assess the confidence for the prediction label of mutants), 
 HiCert is the same as $D_\text{OMA}$ in time complexity, where $|\mathbb{M}|$ is a small constant in practice (e.g., $|\mathbb{M}|$ is 36 in our main experiments).

\subsubsection{Soundness and completeness}
In terms of theoretical soundness and completeness, like all other certified defenders
with deterministic guarantees, HiCert is sound in certifying benign samples without any false positives (i.e., if a sample is reported as certified by HiCert, all of its harmful samples should be warned by HiCert, proven by Thm.~\ref{thm:Inconsistent-Max-Min}).
However, it is incomplete in certifying benign samples, with some false negatives (i.e., all harmful samples of some benign samples will always be detected by HiCert but HiCert doesn't report these benign samples as certified), because the actual situations of the mutants of each harmful sample of a benign sample may not be as bad as the worst-case scenario analyzed by HiCert (e.g., 
{the prediction label of a mutant controlled by attackers may always be unable to be changed as the attackers want}) if HiCert cannot certify it.
This problem is shared by all existing certified defenders (both for recovery and for detection) and
we are also unaware of any certified detection defender (including  \cite{li2022vip,patchcensor,han2021scalecert,zhou2024crosscert,mccoyd2020minority,xiang2021patchguard++}) 
that is complete in certification unless the defender is a trivial one (warning all samples).

\thispagestyle{empty}
\section{Details of the Experimental setup}\label{app:setup}
\subsection{Environment}
We train the base models and generate mutants with their predicted label and confidence on GPU clusters with NVIDIA V100 GPUs. Data analysis is done on an Ubuntu 20.04 machine with Intel Xeon 6136 CPUs and NVIDIA 2080Ti GPUs.

\subsection{Datasets}
We adopt ImageNet \cite{deng2009imagenet} (widely used to evaluate patch robustness certification \cite{xiang2021patchguard,xiang2022patchcleanser, patchcensor,zhou2024crosscert,salman2022certified,li2022vip,han2021scalecert,xiang2021patchguard++}), CIFAR100 \cite{krizhevsky2009learning}, and 
GTSRB \cite{Stallkamp-IJCNN-2011} as our datasets.

 These three datasets cover various applications, complexity, scale, and number of classes.
ImageNet contains 1.3 million training images and 50,000 validation images for 1,000 classes. CIFAR100 contains 50,000 training images and 10,000 test images for 100 classes. GTSRB contains 39,209 training images and 12,630 test images. 

We download ImageNet from \url{image-net.org}, use its entire training set for fine-tuning, and regard its validation set as the test set for evaluation. We download CIFAR100 and GTSRB from torchvision and use their whole training sets for fine-tuning and their test sets for evaluation. All images are resized to $224 \times 224$ in our experiments.


\subsection{Baselines}
\label{sec:baselines}
Unlike previous work focusing on a single model architecture in evaluation (ViP \cite{li2022vip} on MAE, PatchCensor \cite{patchcensor} on ViT, and PatchGuard++ \cite{xiang2021patchguard++} on BagNet), our aim is to achieve technological versatility in evaluation. Therefore, we adopt all Masked Autoencoders \cite{he2022masked} (vit-mae-base with 112M parameters, denoted as \textbf{MAE}), Vision Transformer \cite{dosovitskiy2021an} (vit-b16-224 with 86.6M parameters, denoted as \textbf{ViT}) and ResNet \cite{he2016deep} (resnet-50 with 25.5M parameters, denoted as \textbf{RN}), as the architectures of the base models of defenders. We also use a model-agnostic pixel-level masking strategy following PatchCleanser \cite{xiang2022patchcleanser}
(PatchCleanser \cite{xiang2022patchcleanser} is a certified recovery defender, which is a follow-up work of PatchGuard++ by the same first author)
and CrossCert \cite{zhou2024crosscert}
instead of creating model-specific masking (channel masking for ViT/MAE \cite{li2022vip, patchcensor}, feature masking for BagNet \cite{xiang2021patchguard++}). We follow the principle in PatchCleanser \cite{xiang2022patchcleanser} to generate a covering mask set for each patch size.

We adopt the architectures and pre-trained weights from \url{https://github.com/
facebookresearch/mae} for MAE, \url{https://huggingface.co/timm/vit_base_patch16_224.augreg2_in21k_ft_in1k} for ViT, and \url{https://huggingface.co/timm/resnetv2_50x1_bit.goog_distilled_in1k} for RN.
We fine-tune MAE for each dataset by the original script from \url{https://github.com/
facebookresearch/mae}.
For fine-tuning ViT and RN, we use SGD with a momentum of 0.9, set the batch size to 64, and the number of epochs to 10, reducing the learning rate by a factor of 10 after every 5 epochs. 
Table~\ref{tab:clean_acc} shows the clean accuracy of different base models for the datasets. Since MAE is the state-of-the-art (SOTA) base model \cite{li2022vip}, we use MAE as the default (main) base model in reporting the results of our evaluation.

We compare top-performing certified detection defenders implemented in our infrastructure to HiCert (\textbf{HC}): $\textbf{\textit{D}}_\textit{\text{OMA}}$ and \textbf{PG++} \cite{xiang2021patchguard++}. 
Recall that $D_\text{OMA}$ shares the common $D_\text{OMA}$ checking strategy with ViP \cite{li2022vip} and PatchCensor \cite{patchcensor} but aims to detect $f(x')\neq y_0$ rather than $f(x')\neq f(x)$.
With the same base model and the same masking strategy in our infrastructure,
\textbf{ViP} \cite{li2022vip} and PatchCensor (\textbf{PC}) \cite{patchcensor} must share the same certified accuracy and clean accuracy with $D_\text{OMA}$ since each sample $x$ counted by these two metrics satisfies $f(x)= y_0$, and then the condition OMA$(x, f(x)) \land f(x)= y_0$ for ViP and PC is equivalent to the condition OMA$(x, y_0)$ for $D_\text{OMA}$ in certification functions.
Their certified ratios $r_{cert}$ are the same as their certified accuracy $acc_{cert}$ (which is also shared with 
 $D_\text{OMA}$) because they cannot provide any warning guarantee for those incorrectly predicted benign samples in the situation where $f(x')\neq y_0 \land f(x')=f(x)$. 
The lower section of Table~\ref{tab:main_eva_results} summarizes these results.

We further compare HiCert with more state-of-the-art certified detection defenders, and mark them with the symbol $\star$: ScaleCert (\textbf{SC$_\star$}) \cite{han2021scalecert}, PatchGaurd++ (\textbf{PG++$_\star$}) \cite{xiang2021patchguard++}, Adapted Minority Reports (\textbf{MR+$_\star$}) \cite{patchcensor}, PatchCensor (\textbf{PC$_\star$}) \cite{patchcensor}, ViP (\textbf{ViP$_\star$}) \cite{li2022vip}, and CrossCert (\textbf{CC$_\star$}) \cite{zhou2024crosscert} based on the results reported in the literature. 
Their results are summarized in the upper section of Table~\ref{tab:main_eva_results}.

\subsection{Metrics}
Our evaluation will use certified accuracy, certified ratio, and certified ratio for inconsistent samples as the main metrics.

Suppose $x$ is a benign sample with the true label $y_0$ in a test dataset $\mathbb{S}$ that only contains benign samples.
Previous works use two key metrics, \textbf{clean accuracy}, to evaluate the inherent classification capability of the base model, and \textbf{certified accuracy}, to evaluate the certification ability of a defender on correctly predicted samples, which are defined as 
$acc_{\textit{clean}}=\frac{\mid\{{x}\in\mathbb{S}\mid f({x})=y_0\}\mid}{\mid\mathbb{S}\mid}$
and
$acc_{\textit{cert}}=\frac{\mid\{{x}\in\mathbb{S}\mid f({x})=y_0\land v({x})=\textit{True}\}\mid}{\mid\mathbb{S}\mid}$ \cite{zhou2024crosscert},
despite some work \cite{mccoyd2020minority,xiang2021patchguard++} excluding all benign samples that were warned by the defender concerned as elements in the set in the numerator, and some others (including the present paper) \cite{patchcensor, li2022vip, zhou2024crosscert} including them. 
However, $acc_{\textit{cert}}$ discounts the certification ability of a defender on incorrectly predicted samples and cannot reflect the ability to certify inconsistent samples.
So, we also measure the \textbf{certified ratio} $r_{cert}=\frac{\mid\{{x}\in\mathbb{S}\mid v({x})=\textit{True}\}\mid}{\mid\mathbb{S}\mid}$, which counts all certified samples in $\mathbb{S}$, regardless of correct or incorrect predictions, and the
\textbf{certified ratio for inconsistent samples}
$r_{\textit{cert}_{\textit{inc}}}=\frac{\mid\{{x}\in\mathbb{S}\mid v({x})=\textit{True}\land \text{OMA}({x},y_0)=\textit{False}\}\mid}{\mid\{{x}\in\mathbb{S}\mid \text{OMA}({x},y_0)=\textit{False}\}\mid}$, which counts the proportion of inconsistent samples that are certified.

Table~\ref{tab:case} shows all eight combinations of three conditions on a benign sample: whether the sample is correctly predicted, whether it is warned, and whether it is certified, where a check symbol $\checkmark$ represents the corresponding condition is evaluated as true.
We measure all these combinations on $\mathbb{S}$
to facilitate our detailed analysis case by case.
%

Fig.~\ref{fig:application} has two outgoing paths after certified detection. 
For the silent path, we measure
the \textbf{silent accuracy}
$
acc_{\neg w}=\frac{\mid\{{x}\in\mathbb{S}\mid w({x})=\textit{False}\land f({x})=y_0\}\mid}{\mid\{{x}\in\mathbb{S}\mid w({x})=\textit{False}\}\mid}$, the accuracy on the set of benign samples without warnings triggered,
and for the alert path, we measure
 \textbf{false alert ratio} $r_{\textit{fa}}=\frac{\mid\{{x}\in\mathbb{S}\mid w({x})=\textit{True}\land f({x})=y_0\}\mid}{\mid\{{x}\in\mathbb{S}\mid f({x})=y_0\}\mid}$  \cite{yatsura2023certified}, the fraction of correctly predicted samples for which a defender returns a warning alert, where
having a higher value in $r_{\textit{fa}}$ may make the system waste more additional cost on these correctly predicted samples.
Additionally, we measure the \textbf{false silent ratio} $r_{\textit{fs}}$, 
the fraction of incorrectly predicted samples for which we do not return an alert:
$r_{\textit{fs}}=\frac{\mid\{{x}\in\mathbb{S}\mid w({x})=\textit{False}\land f({x})\neq y_0\}\mid}{\mid\{{x}\in\mathbb{S}\mid f({x})\neq y_0\}\mid}$.
A higher $r_{\textit{fs}}$ value signifies an increased number of incorrectly predicted samples, posing a greater threat to downstream operations.
Note that all these metrics only measure the warning aspect of benign samples for readers to gain a deeper understanding of the warning ability of defenders.
The number of warnings on benign samples cannot represent the warning ability of certified defenders on the whole input domain, which also includes non-benign samples that cannot be exhausted.
However, the application scenario in Fig.~\ref{fig:application} naturally requires a high proportion of samples that are correct and silent,
identifying harmful (incorrectly predicted) samples and minimizing false warnings.
We use them as secondary metrics to supplement the primary ones ($acc_{cert}$, $r_{cert}$ and $r_{suc}$).
To answer RQ2, our experiment will generate actual samples to attack the defender.
We compare the \textbf{defense success ratio} $r_{suc}=\frac{|\{{x}\in\mathbb{S}_{sub}\mid \forall {x}'\in \mathbb{A}^{act}_{\mathbb{P}}({x}), f({x}')\neq y_0\implies w({x}')=\textit{True}\}|}{|\{\mathbb{S}_{sub}\}|}$ between defenders, where $\mathbb{S}_{sub}$ is a subset of $\mathbb{S}$ used by an actual attacker tool as seed input, $\mathbb{A}^{act}_{\mathbb{P}}({x})$ is a subset of $\mathbb{A}_{\mathbb{P}}({x})$ generated by the actual attacker tool. This metric
measures the proportion of benign samples for which all harmful samples generated by an attacker tool are detected by the defender. 
(If not all harmful samples generated by an attacker tool on a benign sample are detected by the defender, the attack is called a \emph{success attack} on the defender.)
Unlike $acc_{cert}$ and $r_{cert}$ for theoretical defense ability, $r_{suc}$ shows empirical defense ability against real adversarial patch attacks.

Except for $r_{\textit{fa}}$ and $r_{\textit{fs}}$, 
higher values for all other metrics indicate better quality.

\thispagestyle{empty}
\subsection{Experimental Setting}\label{sec:experimental_setting}
{In this section, we describe the procedure of the experiments.}


For RQ1, we follow the common practice in the evaluation of certified detection defenders to perform it on the benign samples \cite{xiang2021patchguard++,li2022vip,patchcensor,han2021scalecert,zhou2024crosscert}, with the previously adopted patch size to compared $acc_{clean}$ and $acc_{cert}$: 32 pixels (2\%) in ImageNet \cite{xiang2021patchguard++,li2022vip,patchcensor,han2021scalecert,zhou2024crosscert}, 35 pixels (2.4\%) in CIFAR100 \cite{zhou2024crosscert}, and 32 pixels (2\%) in GTSRB \cite{patchcensor}.
We vary $\tau$ from 0.5 to 0.9 (previous work \cite{xiang2021patchguard++} chooses a similar range).
The results of MAE are shown in Table.~\ref{tab:clean_acc} and more results of other base models are shown in Fig.~\ref{fig:vary_patch_size}.
We also perform a detailed analysis on benign samples of ImageNet with patch size 32 pixels with MAE, to also check $r_{cert}$ and $r_{cert_{inc}}$ for the certification ability, and check $acc_{\neg w}, r_{fa},$ and $f_{fs}$ for warning ability on benign samples as secondary metrics.

For RQ2, we perform an actual adversarial patch attack adopted from \cite{levine2020randomized}, which is gradient-based (using the base model $f$ as the surrogate model to attain the gradient following \cite{levine2020randomized}) and has no knowledge of defenders (HC/$D_\text{OMA}$/PG++) for the fairness.

Specifically, we select the first 500 benign samples from each shuffled dataset for the attack and set 80 random starts, 150 iterations per random start, and a step size of 0.05.
We set patch sizes to 32, 64, and 96 pixels. 
For each patch size, we evaluate the warning function after each iteration for each defender with each $\tau$ from 0.5 to 0.9. 
If any harmful sample of a selected sample passes through undetected, the defender with that $\tau$ value is marked as having failed for the selected sample.
Due to the scale of the experiment, we limit the evaluation of defenders to the most representative base model (MAE).
We note that the defense success ratios are calculated based on the actual outcomes of the warning functions of the defenders.
Since $D_\text{OMA}$, $\text{PC}$, and $\text{ViP}$ share the same warning function, their defense success ratios are the same.

For RQ3, we follow \cite{patchcensor,li2022vip}
to vary the patch size from 16 to 112 pixels, step by 16 pixels. $\tau$ is set to 0.8 in both PG++ and HC, and the results of PC and ViP are the same as those of $D_\text{OMA}$.

\thispagestyle{empty}
\section{Other results in RQ1}\label{app:other_result}
We also summarize the results on ImageNet with ViT and RN for the same patch size (2\%, 32 pixels). 
 The values of $acc_{cert}$ and $r_{cert}$ are almost identical
($D_\text{OMA}$ can only certify  \emph{nine} incorrectly predicted samples based on RN but the number is too small to be discernible by comparing $r_{cert}$ with $acc_{cert}$, and cannot certify \emph{any} such samples in the other three combinations of defenders and base models) for PG++ ($\tau=0.8$) with 45.5 (ViT) and 57.8 (RN) and
for $D_\text{OMA}$, ViP, and PC with 64.9 (ViT) and 55.9 (RN). 
These three defenders are zeros in $r_{cert_{inc}}$ for both ViT and RN.
For HC ($\tau=0.8$), the values of $acc_{cert}$ are 70.1 (ViT) and 73.6 (RN), the values of $r_{cert}$ are 72.7 (ViT) and 74.7 (RN), and the values of $r_{cert_{inc}}$ are 16.8 (ViT) and 9.8 (RN).
Their trends and comparisons are similar to our reported MAE results.
Also, $r_{cert}$ of HiCert is always higher than $D_\text{OMA}$ and PG++ for all combinations of $\tau\in[0.5,0.9]$, the three datasets, and
the three base models.

\begin{table}[]
\caption{Results on ImageNet samples by different modes of patch in total 1\% patch area }\label{tab:different_mode}\centering
\resizebox{\linewidth}{!}
{
\begin{tabular}{|c|ccc|ccc|}
\hline
{Config of Patch} & \multicolumn{3}{c|}{Certification}                                  & \multicolumn{3}{c|}{Secondary   Metrics}                    \\ \cline{2-7} 
                         Area total in 
                         1\% & \multicolumn{1}{c|}{$acc_{cert}$} & \multicolumn{1}{c|}{$r_{cert}$} & {$r_{cert_{inc}}$} & \multicolumn{1}{c|}{$acc_{\neg w}$} & \multicolumn{1}{c|}{$r_{fa}$}  & $r_{fs}$ \\ \hline
one square                    & \multicolumn{1}{c|}{82.0}    & \multicolumn{1}{c|}{94.1} & 69.0     & \multicolumn{1}{c|}{97.5} & \multicolumn{1}{c|}{47.1} & 6.4 \\ \hline
one rectangle                & \multicolumn{1}{c|}{81.1}    & \multicolumn{1}{c|}{92.2} & 62.6     & \multicolumn{1}{c|}{98.2} & \multicolumn{1}{c|}{55.6} & 3.8 \\ \hline
two squares                & \multicolumn{1}{c|}{80.2}    & \multicolumn{1}{c|}{90.1} & 56.0     & \multicolumn{1}{c|}{98.3} & \multicolumn{1}{c|}{61.2} & 3.2 \\ \hline
\end{tabular}
}
\end{table}

\begin{table}[]\caption{Number of mask (mutants) vs Runtime per sample in HiCert}\label{tab:number_of_mask}\centering
\begin{tabular}{cccccc||c}
\hline
\multicolumn{1}{|c|}{Number of mask (mutants)}          & \multicolumn{1}{c|}{$6^2$}  & \multicolumn{1}{c|}{$5^2$}    & \multicolumn{1}{c|}{$4^2$}   & \multicolumn{1}{c|}{$3^2$}  & \multicolumn{1}{c}{$2^2$}  & \multicolumn{1}{||c|}{0}    \\ \hline
\multicolumn{1}{|c|}{runtime per   sample (ms)} & \multicolumn{1}{c|}{155} & \multicolumn{1}{c|}{106} & \multicolumn{1}{c|}{66} & \multicolumn{1}{c|}{39} & \multicolumn{1}{c}{20} & \multicolumn{1}{||c|}{4} \\ \hline
\end{tabular}
\end{table}

\begin{figure}[] 
\centering
\includegraphics[width=\linewidth]{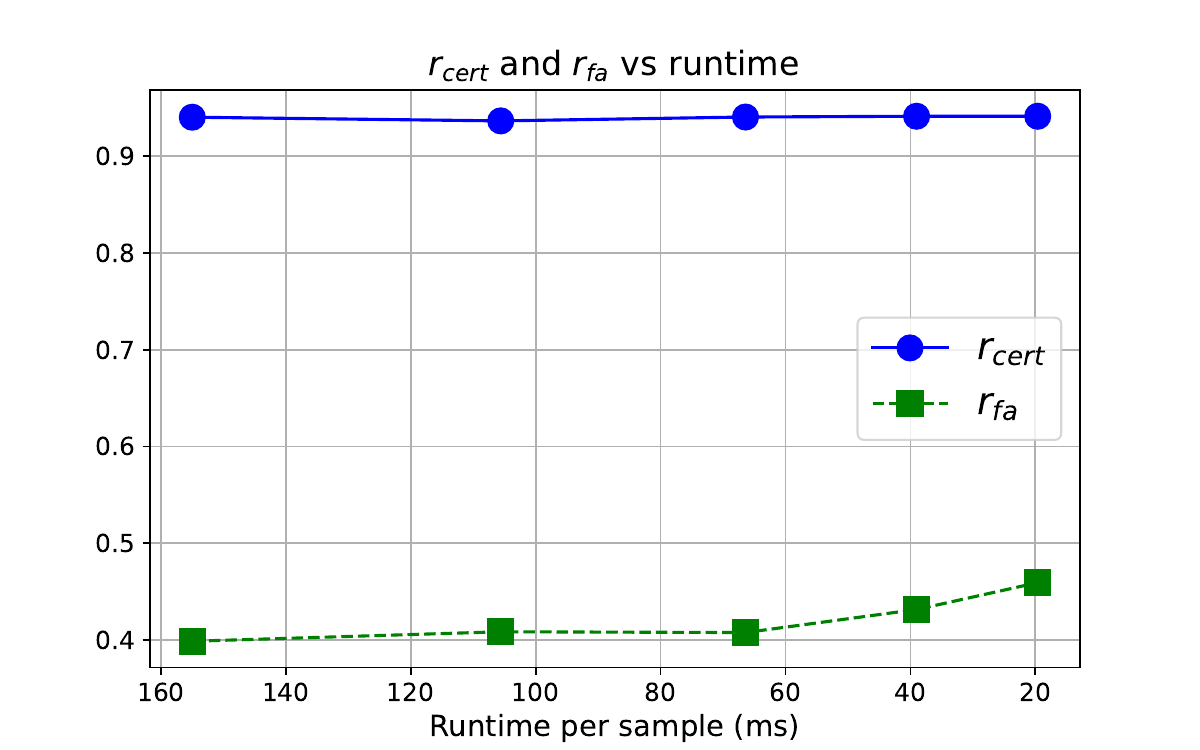}

\caption{Trade-off between $r_{cert}$/$r_{fa}$ and runtime by varying the size of the covering mask set.}\label{fig:runtime}
\end{figure}

\begin{figure}[] 
\centering
\includegraphics[width=\linewidth]{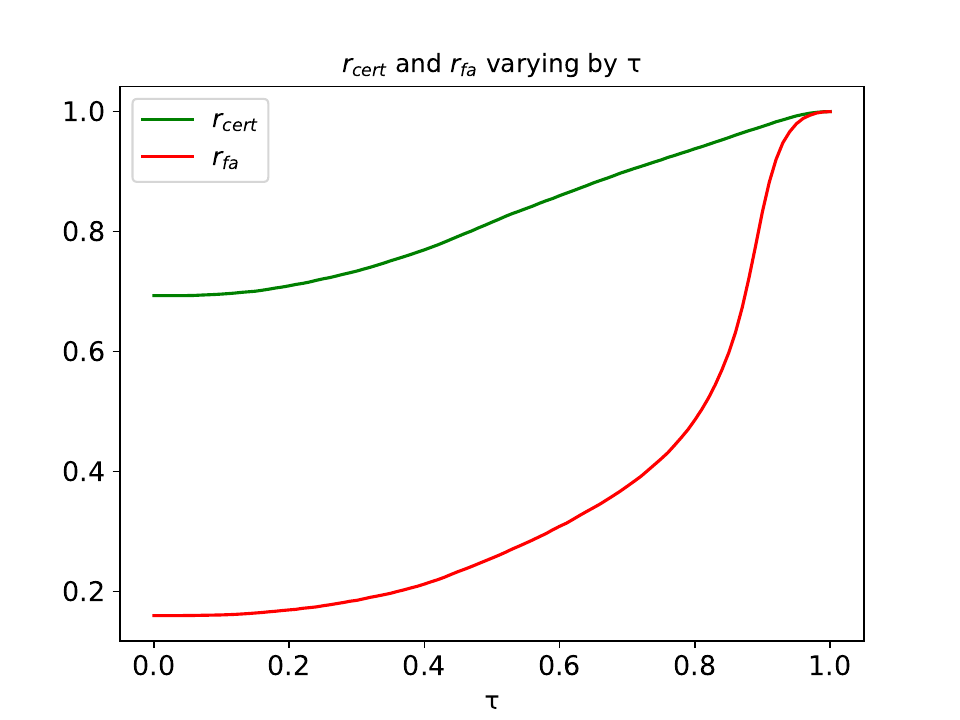}
\caption{Trade-off between $r_{cert}$ and $r_{fa}$ by varying $\tau$,
}\label{fig:visual_t}
\end{figure}

 \begin{figure*}[] 
\centering
\includegraphics[width=0.32\linewidth]{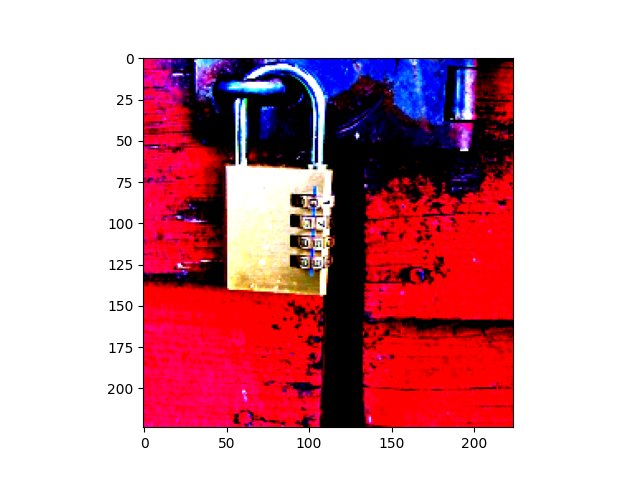}
\includegraphics[width=0.32\linewidth]{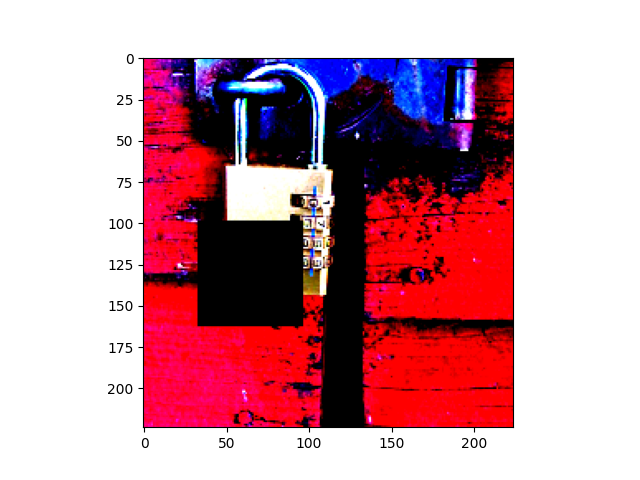}
\includegraphics[width=0.32\linewidth]{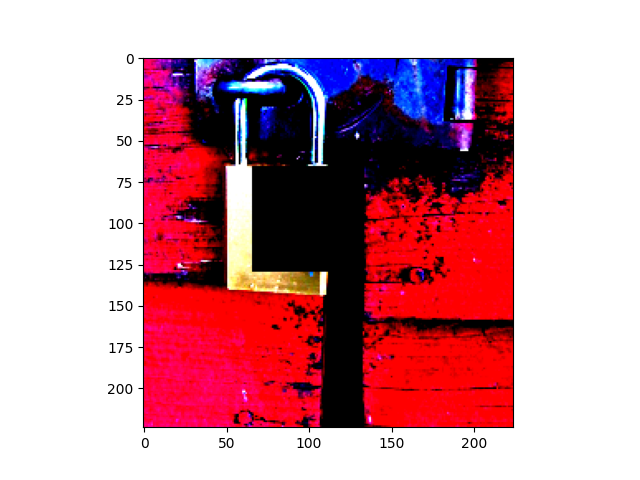}

\includegraphics[width=0.32\linewidth]{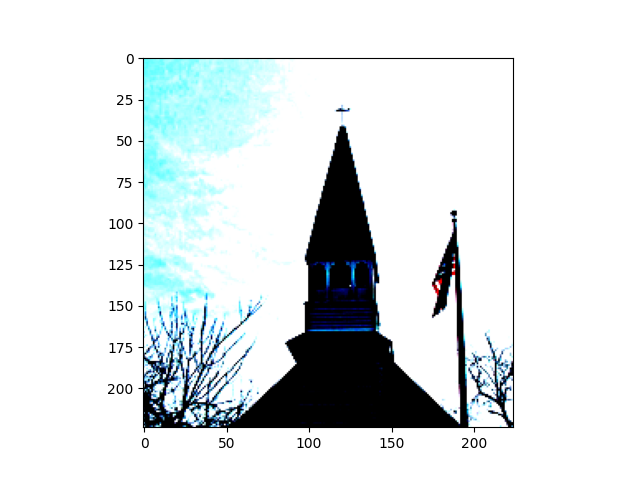}
\includegraphics[width=0.32\linewidth]{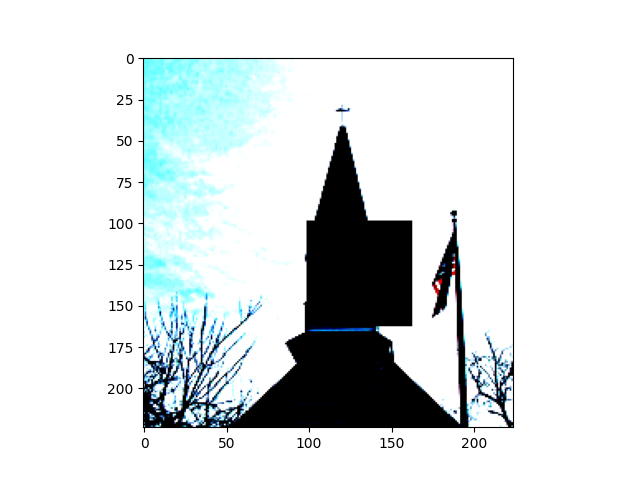}
\includegraphics[width=0.32\linewidth]{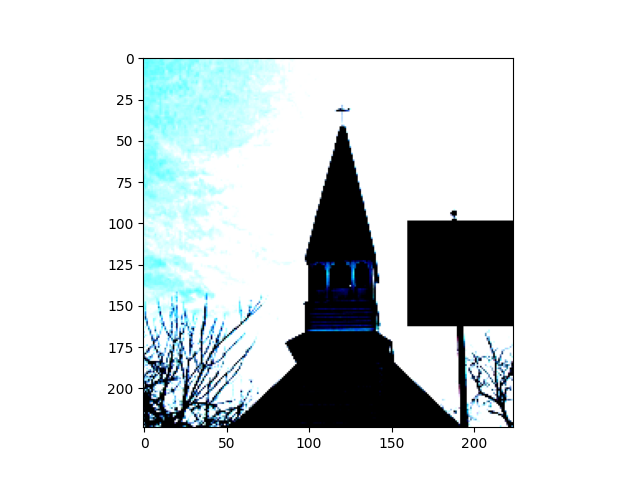}
\caption{
Examples of hard samples that HiCert finds hard to certify in a high $\tau$.
}\label{fig:hard_sample}

\end{figure*}

\thispagestyle{empty}
\section{A Case Study on Shapes/numbers of Patches In HiCert}\label{sec:shape/number}
The covering mask set can be adjusted to handle multiple/rectangular patches based on the analysis in Section 5.1 of \cite{xiang2022patchcleanser}.
Multiple patches can use multiple masks on one mutant for covering, and a patch in an arbitrary rectangle can be covered by a general set of rectangle covering masks.
We demonstrate the performance of HiCert in handling two square patches and one rectangle patch with 5000 random ImageNet samples, with other experimental settings the same as those in RQ1 for ImageNet with MAE. 
We adopt 1\% as the total patch area in this case study since PatchCleanser \cite{xiang2022patchcleanser} formally proves the correctness of the common rectangular covering mask set (which we adopt) for all possible rectangle shapes that consist of 1\% image pixels.
The results are in Table~\ref{tab:different_mode}.
From Table~\ref{tab:different_mode}, we observe a slight decrease in $acc_{cert}$ and $r_{cert}$ for the rectangle and two-square modes, with reductions less than 2\% and 4\%, respectively.
The drop of $r_{cert_{inc}}$ is respectively by 6.4\% and 13.0\% for the rectangle and two-square configurations. 
For secondary metrics, $acc_{\neg w}$ is steady within 1\% for both configurations. 
$r_{fa}$ increase by 8.5\% and 14.1\%, and $r_{fs}$ decrease by 2.6\% and 3.2\%, respectively for configurations of one rectangle and two squares.
Overall, the effect of multiple patches (i.e., two squares) is larger than a patch in a different shape (i.e., an arbitrary rectangle), 
however, 
HiCert can largely preserve certification performance, 
at the cost of a modest increase in false alerts.

\section{A Case study for trade-off between performance and time cost}\label{sec:time}
We conduct a case study on the trade-off between the performance of HiCert (in terms of $r_{cert}$ and $f_{ra}$) and time cost by 5000 random ImageNet samples with one patch in patch size 2\%.
We adopt the method for varying the number of masks in the range [$6^2, 5^2, 4^2, 3^2, 2^2$] in a covering mask set in Section 3.4 of \cite{xiang2022patchcleanser}, where a larger mask area for a single mask results in fewer masks being included in the covering mask set.
%
Table~\ref{tab:number_of_mask} illustrates the relationship between per-sample runtime and the number of masks/mutants in the covering mask set. Notably, reducing the number of mutants from 36 ($=6^2$) to 4 ($=2^2$) shortens the runtime by approximately 87.4\%, with an additional 15.4 ms relative to the runtime of processing the original input alone, without any mutant generation or inference.
The trade-off between $r_{cert}/f_{fa}$ is shown in Fig.~\ref{fig:runtime}.
We can observe that the $r_{cert}$ of HiCert is almost insensitive to the decrease of runtime, which aligns with our experiment shown in Fig.~\ref{fig:vary_patch_size} that a large mask would not largely affect the certification performance
when using MAE as the base model for ImageNet.
On the other hand, 
$r_{fa}$ also remains stable as the runtime decreases from 155 ms to 66 ms, and increases by about 5\% when the runtime is further reduced to 20 ms.
\thispagestyle{empty}

\section{A Case Study of visualization of the trade-off between $r_{cert}$ and $r_{fa}$ as $\tau$ varies}\label{sec:visual}
In Fig.~\ref{fig:visual_t}, we visualize the trade-off between $r_{cert}$ and $r_{fa}$ by varying $\tau$ in [0,1] with each step 0.01 for HiCert, where we adopt the same settings as those for RQ1 on ImageNet with MAE on 2\% patch size. 
We can observe when $\tau$ increase, both $r_{cert}$ and $r_{fa}$ increase.
Both $r_{{cert}}$ and $r_{{fa}}$ are relatively insensitive when $\tau \in [0, 0.4]$ and increase steadily as $\tau$ approaches 0.8. Notably, $r_{{fa}}$ rises sharply once $\tau$ exceeds 0.8.
Users of HiCert may choose a larger $\tau$ (e.g., $\tau$ = 0.8) to protect more benign samples in safety-critical applications or a smaller $\tau$ to reduce false alerts.

\section{Qualitative analysis of hard samples failed to be certified}\label{sec:hard sample}
Upon manual inspection, we find that these hard samples fall into two main categories:
(1) inputs containing two (or more) items from different classes, where masking one item causes the mutant to be predicted as the other class; and
(2) inputs containing a single item, where the mask changes its semantics, leading to misclassifications of the masked mutants.

Fig.~\ref{fig:hard_sample} presents two representative examples, one from each of the two hard sample groups.
The upper three inputs are, respectively, the image of the combination lock and its two mutants, from left to right.
The original input with the label ``combination lock'' can be correctly predicted by the classifier.
When the mask of the mutant covers the position that does not cover the combination dial (e.g., the mutant shown in the middle), the classifier still predicts the mutant as a combination lock.
However, when the combination dial is masked, shown in the mutant on the right, the semantics of ``combination lock are lost and changed into a ``padlock'' without the combination dial in this image, and the classifier inevitably predicts this mutant as ``padlock'' with high confidence (0.95), failing to be certified by HiCert.
To handle this kind of hard samples, a promising future direction would be to make use of the content under the mask. Note that all the existing masking strategies in masking-based detection, to our knowledge, including the one used in HiCert, unavoidably make the mask larger than the patch to decrease the computation cost, which means there is still some original content under the mask even under attacks. 
Leveraging this information makes it possible to address cases where the patch actually fails to alter the semantics, yet the mask does.

The lower three inputs, from left to right, depict the original image containing both a church and a flagpole (labeled as ``flagpole"), followed by two of its mutants. Although the presence of the church introduces noise for the ``flagpole" label, the classifier still correctly predicts the original input as ``flagpole''. 
When the church is masked out in the middle mutant, the classifier continues to predict ``flagpole". However, when the flag is masked in the final mutant, the classifier instead predicts ``church'' with high confidence (0.96), causing HiCert to fail in certifying the original input.
This category of hard samples highlights the need for future research on certification methods adapted to multi-label classification. 
While prior work has addressed certified robustness against various types of attacks, to the best of our knowledge, no existing studies have specifically focused on certified detection against patch attacks. 
We believe this direction holds significant promise, as real-world inputs may comprise a mixture of single-class and multi-class content.

\renewcommand{\thefootnote}{}
\renewcommand{\thefootnote}{\arabic{footnote}} 
\thispagestyle{empty}

\end{document}